\newcommand{\acli}[1]{\emph{\acl{#1}}}		% for italicized acro
\newcommand{\acdef}[1]{\emph{\acl{#1}} \textup{(\acs{#1})}\acused{#1}}		% for acro def
\colorlet{MyRed}{Crimson!75!Black}
\colorlet{MyGreen}{DarkGreen!80!Black}
\colorlet{MyBlue}{MediumBlue}
\colorlet{PrimalColor}{DodgerBlue}
\colorlet{DualColor}{Crimson}
\colorlet{ReducedColor}{Gold}
\newcommand{\para}[1]{\medskip\paragraph{\bfseries #1}}
\pgfplotsset{compat=1.6}
\newcommand{\citef}[2][]{\citeauthor{#2} \cite[#1]{#2}}
\crefname{assumption}{Assumption}{Assumptions}
\theoremstyle{plain}
\newtheorem{theorem}{Theorem}		% for theorems
\newtheorem{corollary}{Corollary}		% for corollaries
\newtheorem{lemma}{Lemma}		% for lemmas
\newtheorem{proposition}{Proposition}		% for propositions
\newtheorem*{corollary*}{Corollary}		% for corollaries (unnumbered)
\theoremstyle{definition}
\newtheorem{definition}{Definition}		% for definitions
\newtheorem{example}{Example}		% for examples
\newtheorem*{definition*}{Definition}		% for definitions (unnumbered)
\newtheorem*{assumption*}{Assumptions}		% for assumptions (unnumbered)
\newtheorem*{example*}{Example}		% for examples (unnumbered)
\theoremstyle{remark}
\newtheorem*{remark*}{Remark}		% for remarks (unnumbered)
\newcounter{proofpart}
\numberwithin{remark}{section}		% for remark numbering
\numberwithin{example}{section}		% for example numbering
\newcommand{\debug}[1]{#1}		% for removing macro coloring
\newcommand{\revise}[1]{#1}		% for revision markup
\newcommand{\newmacro}[2]{\newcommand{#1}{\debug{#2}}}		% for shorthand definitions
\newcommand{\newop}[2]{\DeclareMathOperator{#1}{\debug{#2}}}		% for shorthand definitions
\DeclarePairedDelimiter{\braces}{\{}{\}}		% for braces
\DeclarePairedDelimiter{\bracks}{[}{]}		% for brackets
\DeclarePairedDelimiter{\abs}{\lvert}{\rvert}		% for absolute value
\DeclarePairedDelimiterX{\inner}[2]{\langle}{\rangle}{#1,#2}		% for scalar product
\DeclarePairedDelimiter{\norm}{\lVert}{\rVert}		% for norm
\DeclarePairedDelimiterXPP{\dnorm}[1]{}{\lVert}{\rVert}{_{\ast}}{#1}		% for dual norm
\DeclarePairedDelimiterX{\braket}[2]{\langle}{\rangle}{#1,#2}		% for brakets
\DeclarePairedDelimiterX{\setdef}[2]{\{}{\}}{#1:#2}		% for set builder notation
\DeclarePairedDelimiterXPP{\exclude}[1]{\mathopen{}\setminus}{\{}{\}}{}{#1}
\newcommand{\alt}[1]{#1'}		% for alternates
\newcommand{\R}{\mathbb{R}}		% for reals
\DeclareMathOperator*{\argmax}{arg\,max}		% for argmax
\DeclareMathOperator*{\argmin}{arg\,min}		% for argmin
\DeclareMathOperator*{\intersect}{\bigcap}		% for intersections
\DeclareMathOperator{\bd}{bd}		% for boundary
\DeclareMathOperator{\bigoh}{\mathcal{O}}		% for Landau O
\DeclareMathOperator{\dist}{dist}		% for distance
\DeclareMathOperator{\Hess}{Hess}		% for Hessian
\DeclareMathOperator{\im}{im}		% for image
\DeclareMathOperator{\one}{\mathds{1}}		% for indicator
\DeclareMathOperator{\rank}{rank}		% for rank
\DeclareMathOperator{\relint}{ri}		% for relative interior
\DeclareMathOperator{\supp}{supp}		% for support
\DeclareMathOperator{\vol}{vol}		% for volume
\newmacro{\coef}{\lambda}		% for coefficient
\newmacro{\dd}{\:d}		% for integrators
\newcommand{\subs}{\leftarrow}      % for substitution
\newcommand{\eps}{\varepsilon}		% for better epsilon
\newcommand{\pd}{\partial}		% for derivatives
\newcommand{\insum}{\sum\nolimits}		% for compact sums
\newmacro{\pexp}{p}		% for first exponent
\newmacro{\qexp}{q}		% for second exponent
\newmacro{\rexp}{r}		% for third exponent
\newcommand{\cf}{cf.\xspace}		% for consistency
\newcommand{\eg}{e.g.,\xspace}		% for consistency
\newcommand{\ie}{i.e.,\xspace}		% for consistency
\newcommand{\vs}{vs.\xspace}		% for consistency
\newcommand{\textpar}[1]{\textup(#1\textup)}		% for upshape parentheses
\newcommand{\txs}{\textstyle}		% for forcing inline style
\newcommand{\from}{\colon}		% for function definition
\newcommand{\defeq}{\coloneqq}		% for direct definition
\newmacro{\set}{\mathcal{S}}		% for generic set
\newmacro{\points}{\mathcal{X}}		% for point set
\newmacro{\intpoints}{\points^{\circ}}		%for point set interior
\newmacro{\point}{x}		% for generic point
\newmacro{\pointalt}{\alt\point}		% for alternate point
\newmacro{\zpoints}{\mathcal{Z}}		% for second point set (duals, etc.)
\newmacro{\zpoint}{z}		% for second generic point
\newmacro{\zpointalt}{\alt\zpoint}		% for second alternate variable
\newmacro{\dpoints}{\mathcal{Y}}		% for second point set (duals, etc.)
\newmacro{\dpoint}{y}		% for second generic point
\newmacro{\dpointalt}{\alt\dpoint}		% for second alternate variable
\newmacro{\base}{p}		% for reference point
\newmacro{\basealt}{q}		% for alternate reference point
\newmacro{\test}{\base}		% for test point 
\newmacro{\open}{\mathcal{U}}		% for open sets
\newmacro{\closed}{\mathcal{C}}		% for closed sets
\newmacro{\cpt}{\mathcal{K}}		% for compact sets
\newmacro{\nhd}{U}		% for neighborhoods
\newmacro{\nhdalt}{\alt\nhd}		% for neighborhoods
\newmacro{\region}{\mathcal{R}}
\newmacro{\run}{t}		% for time index
\newmacro{\runalt}{s}		% for alternate time index
\newmacro{\runs}{\mathcal{T}}		% for interval
\newmacro{\start}{0}		% for start index
\newmacro{\horizon}{T}		% for horizon
\newmacro{\vecspace}{\R^{\vdim}}		% for generic vector space
\newmacro{\vdim}{n}		% for dimension
\newmacro{\vvec}{x}		% for generic vector
\newmacro{\bvec}{e}		% for basis vectors
\newmacro{\unitvec}{u}		% for unit vectors
\newmacro{\subspace}{\mathcal{W}}		% for subspace
\newmacro{\wvec}{w}		% for generic subspace vector
\newmacro{\tanhull}{\mathcal{Z}}		% for tangent hull
\newmacro{\tanvec}{z}		% for tangent vectors
\newcommand{\dual}[1]{#1^{\ast}}		% for dual variables
\newmacro{\dspace}{\dual\vecspace}		% for dual space
\newmacro{\dvec}{v}		% for dual vector
\newmacro{\dbvec}{\eps}		% for dual basis vectors
\newmacro{\ones}{\mathbf{1}}		% for vector of ones
\newmacro{\mat}{M}		% for generic matrix
\newmacro{\eye}{I}		% for identity matrix
\newop{\tspace}{T}		% for tangent space
\newop{\tcone}{TC}		% for tangent cone
\newop{\dcone}{\dual\tcone}		% for dual cone
\newop{\ncone}{NC}		% for normal cone
\newop{\pcone}{PC}		% for polar cone
\newmacro{\cvx}{\mathcal{C}}		% for generic convex set
\newmacro{\subd}{\partial}		% for subdifferential
\newmacro{\hmat}{H}		% for Hessian matrix
\newop{\Opt}{Opt}		% for value of problem
\newop{\Sol}{Sol}		% for solution of problem
\newmacro{\obj}{f}		% for objective function
\newmacro{\objalt}{g}		% for alternative objective function
\newmacro{\sobj}{F}		% for stochastic objective
\newmacro{\param}{\theta}		% for parameter
\newmacro{\params}{\Theta}		% for parameter space
\newmacro{\gvec}{g}		% for gradient vector
\newmacro{\vecfield}{V}		% for vector field
\newmacro{\gbound}{G}		% for gradient bound
\newmacro{\vbound}{V}		% for field bound
\newcommand{\sol}[1][\point]{#1^{\ast}}		% for solutions (x by default)
\newcommand{\sols}{\sol[\points]}		% for set of solutions
\newmacro{\strong}{\ell}		% for strong convexity modulus
\newmacro{\smooth}{L}		% for strong smoothness modulus
\newmacro{\lips}{G}		% for Lipschitz modulus
\newmacro{\minmax}{\Phi}		% for minmax objective
\newmacro{\minvar}{\point_{1}}		% for minimization variable
\newmacro{\minvaralt}{\alt\minvar}		% for alternate minvar
\newmacro{\minvars}{\points_{1}}		% for minvar space
\newmacro{\maxvar}{\point_{2}}		% for maximization variable
\newmacro{\maxvars}{\points_{2}}		% for maxvar space
\newmacro{\maxvaralt}{\alt\maxvar}		% for alternate maxvar
\newop{\NE}{NE}		% for Nash equilibrium
\newop{\CE}{CE}		% for CE set
\newop{\CCE}{CCE}		% for Hannan set
\newop{\brep}{br}		% for best responses
\newop{\reg}{Reg}		% for regret
\newop{\preg}{\overline{Reg}}		% for pseudo-regret
\newop{\val}{val}		% for value of game
\newcommand{\strat}{\point}		% for mixed strategy
\newcommand{\strats}{\points}		% for set of mixed strategies
\newcommand{\intstrats}{\relint(\strats)}		
\newcommand{\intstratsPr}{\relint(\strats')}	% for set of interior strategies
\newmacro{\play}{i}		% for main player index
\newmacro{\playalt}{j}		% for alternate player index
\newmacro{\nPlayers}{N}		% for number of players
\newmacro{\players}{\mathcal{\nPlayers}}		% for set of players
\newmacro{\pure}{\alpha}		% for main strategy index
\newmacro{\purealt}{\beta}		% for alternate strategy index
\newmacro{\purebench}{\hat\pure}		% for benchmark strategy index
\newmacro{\nPures}{n}		% for number of strategies
\newmacro{\pures}{\mathcal{A}}		% for set of strategies
\newmacro{\puresbench}{\hat\pures}		% for set of strategies
\newmacro{\supported}{\pures^{\ast}}		% for set of strategies
\newcommand{\eq}{\sol}		% for Nash equilibria
\newcommand{\eqs}{\sols}		% for Nash equilibria
\newcommand{\peq}{\sol[\pure]}		% for pure Nash equilibria
\newmacro{\cost}{c}		% for cost function
\newmacro{\loss}{\ell}		% for loss function
\newmacro{\pay}{u}		% for payoff function
\newmacro{\payv}{v}		% for payoff vector
\newmacro{\pot}{\obj}		% for potential function
\newmacro{\game}{\mathcal{G}}		% for game
\newmacro{\gamefull}{\game(\players,\pures,\pay)}		% for full game
\newmacro{\fingame}{\Gamma}		% for finite game
\newmacro{\fingamefull}{\Gamma(\players,\pures,\pay)}		% for full finite game
\newmacro{\mixgame}{\simplex(\fingame)}		% for mixed extension
\newop{\Eucl}{\Pi}		% for Euclidean projection
\newop{\logit}{\Lambda}		% for logit map
\newmacro{\hreg}{h}		% for regularizer
\newmacro{\breg}{D}		% for Bregman divergence
\newmacro{\pmap}{P}		% for prox-mapping
\newmacro{\mirror}{Q}		% for mirror map
\newmacro{\fench}{F}		% for Fenchel coupling
\newmacro{\hstr}{K}		% for strong convexity constant
\newmacro{\depth}{H}		% for regularizer depth
\newmacro{\zone}{\mathbb{D}}		% for Bregman zone
\newmacro{\subpoints}{\points^{\circ}}		% for Bregman interior
\newmacro{\score}{\dpoint}		% for payoff score
\newmacro{\scores}{\mathcal{Y}}		% for payoff space
\DeclareMathOperator{\ex}{\mathbb{E}}		% for expectations
\DeclareMathOperator{\prob}{\mathbb{P}}		% for probability
\DeclareMathOperator{\simplex}{\Delta}		% for simplices
\newmacro{\sample}{\omega}		% for samples
\newmacro{\samples}{\Omega}		% for sample space
\newmacro{\filter}{\mathcal{F}}		% for filtrations
\newmacro{\probspace}{(\samples,\filter,\prob)}		% for probability space
\newmacro{\event}{E}       % for event
\newmacro{\eventalt}{H}       % for alternate event
\newmacro{\mean}{\mu}		% for mean of distribution
\newmacro{\sdev}{\sigma}		% for mean of distribution
\newmacro{\variance}{\sdev^{2}}		% for mean of distribution
\newmacro{\dkl}{D_{\mathrm{KL}}}		% for Kullback Leibler
\DeclarePairedDelimiterXPP{\exof}[1]{\ex}{[}{]}{}{%		% for conditional expectations
 #1}
\DeclarePairedDelimiterXPP{\probof}[1]{\prob}{(}{)}{}{%		% for conditional probabilities
 #1}
\DeclarePairedDelimiterXPP{\oneof}[1]{\one}{\{}{\}}{}{%		% for conditional expectations
 #1}
\newcommand{\orbit}[2][]{\point_{#1}(#2)}		% for orbit
\newcommand{\dorbit}[2][]{\dpoint_{#1}(#2)}		% for dual orbit
\newcommand{\dotdorbit}[2][]{\dot\dpoint_{#1}(#2)}		% for dual orbit derivative
\newmacro{\flowmap}{\Phi}		% for (semi)flows
\newcommand{\flow}[2]{\flowmap_{#1}(#2)}		% for (semi)flows
\newmacro{\graph}{\mathcal{G}}
\newmacro{\vertices}{\mathcal{V}}
\newmacro{\edges}{\mathcal{E}}
\newmacro{\gmat}{g}		% for metric tensor
\newmacro{\gdist}{\dist_{\gmat}}
\newmacro{\ball}{\mathbb{B}}		% for balls
\newmacro{\sphere}{\mathbb{S}}		% for spheres
\begin{document}

%**********************************************************************
%***    FRONT MATTER AND METADATA
%**********************************************************************

%----------------------------------------------------------------------
%%% TITLE & AUTHORS
%----------------------------------------------------------------------
\newcommand{\longtitle}{\uppercase{No-regret learning and mixed Nash equilibria:\\They do not mix}}		% for long title
\newcommand{\runtitle}{\uppercase{No-Regret Learning and Mixed Nash Equilibria: They do not mix}}		% for runtitle

\title[\runtitle]{\longtitle}		% for title data

\author[Flokas]{Lampros Flokas$^{\ast}$}
\author[Vlatakis-Gkaragkounis]{Emmanouil V. Vlatakis-Gkaragkounis$^{\ast}$}
\author[Lianeas]{Thanasis Lianeas$^{\dag}$}
\author[Mertikopoulos]{Panayotis Mertikopoulos$^{\ast\ast,\sharp}$}		% for author name
\author[Piliouras]{Georgios Piliouras$\ddag$}

\address{$^{\ast}$Department of Computer Science, Columbia University, New York, NY10025.}
\address{$^{\dag}$School of Electrical and Computer Engineering, National Technical University of Athens, Athens, Greece.}
\address{$^{\ast\ast}$Univ. Grenoble Alpes, CNRS, Inria, LIG, 38000, Grenoble, France.}		% for author affiliation
\address{$^{\sharp}$Criteo AI Lab.}		% for author affiliation
\address{$\ddag$Singapore University of Technology and Design, Singapore.}

\email{lamflokas@cs.columbia.edu}
\email{emvlatakis@cs.columbia.edu}
\email{lianeas@corelab.ntua.gr}
\email{panayotis.mertikopoulos@imag.fr}		% for author email
\email{georgios@sutd.edu.sg}
%----------------------------------------------------------------------
%%% KEYWORDS
%----------------------------------------------------------------------
\subjclass[2020]{Primary 91A26, 37N40; Secondary 91A68, 68Q32, 68T05.}
\keywords{Regret; follow the regularized leader; game theory; stability of equilibria.}

%----------------------------------------------------------------------
%%% ACKNOWLEDGMENTS
%----------------------------------------------------------------------
\thanks{Acknowledgments of financial support are given in p.~\pageref{ref:thanks}.}

%----------------------------------------------------------------------
%%% ACRONYMS
%----------------------------------------------------------------------
\newacro{LHS}{left-hand side}
\newacro{RHS}{right-hand side}
\newacro{iid}[i.i.d.]{independent and identically distributed}
\newacro{lsc}[l.s.c.]{lower semi-continuous}

\newacro{KKT}{Karush\textendash Kuhn\textendash Tucker}
\newacro{ERD}{Euclidean regularization dynamics}
\newacro{FTRL}{follow the regularized leader}
\newacro{OGD}{online gradient descent}
\newacro{OMD}{online mirror descent}
\newacro{PD}{projection dynamics}
\newacro{MWU}{multiplicative weights update}
\newacro{EW}{exponential weights}
\newacro{RD}{replicator dynamics}
\newacro{NE}{Nash equilibrium}
\newacroplural{NE}[NE]{Nash equilibria}
\newacro{VI}{variational inequality}
\newacroplural{VI}[VIs]{variational inequalities}

%----------------------------------------------------------------------
%%% ABSTRACT
%----------------------------------------------------------------------
\begin{abstract}
%----------------------------------------------------------------------
%%% ABSTRACT
%----------------------------------------------------------------------
% !TEX root = ./Main.tex
%
%

Understanding the behavior of no-regret dynamics in general $\nPlayers$-player games is a fundamental question in online learning and game theory.
A folk result in the field states that, in finite games, the empirical frequency of play under no-regret learning converges to the game's set of coarse correlated equilibria.
By contrast, our understanding of how the day-to-day behavior of the dynamics correlates to the game's \emph{Nash} equilibria is much more limited, and only \emph{partial} results are known for \emph{certain} classes of games (such as zero-sum or congestion games).
In this paper, we study the dynamics of \acdef{FTRL}, arguably the most well-studied class of no-regret dynamics, and we establish a sweeping negative result showing that \emph{the notion of mixed \acl{NE} is antithetical to no-regret learning}.
Specifically, we show that any \acl{NE} which is not \emph{strict} (in that every player has a unique best response) cannot be stable and attracting under the dynamics of \ac{FTRL}.
This result has significant implications for predicting the outcome of a learning process as it shows unequivocally that only strict (and hence, \emph{pure}) \aclp{NE} can emerge as stable limit points thereof.
\end{abstract}

%**********************************************************************
%***    BODY TEXT
%**********************************************************************
\allowdisplaybreaks		% for breaking long displays
\acresetall		% for resetting acros
\maketitle

%----------------------------------------------------------------------
%%% INTRODUCTION
%----------------------------------------------------------------------
\section{Introduction}
\label{sec:introduction}
%----------------------------------------------------------------------
%%% INTRODUCTION
%----------------------------------------------------------------------
% !TEX root = ./Main.tex

%Regret minimization is a standard concept both in online learning and optimization theory~\cite{Cesa06}.
Regret minimization is one of the most fundamental requirements for online learning and decision-making in the presence of uncertainty and unpredictability \citep{CBL06}.
Defined as the difference between the cumulative performance of an adaptive policy and that of the best fixed action in hindsight, the regret of an agent provides a concise and meaningful benchmark for quantifying the ability of an online algorithm to adapt to an otherwise unknown and unpredictable environment.

%One of the most seminal results of online (linear/convex) optimization is that sublinear regret (also referred to as \textit{no-regret}) algorithms exist~\cite{hannan1957approximation,blackwell1954controlled}.
Arguably, the most widely studied class of no-regret algorithms is the general algorithmic scheme known as \acdef{FTRL} \citep{SSS06,SS11}.
This umbrella learning framework includes as special cases
the \ac{MWU} \citep{Vov90,LW94,ACBFS95,AHK12} and \ac{OGD} algorithms \citep{Zin03},
both of which achieve a min-max optimal $\bigoh(\horizon^{1/2})$ regret guarantee.
For obvious reasons, the ability of \ac{FTRL} to adapt optimally to an unpredictable environment makes them ideal for applying them in multi-agent environments \textendash\ \ie games.
In this case, if all agents adhere to a no-regret learning process based on \ac{FTRL} (or one of its variants), as the sequence of play becomes more predictable, stronger regret guarantees are achievable, possibly down to constant regret, see \eg
\cite{RS13-COLT,FLST16,SyrgkanisALS15,KM17,MPP18,bailey2019fast,bailey2019finite,MZ19} and references therein.
As such, several crucial questions arise:
\renewenvironment{quote}{%
\list{}{%
	\leftmargin0.8cm   % this is the adjusting screw
	\rightmargin\leftmargin
}
\item\relax
}
{\endlist}
\begin{quote}
\centering
What are the \emph{game-theoretic implications} of the no-regret guarantees of \ac{FTRL}?\\
\emph{Do the dynamics of \ac{FTRL} converge to an equilibrium of the underlying game?}
\end{quote}

A folk answer to this question is that ``\emph{no-regret learning converges to equilibrium in all games}'' \citep{NRTV07}, suggesting in this way that no-regret dynamics inherently gravitate towards game-theoretically meaningful states.
However,
at this level of abstraction,
both the \emph{type of convergence} as well as the specific \emph{notion of equilibrium}
that go in this statement
are not as strong as one would have hoped for.
Formally,
the only precise conclusion that can be drawn is as follows:
\emph{under a no-regret learning procedure, the empirical frequency of play converges to the game's set of coarse correlated equilibria} \cite{HMC00,Han57}.

This leads to an important disconnect with standard game-theoretic solution concepts on several grounds.
First, even in $2$-player games, coarse correlated equilibria may be exclusively supported on \emph{strictly} dominated strategies \citep{VZ13}, so they fail even the most basic requirements of rationalizability \citep{DF90,FT91}.
Second, the archetypal game-theoretic solution concept is that of \acdef{NE}, and convergence to a \acl{NE} is a much more tenuous affair:
%which as Nash famously showed in his seminal paper~\cite{Nas50} always exists as long as agents are allowed to use randomized/mixed strategies.
since no-regret dynamics are, by construction, uncoupled (in the sense that a player's update rule does not \emph{explicitly} depend on the payoffs of other players), the impossibility result of \citef{HMC03} precludes the convergence of no-regret learning to \acl{NE} in \emph{all} games.
This is consistent with the numerous negative complexity results for finding a \acl{NE} \cite{DGP06,rubinstein2018inapproximability}:
%\PM{G, I saw what you did here \textendash\ so I turned it up to 11 \smiley}
an incremental method like \ac{FTRL} simply cannot have enough power to overcome PPAD completeness and converge to \acl{NE} given adversarially chosen initial conditions.

In view of the above, a natural test of whether the dynamics of \ac{FTRL} favor convergence to a \acl{NE} is to see whether they eventually stabilize and converge to it when initialized nearby.
%under which condition a \acl{NE} can be a limit point of an \ac{FTRL} learning process which is robust to different initializations.
In more precise language,
\emph{are \aclp{NE} asymptotically stable in the dynamics of \ac{FTRL}?}
And, perhaps more importantly, \emph{are all \aclp{NE} created equal in this regard?}

%----------------------------------------------------------------------
%%% CONTRIBS
%----------------------------------------------------------------------

\para{Our contributions}

We establish a stark and robust dichotomy between how the dynamics of \ac{FTRL} treat \aclp{NE} in \emph{mixed} (\ie randomized) \vs \emph{pure} strategies.
For the case of mixed \aclp{NE} we establish a sweeping negative result to the effect that
\emph{the notion of mixed \acl{NE} is antithetical to no-regret learning}.
More precisely, we show that
any \acl{NE} which is not \emph{strict} (in the sense that every player has a unique best response)
cannot be stable and attracting under the dynamics of \ac{FTRL}.
Schematically:

\begin{center}
\textbf{Informal Theorem:}
Asymptotically stable point for \ac{FTRL}
	$\implies$
	Pure \acl{NE}

\textbf{Equivalently:}
Mixed \acl{NE}
	$\implies$
	Not asymptotically stable under \ac{FTRL}
\end{center}

The linchpin of our analysis is the following striking property of the \ac{FTRL} dynamics:
when viewed in the space of ``payoffs'' (their natural state space),
\emph{they preserve volume irrespective of the underlying game.}
More precisely, the Lebesgue measure of any open set of initial conditions in the space of payoffs remains invariant as it is carried along the flow of the \ac{FTRL} dynamics (\cf\cref{fig:grid}).
%\PM{Put fig ref here.}
Importantly, this result is \emph{not} true in the problem's ``primal'' space, \ie the space of the player's mixed strategies:
here, sets of initial conditions can expand or contract indefinitely under the standard Euclidean volume form.

This duality between payoffs and strategies is the leitmotif of our approach and has a number of important consequences.
First, exploiting the volume-preservation property of \ac{FTRL}, we show that no interior \acl{NE}
(and, furthermore, no closed set in the interior of the strategy space)
can be asymptotically stable under the dynamics of \ac{FTRL}, as this effectively would necessitate volume contraction in the interior of the space (\cref{thm:unstable-int}).

To move beyond this result and disqualify \emph{all} non-strict \aclp{NE} (not just interior ones) more intricate arguments are required.
In this case, a fundamental distinction arises between classes of dynamics that may attain the boundary of the players' strategy space in finite time versus those that do not.
The first case concerns \ac{FTRL} dynamics with an everywhere-differentiable regularizer, like the Euclidean regularizer that gives rise to \ac{OGD} and the associated projection dynamics.
The second concerns dynamics where the regularizer becomes \emph{steep} at the boundary of the strategy simplex, \eg like the Shannon-Gibbs entropy that gives rise to the \acf{MWU} algorithm and the 
\acl{RD}.
While the interior of the strategy simplex is invariant for the second class of dynamics, this is not the case for the former:
in Euclidean-like cases, the support of the mixed strategy of an agent may change over time.
This leads to an essential dichotomy in the boundary behavior of different classes of \ac{FTRL} dynamics.
Nonetheless, despite the qualitatively distinct long-run behavior of the dynamics, a unified message emerges:
\emph{under the dynamics of \ac{FTRL}, only strict \aclp{NE} survive} (\cref{thm:unstable-mixed}).

Finally, for the case of steep, entropy-like regularizers we prove that not only their asymptotically stable points but much more generally \emph{any} asypmptotically stable set must contain at least one pure strategy profile (\cref{thm:unstable-set}).

%----------------------------------------------------------------------
%%% RELATED
%----------------------------------------------------------------------
\para{Related work}

The regret properties of \ac{FTRL} have given rise to a vast corpus of literature which we cannot hope to review here;
for an appetizer, we refer the reader to \cite{SS11,BCB12} and references therein.
On the other hand, the long-run behavior of \ac{FTRL} in games (even finite ones) is nowhere near as well understood.
A notable exception to this is the case of the replicator dynamics
%viewed here as a by-product of the entropic 
which have been studied extensively due to their origins and connection with evolutionary game theory, \cf \cite{TJ78,Wei95,HS98,San10} for a review.
For the replicator dynamics, a special instance of the volume preservation principle was first discovered by \citef{Aki80} and ultimately gave rise to the so-called ``folk theorem'' of evolutionary game theory:%
\footnote{Interestingly, Akin's result was established under a special \emph{non-Euclidean} volume form on the game's \emph{strategy} space, a fact which made any attempts at generalization particularly elusive.}
in population games, the notions of strict \acl{NE} and asymptotic stability coincide \citep{HS03}.
This instability of mixed \aclp{NE} plays a major role in the theory of population games as it shows that even the weakest form of mixing cannot be stable in an evolutionary sense.
The volume preservation result that we establish here can be seen as a much more general ``learning analogue'' of this biological principle and provides an important link between population dynamics and the theory of online learning in games.

%To the best of our knowledge, the only non-convergence results for the dynamics of \ac{FTRL} appear in the recent papers \cite{CGM15,MPP18}.
Recent work has examined the non-convergence of \ac{FTRL} dynamics in more specialized settings.
 \citef{CGM15} established a version of the folk theorem of evolutionary game theory for a subclass of ``decomposable'', steep \ac{FTRL} dynamics.
By contrast, \citef{MPP18} focused on two-player \emph{zero-sum} games (and networked versions thereof), and showed that almost all trajectories of \ac{FTRL} orbit interior equilibria at a fixed distance without ever converging to equilibrium, generalizing the previous analysis for replicator dynamics by \citef{piliouras2014optimization}.
This is an interior equilibrium avoidance result, but one that \emph{uniquely concerns zero-sum games}. Although the above results apply for continuous-time dynamics, in discrete-time non-convergence results only become stronger.
\citef{bailey2018multiplicative} proved that discrete-time \ac{FTRL} diverges away from the Nash equilibrium in zero-sum games, whereas~\citef{cheung2019vortices} established   %\cite{cheung2019vortices,cheung2020chaos} 
 Lyapunov chaos (volume-expansion, butterfly effects). Understanding the detailed  geometry of non-equilibrating \ac{FTRL} dynamics, e.g., periodicity/chaos, is an interesting direction where volume analysis has found application~\cite{piliouras2017learning,nagarajan2018three,boone2019darwin,nagarajanchaos,bailey2019finite,cheung2020chaos}.
Non-convergence, recurrence results have recently been established for \ac{FTRL} dynamics via volume analysis even outside normal form games, e.g., in non-convex non-concave min-max differential games~\cite{vlatakis2019poincare} and imperfect information zero-sum games~\cite{perolat2020poincar}. Finally, such instability, non-convergence results have inspired new, dynamics-based, solution concepts for games that generalize strict Nash while allowing cyclic, recurrent behavior~\cite{papadimitriou2018nash,omidshafiei2019alpha,rowland2019multiagent,papadimitriou2019game,HMC20}.

In the converse direction, a complementary research thread has shown strict \aclp{NE} are asymptotically stable under several incarnations of the \ac{FTRL} dynamics \cite{CGM15,MS16,MV16,MerSan18,BM17,CHM17-SAGT,MZ19}.
%Thus, when coupled with these results, our paper provides
Our paper establishes the \emph{converse} to this stability result, thus leading to the
the following overarching principle (which covers all generic $\nPlayers$-player games):
\begin{center}
\itshape
Asymptotic stability under \ac{FTRL}
	$\iff$
	Strict \acl{NE}
\end{center}
This result has significant implications for predicting the outcome of a learning process as it shows unequivocally that its pointwise stable outcomes are \emph{precisely} the strict (and hence, \emph{pure}) \aclp{NE} of the underlying game.

%----------------------------------------------------------------------
%%% PRELIMINARIES
%----------------------------------------------------------------------
\section{Preliminaries}
\label{sec:preliminaries}
%----------------------------------------------------------------------
%%% PRELIMS
%----------------------------------------------------------------------
% !TEX root = ./Main.tex

%----------------------------------------------------------------------
%%% NOTATION
%----------------------------------------------------------------------
\para{Notation}
If $\obj$ is a function of a single variable, we will abuse notation slightly and extend it to vector variables $\point\in\R^{\vdim}$ by letting $\obj(\point) \subs (\obj(\point_{1}),\dotsc,\obj(\point_{\vdim}))$.
We will also understand inequalities involving vectors component-wise, \ie $(\point_{1},\dotsc,\point_{\vdim})>0$ means that $\point_{i}>0$ for all $i=1,\dotsc,\vdim$.

%----------------------------------------------------------------------
%%% GAMES
%----------------------------------------------------------------------
\para{The game}
Throughout the sequel, we will focus on finite games.
Formally, a \emph{finite game in normal form} is defined as a tuple $\fingame \equiv \fingamefull$ consisting of
\begin{enumerate*}
[(\itshape i\hspace*{.5pt}\upshape)]
\item
a finite set of \emph{players} $\play \in \players = \{1,\dotsc,\nPlayers\}$;
\item
a finite set of \emph{actions} (or \emph{pure strategies}) $\pures_{\play} = \{\pure_{1},\dotsc,\pure_{\nPures_{\play}}\}$ per player $\play\in\players$;
and
\item
each player's payoff function $\pay_{\play} \from \pures \to \R$, where $\pures \defeq \prod_{\play} \pures_{\play}$ denotes the ensemble of all possible \emph{action profiles} $\pure = (\pure_{1},\dotsc,\pure_{\nPlayers})$.
\end{enumerate*}
In this general context, players can also play \emph{mixed strategies}, \ie probability distributions $\strat_{\play} = (\strat_{\play\pure_{\play}})_{\pure_{\play}\in\pures_{\play}} \in \simplex(\pures_{\play})$ over their pure strategies $\pure_{\play} \in \pures_{\play}$.
%with the understanding that $\strat_{\play\pure_{\play}}$ expresses the probability of player $\play$ choosing action $\pure_{\play}\in\pures_{\play}$.
Collectively, we will write $\strats_{\play} \defeq \simplex(\pures_{\play})$ for the mixed strategy space of player $\play$ and $\strats \defeq \prod_{\play}\strats_{\play}$ for the space of all mixed strategy profiles $\strat = (\strat_{1},\dotsc,\strat_{\nPlayers})$.

Given a mixed profile $\strat\in\strats$,
%the probability associated to the action profile $\pure = (\pure_{1},\dotsc,\pure_{\nPlayers})$ is $\strat_{\pure} \defeq \prod_{\play} \strat_{\play\pure_{\play}}$, so
the corresponding expected payoff of player $\play$ will be
\begin{equation}
\label{eq:pay}
\pay_{\play}(\strat)
%	= \pay_{\play}(\strat_{\play};\strat_{-\play})
%	= \insum_{\pure\in\pures}
%		\strat_{\pure}
%		\pay_{\play}(\pure)
	= \insum_{\pure_{1}\in\pures_{1}} \dotsi \insum_{\pure_{\nPlayers} \in \pures_{\nPlayers}}
		\strat_{1,\pure_{1}} \dotsm \strat_{\nPlayers,\pure_{\nPlayers}}\,
		\pay_{\play}(\pure_{1},\dotsc,\pure_{\nPlayers}).
%	= \sum_{\pure_{\play}\in\pures_{\play}}
%		\strat_{\play\pure_{\play}}
%		\payv_{\play\pure_{\play}}(\strat)
%		\pay_{\play}(\pure_{\play};\strat_{-\play})
\end{equation}
To keep track of the payoffs of each individual action, we will also write
\begin{equation}
\label{eq:payv}
\payv_{\play\pure_{\play}}(\strat)
	\defeq \pay_{\play}(\pure_{\play};\strat_{-\play})
\end{equation}
for the payoff of the pure strategy $\pure_{\play}\in\pures_{\play}$ in the mixed profile $\strat = (\strat_{\play};\strat_{-\play}) \in \strats$.%
\footnote{We are using here the standard game-theoretic shorthand $(\strat_{\play};\strat_{-\play}) \defeq (\strat_{1},\dotsc,\strat_{\play},\dotsc,\strat_{\nPlayers})$ to highlight the strategic choice of a given player $\play\in\players$ versus that of the player's opponents $\players_{-\play} \defeq \players\exclude{\play}$.}
Hence, writing $\payv_{\play}(\strat) \defeq (\payv_{\play\pure_{\play}}(\strat))_{\pure_{\play}\in\pures_{\play}} \in \R^{\pures_{\play}}$ for the \emph{payoff vector} of player $\play$, we get the compact expression
\begin{equation}
\label{eq:pay-payv}
\pay_{\play}(\strat)
	= \braket{\payv_{\play}(\strat)}{\strat_{\play}}
	= \insum_{\pure_{\play}\in\pures_{\play}}
		\strat_{\play\pure_{\play}}
		\payv_{\play\pure_{\play}}(\strat)
\end{equation}
where, in standard notation, $\braket{\payv}{\strat} = \payv^{\top}\strat$ denotes the ordinary pairing between $\payv$ and $\point$.

In terms of solutions, the most widely used concept in game theory is that of a \acdef{NE}, \ie a state $\eq\in\strats$ such that
\begin{equation}
\label{eq:Nash}
\tag{NE}
\pay_{\play}(\eq)
	\geq \pay_{\play}(\strat_{\play};\eq_{-\play})
	\quad
	\text{for all $\strat_{\play}\in\strats_{\play}$ and all $\play\in\players$}.
\end{equation}
Writing $\supp(\eq_{\play}) = \setdef{\pure_{\play}\in\pures_{\play}}{\eq_{\play\pure_{\play}} > 0}$ for the support of $\eq_{\play}$,
%and $\supp(\eq) = \prod_{\play} \supp(\eq_{\play})$ for the support of $\eq$,
\aclp{NE} can be equivalently characterized via the \acl{VI}
\begin{equation}
\label{eq:Nash-var}
\payv_{\play\peq_{\play}}(\eq)
	\geq \payv_{\play\pure_{\play}}(\eq)
	\quad
	\text{for all $\peq_{\play}\in\supp(\eq_{\play})$ and all $\pure_{\play} \in \pures_{\play}$, $\play\in\players$}.
\end{equation}
In turn, this characterization leads to the following taxonomy:
\begin{enumerate}
\item
$\eq$ is called \emph{pure} if $\supp(\eq) = \prod_{\play} \supp(\eq_{\play})$ is a singleton.
\item
If $\eq$ is not pure, we say that it is \emph{mixed};
and if $\supp(\eq) = \pures$, we say that it is \emph{fully mixed}.
\end{enumerate}
By definition, pure \aclp{NE} are themselves pure strategies and correspond to vertices of $\strats$;
at the other end of the spectrum, fully mixed equilibria belong to the relative interior $\intstrats$ of $\strats$, so they are often referred to as \emph{interior} equilibria.

Another key distinction between \aclp{NE} concerns the defining inequality \eqref{eq:Nash}:
if this inequality is strict for all $\strat_{\play}\neq\eq_{\play}$, $\play\in\players$, $\eq$ is called itself \emph{strict}.
Strict \aclp{NE} are pure a fortiori, and they play a key role in game theory because any unilateral deviation incurs a strict loss to the deviating player;
put differently, if $\eq$ is strict, \emph{every player has a unique best response}.
Taking this idea further, $\eq$ is called \emph{quasi-strict} if \eqref{eq:Nash-var} is strict for all $\pure_{\play}\in\pures_{\play} \setminus \supp(\eq_{\play})$, \ie if all best responses of player $\play$ are contained in $\supp(\eq_{\play})$.
By a deep result of \citef{Rit94}, all \aclp{NE} are quasi-strict in almost all games;%
\footnote{Specifically, on a set which is open and dense (and hence of full measure) in the space
%$\R^{\players\times \prod_{\play} \pures_{\play}}$
of all games.}
in view of this, we will tacitly assume in the sequel that all equilibria considered are quasi-strict, a property known as ``genericity'' \citep{FT91,LRS19,CHM17-NIPS}.

\begin{remark*}
We should stress here that quasi-strict equilibria \emph{need not be pure:}
they could be partially or even fully mixed, \eg as in the case of Stag Hunt, Rock-Paper-Scissors, Matching Pennies, the Battle of the Sexes, etc.
We provide a series of illustrative examples in the supplement.
\end{remark*}

%----------------------------------------------------------------------
%%% REGRET
%----------------------------------------------------------------------
\para{Regret}

A key requirement in online learning is the minimization of the players' \emph{regret},
\ie the cumulative payoff difference between a player's mixed strategy at a given time and the player's best possible strategy in hindsight.
In more detail, assuming that play evolves in continuous time $\run\geq0$, the (external) regret of a player $\play\in\players$ relative to a sequence of play $\orbit{\run}\in\strats$ is defined as
\begin{equation}
\reg_{\play}(\horizon)
	= \max_{\test_{\play} \in \strats_{\play}}
		\int_{\start}^{\horizon}
			\bracks{\pay_{\play}(\test_{\play};\orbit[-\play]{\run}) - \pay_{\play}(\orbit{\run})}
		\dd\run,
\end{equation}
and we say that player $\play$ has \emph{no regret} under $\orbit{\run}$ if $\reg_{\play}(\horizon) = o(\horizon)$.

%----------------------------------------------------------------------
%%% FTRL
%----------------------------------------------------------------------
\para{No-regret learning via regularization}
\label{sec:FTRL-intro}

The most widely used method to achieve no-regret is the class of policies known as \acdef{FTRL} \citep{SSS06,SS11}.
Heuristically, at each $\run\geq0$, \ac{FTRL} prescribes a mixed strategy that maximizes the players' cumulative payoff up to time $\run$ minus a regularization penalty which incentivizes exploration.
Formally, this is represented by the dynamics
\begin{align}
\dorbit[\play\pure_{\play}]{\run}
	&= \dorbit[\play\pure_{\play}]{\start}
		+ \int_{\start}^{\run} \payv_{\play\pure_{\play}}(\orbit{\runalt}) \dd\runalt
	\hspace{-4em}
	\tag*{\small\{aggregate payoffs\}}
	\\
\orbit[\play\pure_{\play}]{\run}
	&= \mirror_{\play\pure_{\play}}(\dorbit[\play]{\run})
%	\hspace{-6em}
	\tag*{\small\{choice of strategy\}}
\end{align}
or, in more compact notation:
\begin{equation}
\label{eq:FTRL}
\tag{FTRL}
\dotdorbit{\run}
	= \payv(\mirror(\dorbit{\run})).
\end{equation}
In the above, each $\score_{\play\pure_{\play}}$ plays the role of an auxiliary ``score variable'' which measures the aggregate performance of the pure strategy $\pure_{\play}\in\pures_{\play}$ over time.
These scores are subsequently tranformed to mixed strategies by means of a player-specific \emph{choice map} $\score_{\play} \mapsto \strat_{\play} = \mirror_{\play}(\score_{\play})$ which is defined as
\begin{equation}
\label{eq:choice}
\mirror_{\play}(\score_{\play})
	= \argmax_{\strat_{\play}\in\strats_{\play}} \{\braket{\score_{\play}}{\strat_{\play}} - \hreg_{\play}(\strat_{\play})\}
	\quad
	\text{for all $\score_{\play}\in\scores_{\play} \defeq \R^{\nPures_{\play}}$}.
\end{equation}
In other words, $\mirror_{\play}\from\scores_{\play}\to\strats_{\play}$ essentially acts as a ``soft'' version of the best-response correspondence $\score_{\play} \mapsto \argmax_{\strat_{\play}\in\strats_{\play}} \braket{\score_{\play}}{\strat_{\play}}$, suitably regularized by a convex penalty term $\hreg_{\play}(\strat_{\play})$.
The precise assumptions regarding the \emph{regularizer function} $\hreg_{\play}\from\strats_{\play}\to\R$ will be discussed in detail later;
for now, we provide two prototypical examples of \eqref{eq:FTRL} that will play a major role in the sequel:

\begin{example}
[Entropic regularization and \acl{EW}]
\label{ex:hreg-logit}
One of the most widely used regularizers in online learning is the (negative) Gibbs-Shannon entropy $\hreg_{\play}(\strat_{\play}) = \sum_{\pure_{\play}} \strat_{\play\pure_{\play}} \log\strat_{\play\pure_{\play}}$.
A standard calculation then yields the so-called \emph{logit choice map}, written in vectorized form as
%\begin{equation}
%\label{eq:logit}
$\logit_{\play}(\score_{\play})
	= 
		{\exp(\score_{\play})}/
		{\sum_{\pure_{\play}\in\pures_{\play}} \exp(\score_{\play\pure_{\play}})}$.
%\end{equation}
In turn, this leads to the \acli{EW} dynamics:
\begin{equation}
\label{eq:EW}
\tag{EW}
\begin{aligned}
\dotdorbit[\play]{\run}
	&= \payv_{\play}(\orbit{\run}),
	 \\
\orbit[\play]{\run}
	&= 	\logit_{\play}(\dorbit[\play]{\run}).
%	&= \frac
%		{(\exp(\dorbit_{\play\pure_{1}}(\run)),\dotsc,\exp(\dorbit_{\play\pure_{\nPures_{\play}}}(\run)))}
%		{\exp(\dorbit_{\play\pure_{1}}(\run)) + \dotsm + \exp(\dorbit_{\play\pure_{\nPures_{\play}}}(\run))},
\end{aligned}
\end{equation}
The system \eqref{eq:EW} describes the mean dynamics of the so-called \acdef{MWU} algorithm (or ``Hedge'');
%and is one of the most widely studied algorithms in online learning and game theory;
for an (incomplete) account of its long history, see \cite{Vov90,LW94,KW97,ACBFS95,FS99,CBL06,MM10,AHK12} and references therein.
%\citef{KW97}, \citef{FS99},
%In particular, a pivotal property of \eqref{eq:EW} is that the evolution of the players' mixed strategy profile $\orbit{\run}$ under \eqref{eq:EW} follows the \acli{RD} of evolutionary game theory:\begin{equation}
%\label{eq:RD}
%\tag{RD}
%\dotorbit[\play\pure_{\play}]{\run}
%	= \orbit[\play\pure_{\play}]{\run}
%		\bracks*{\payv_{\play\pure_{\play}}(\orbit{\run}) - \pay_{\play}(\orbit{\run})}.
%\end{equation}
%The dynamics \eqref{eq:RD} and their properties comprise the bedrock of evolutionary game theory, so the equivalence of \eqref{eq:EW} and \eqref{eq:RD} provides a striking link between online learning and population biology.
%For a survey of the vast corpus of literature surrounding \eqref{eq:RD}, we refer the reader to \citef{Wei95}, \citef{HS98}, and \citef{San10}.
\end{example}
\smallskip

\begin{example}
[$L^{2}$ regularization]
\label{ex:hreg-Eucl}
Another popular choice of regularizer is the quadratic penalty $\hreg_{\play}(\strat_{\play}) = (1/2) \norm{\strat_{\play}}^{2}$.
In this case, the associated choice map is the Euclidean projector on the simplex, $\Eucl_{\play}(\score_{\play}) = \argmin_{\strat_{\play}\in\strats_{\play}} \norm{\score_{\play} - \strat_{\play}}$, which gives rise to the \acli{ERD}
\begin{equation}
\label{eq:ERD}
\tag{ERD}
\begin{aligned}
\dotdorbit[\play]{\run}
	&= \payv_{\play}(\orbit{\run}),
	\\
\orbit[\play]{\run}
	&= \Eucl_{\play}(\dorbit[\play]{\run}).
\end{aligned}
\end{equation}
\end{example}
\smallskip

Beyond the two prototypical examples discussed above, the origin of the dynamics \eqref{eq:FTRL} can be traced to \citef{SSS06}, \citef{Nes09}, and, via their link to \ac{OMD}, all the way back to \citef{NY83}. Describing the history and literature surrounding these dynamics would take us too far afield, so we do not attempt it.

%instead, we focus below on the regret guarantees of \eqref{eq:FTRL}:
%\begin{theorem}
%\label{thm:reg-FTRL}
%Suppose that player $\play\in\players$ follows \eqref{eq:FTRL} with a continuous stricly convex regularizer $\hreg_{\play}$.
%Then the player enjoys
%\begin{equation}
%\label{eq:reg-FTRL}
%\reg_{\play}(\horizon)
%	\leq \max\hreg_{\play} - \min\hreg_{\play}
%\end{equation}
%against \emph{any} sequence of play $\orbit[-\play]{\run}$ of the player's opponents.
%\end{theorem}
%
%The constant regret bound \eqref{eq:reg-FTRL} was first obtained by \citef{KM17} in the context of online convex optimization and justifies the characterization ``no-regret learning'' for \eqref{eq:FTRL}.
%\PM{This can also be shortened.}
%We should also stress here that the $\bigoh(1)$ bound of \cref{thm:reg-FTRL} represents a marked improvement over the $\Omega(\sqrt{\horizon})$ worst-case regret for \ac{FTRL} in \emph{discrete} time \citep{ABRT08}.
%Under this light, the continuous-time framework considered here represents a ``best-case'' environment for online learning;
%conversely, any \emph{negative} results obtained herein should be construed as \emph{universal} for online learning.

%----------------------------------------------------------------------
%%% FTRL
%----------------------------------------------------------------------
\section{The fundamental dichotomy of \ac{FTRL} dynamics}
\label{sec:FTRL}
%----------------------------------------------------------------------
%%% FTRL
%----------------------------------------------------------------------
% !TEX root = ./Main.tex

To connect the long-run behavior of \eqref{eq:FTRL} to the \aclp{NE} of the underlying game, we must first understand how the players' mixed strategies evolve under \eqref{eq:FTRL}.
Our goal in this section is to provide some background to this question as a precursor to our analysis in \cref{sec:analysis}.
%In this section, we provide some background to this question, which will form the foundation for our analysis in \cref{sec:analysis}.
To lighten notation, we will drop in what follows the player index $\play$, writing for example $\strat_{\pure}$ instead of the more cumbersome $\strat_{\play\pure_{\play}}$;
we will only reinstate the index $\play$ if absolutely necessary to avoid confusion.
%----------------------------------------------------------------------
%%% SCORES VS. STRATEGIES
%----------------------------------------------------------------------

\subsection{Scores \vs strategies}
To begin, we note that \eqref{eq:FTRL} exhibits a unique duality:
on the one hand, the variables of interest are the players' mixed strategies $\orbit{\run}\in\strats$;
on the other, the dynamics \eqref{eq:FTRL} evolve in the space $\scores$ of the players' score variables $\dorbit{\run}$.
Mixed strategies are determined by the corresponding scores via the players' choice maps $\score \mapsto \strat = \mirror(\score)$, but this is not a two-way street:
as we explain below, the map $\mirror\from\scores\to\strats$ is \emph{not invertible}, so obtaining an autonomous dynamical system on the strategy space $\strats$ is a delicate affair.
In the general case,
%of \eqref{eq:FTRL},
invoking standard arguments from convex analysis \citep{Roc70,BC17} we have $\score(t) \in \nabla \hreg(\orbit{\run}) + \pcone(\orbit{\run})$, where
\begin{align}
\label{eq:polar}
\pcone(\strat)
%	&= \setdef
%		{\score\in\scores}
%		{\braket{\score}{\stratalt - \strat} \leq 0 \text{ for all } \stratalt\in\strats}
%	\notag\\
	&= \setdef
		{\score\in\scores}
		{\score_{\pure} \geq \score_{\purealt}
			\text{ for all }
			\pure\in\supp(\strat),
			\purealt \in \pures}
%			\text{ and }
%			\score_{\pure} \leq 0
%			\text{ otherwise}}
\end{align}
denotes the polar cone to $\strats$ at $\strat$.%
\footnote{In particular, for all $\score\in\pcone(\strat)$, we have $\score_{\pure} = \score_{\purealt}$ whenever $\pure,\purealt\in\supp(\strat)$.
The similarity of this condition to the characterization \eqref{eq:Nash-var} of \aclp{NE} is not a coincidence:
$\eq$ is a \acl{NE} of $\fingame$ if and only if $\payv(\eq) \in \pcone(\eq)$ \citep{CHM17-NIPS,MerSan18}.}

In the entropic case of \cref{ex:hreg-logit}, the logit choice map $\mirror = \logit$ only returns \emph{fully mixed} strategies since $\exp(\score)>0$.
In the relative interior $\intstrats$ of $\strats$, we have by \Cref{eq:polar} that $\pcone(\strat)=\setdef{(t,\dotsc,t)}{t\in\R}$.
As a result, $\logit$ is not surjective;
however, up to a multiple of $(1,\dotsc,1)$, it is \emph{injective}.
\begin{figure}[t!]
    \centering
    %----------------------------------------------------------------------
%%% QINV
%----------------------------------------------------------------------
% !TEX root = ../Main.tex

\usetikzlibrary{patterns}
\begin{tikzpicture}[scale=0.75]
\filldraw[color = blue!40, opacity = 0.4] (-2,0)  -- (2,0)  -- ({2+4*cos(120)},{4*sin(120})   -- cycle;
\draw[thick,color = blue, opacity = 0.4] (-2,0)  -- (2,0)  -- ({2+4*cos(120)},{4*sin(120})  -- cycle;
\def\yspaceoffset{-5}
\filldraw[color = blue!40, opacity = 0.6] (0,1.5) circle (10pt);
\draw[color = blue, opacity = 0.6] (0,1.5) circle (10pt);
\filldraw[color = blue] (0,1.5) circle (1pt);
\filldraw[color = blue!40, opacity = 0.6] ({cos(120)+2},{sin(120)}) arc (120:300:10pt);
\draw[color = blue, opacity = 0.6] ({cos(120)+2},{sin(120)}) arc (120:300:10pt);
\filldraw[color = blue] ({0.66*cos(120)+2},{0.66*sin(120)}) circle (1pt);
\filldraw[color = blue!40, opacity = 0.6] ({3.6*cos(120)+2},{3.6*sin(120)}) arc (-60:-120:4mm)--({4*cos(120)+2},{4*sin(120)})--cycle ;
\draw[color = blue] ({3.6*cos(120)+2},{3.6*sin(120)}) arc (-60:-120:4mm) ;
\filldraw[color = blue] ({4*cos(120)+2},{4*sin(120)}) circle (1pt);

\draw (0,0.8) node{$\strats$};

\filldraw[color = red!40, opacity = 0.4] (-2+\yspaceoffset,0)  -- (2+\yspaceoffset,0)  -- ({2+4*cos(120)+\yspaceoffset},{4*sin(120})    -- cycle;
\draw[thick,color = red, opacity = 0.4] (-2+\yspaceoffset,0)  -- (2+\yspaceoffset,0)  -- ({2+4*cos(120)+\yspaceoffset},{4*sin(120})    -- cycle;
\filldraw[color = red!40, opacity = 0.6] (0+\yspaceoffset,1.5) circle (10pt);
\draw[color = red, opacity = 0.6] (0+\yspaceoffset,1.5) circle (10pt);
\filldraw[color = red] (0+\yspaceoffset,1.5) circle (1pt);
\filldraw[color = red!40, opacity = 0.6] ({cos(120)+2+\yspaceoffset},{sin(120)}) arc (120:300:10pt);
\draw[color = red, opacity = 0.6] ({cos(120)+2+\yspaceoffset},{sin(120)}) arc (120:300:10pt);
\filldraw[color = red] ({0.66*cos(120)+2+\yspaceoffset},{0.66*sin(120)}) circle (1pt);
\filldraw[color = red!40, opacity = 0.6] ({3.6*cos(120)+2+\yspaceoffset},{3.6*sin(120)}) arc (-60:-120:4mm)--({4*cos(120)+2+\yspaceoffset},{4*sin(120)})--cycle ;
\draw[color = red] ({3.6*cos(120)+2+\yspaceoffset},{3.6*sin(120)}) arc (-60:-120:4mm) ;
\filldraw[color = red] ({4*cos(120)+2+\yspaceoffset},{4*sin(120)}) circle (1pt);
\draw (0+\yspaceoffset,0.8) node{$\scores$};

\filldraw[color = red!40, opacity = 0.4,pattern=north east lines,pattern color=red] 
({cos(120)+2+\yspaceoffset},{sin(120)}) 
-- ({0.3*cos(120)+2+\yspaceoffset},{0.3*sin(120)})
--({0.3*cos(120)+2+sin(120)+\yspaceoffset},{0.3*sin(120)-cos(120)})
-- ({cos(120)+2+sin(120)+\yspaceoffset},{sin(120)-cos(120)})
-- cycle;
\filldraw[color = red!40, opacity = 0.4] 
({cos(120)+2+\yspaceoffset},{sin(120)}) 
-- ({0.3*cos(120)+2+\yspaceoffset},{0.3*sin(120)})
--({0.3*cos(120)+2+sin(120)+\yspaceoffset},{0.3*sin(120)-cos(120)})
-- ({cos(120)+2+sin(120)+\yspaceoffset},{sin(120)-cos(120)})
-- cycle;

%\filldraw[color = red] ({4*cos(120)+2+1.5*sin(120)+\yspaceoffset},{4*sin(120)-1.5*cos(120)}) circle (1pt);
%\filldraw[color = red] ({4*cos(60)-2-1.5*sin(60)+\yspaceoffset},{4*sin(60)+1.5*cos(60)}) circle (1pt);
%\filldraw[color = red] ({3.6*cos(120)+2+1.5*sin(120)+\yspaceoffset},{3.6*sin(120)-1.5*cos(120)}) circle (1pt);
%\filldraw[color = red] ({3.6*cos(60)-2-1.5*sin(60)+\yspaceoffset},{3.6*sin(60)+1.5*cos(60)}) circle (1pt);
%\filldraw[color = red] ({3.6*cos(120)+2+\yspaceoffset},{3.6*sin(120)}) circle (1pt);
%\filldraw[color = red] ({3.6*cos(60)-2+\yspaceoffset},{3.6*sin(60))}) circle (1pt);
\filldraw[color = red] ({4*cos(120)+2+\yspaceoffset},{4*sin(120)}) circle (1pt);

\filldraw[color = red!40, opacity = 0.4,pattern=north east lines,pattern color=red] 
   ({3.6*cos(120)+2+1.5*sin(120)+\yspaceoffset},{3.6*sin(120)-1.5*cos(120)})
-- ({3.6*cos(120)+2+\yspaceoffset},{3.6*sin(120)})
-- ({4*cos(120)+2+\yspaceoffset},{4*sin(120)})
-- ({4*cos(120)+2+1.5*sin(120)+\yspaceoffset},{4*sin(120)-1.5*cos(120)})
-- cycle;
\filldraw[color = red!40, opacity = 0.4] 
   ({3.6*cos(120)+2+1.5*sin(120)+\yspaceoffset},{3.6*sin(120)-1.5*cos(120)})
-- ({3.6*cos(120)+2+\yspaceoffset},{3.6*sin(120)})
-- ({4*cos(120)+2+\yspaceoffset},{4*sin(120)})
-- ({4*cos(120)+2+1.5*sin(120)+\yspaceoffset},{4*sin(120)-1.5*cos(120)})
-- cycle;

\filldraw[color = red!40, opacity = 0.4,pattern=north east lines,pattern color=red] 
   ({3.6*cos(60)-2-1.5*sin(60)+\yspaceoffset},{3.6*sin(60)+1.5*cos(60)}) 
-- ({3.6*cos(60)-2+\yspaceoffset},{3.6*sin(60))})
-- ({4*cos(120)+2+\yspaceoffset},{4*sin(120)})
--  ({4*cos(60)-2-1.5*sin(60)+\yspaceoffset},{4*sin(60)+1.5*cos(60)})
-- cycle;
\filldraw[color = red!40, opacity = 0.4] 
   ({3.6*cos(60)-2-1.5*sin(60)+\yspaceoffset},{3.6*sin(60)+1.5*cos(60)}) 
-- ({3.6*cos(60)-2+\yspaceoffset},{3.6*sin(60))})
-- ({4*cos(120)+2+\yspaceoffset},{4*sin(120)})
--  ({4*cos(60)-2-1.5*sin(60)+\yspaceoffset},{4*sin(60)+1.5*cos(60)})
-- cycle;
\filldraw[color = red!40, opacity = 0.4] 
    ({4*cos(120)+2+1.5*sin(120)+\yspaceoffset},{4*sin(120)-1.5*cos(120)})
--  ({4*cos(120)+2+\yspaceoffset},{4*sin(120)})
--  ({4*cos(60)-2-1.5*sin(60)+\yspaceoffset},{4*sin(60)+1.5*cos(60)})
-- cycle;
\filldraw[color = red!40, opacity = 0.4,pattern=north east lines,pattern color=red] 
    ({4*cos(120)+2+1.5*sin(120)+\yspaceoffset},{4*sin(120)-1.5*cos(120)})
--  ({4*cos(120)+2+\yspaceoffset},{4*sin(120)})
--  ({4*cos(60)-2-1.5*sin(60)+\yspaceoffset},{4*sin(60)+1.5*cos(60)})
-- cycle;
\draw[color = red] ({0.66*cos(120)+2+\yspaceoffset},{0.66*sin(120)})--({0.66*cos(120)+2+sin(120)+\yspaceoffset},{0.66*sin(120)-cos(120)}) ;

\draw [<-,thick]  (0,2) to [in=30,out=150] (0+\yspaceoffset,2) ;
\draw (\yspaceoffset/2,2) node {$\strat = \Eucl(\score)$};
\end{tikzpicture}
    \caption{The inverse images of neighborhoods of different points in $\strats$ under the Euclidean choice map $\mirror = \Eucl$.}
    \label{fig:polarcone}
\end{figure}
On the other hand, in the Euclidean framework of \cref{ex:hreg-Eucl},
%$\hreg(\strat) = (1/2) \norm{\strat}_{2}^{2}$,
the choice map $\mirror = \Eucl$ can also return non-fully mixed strategies. Both \Cref{eq:polar} and \cref{fig:polarcone} show that on the boundary $\pcone(\strat)$ is strictly larger compared to the interior.
Thus $\Eucl$ is surjective but not injective, even modulo a subspace of $\scores$.

The key obstacle to mapping the dynamics \eqref{eq:FTRL} to $\strats$ is the lack of injectivity of $\mirror$.
In turn, this allows us to make two key observations:
\begin{enumerate*}
[(\itshape i\hspace*{.5pt})]
\item
%as far as the dynamics \eqref{eq:FTRL} are concerned,
there is an important split in behavior between boundary and interior states;
and
\item
this split is linked to whether the underlying choice map is surjective or not.
\end{enumerate*}
We elaborate on this below.

%----------------------------------------------------------------------
%%% PRIMAL DYNAMICS
%----------------------------------------------------------------------
\subsection{The steep/non-steep dichotomy}

The lack of injectivity of $\logit$ on $\intstrats$ is a technical artifact of the sum-to-one constraints of the strategy probabilities:
knowing all but one of the strategy probabilities %\TL{something needed: delet "a given player"?} a given player
we can easily recover the remaining one.
Thus the $\scores$ space, having the same number of coordinates as the $\strats$ space,
also contains redundant information.
With an appropriate projection we can remove this redundancy and restore injectivity in the interior, deriving the dynamics of $\orbit{\run}$ on $\strats$.
Making this argument precise for the entropic case of \cref{ex:hreg-logit},
we obtain the \acli{RD}:
%a well-posed dynamical system on all of $\strats$, namely the \acli{RD}:%
%\footnote{Well-posed means here that, for every inital condition, \eqref{eq:RD} admits a unique solution defined over all $\run\geq0$.}
\begin{equation}
\label{eq:RD}
\tag{RD}
\dot\strat_{\pure}
	= \strat_{\pure} \bracks*{\payv_{\pure}(\strat) - \pay(\strat)}.
\end{equation}

On the other hand, this is not enough for the Euclidean framework of \cref{ex:hreg-Eucl}.
When trajectories approach $\bd(\strats)$, the positivity constraints $\strat_\play \geq 0$ kick in finite time.
Unlike the sum-to-one constraints of the previous case, these cannot be resolved with a dimensionality reduction so we cannot obtain a well-posed dynamical system on $\strats$ as above.
This problem can only be temporarily avoided for time intervals where $\supp(\orbit{\run})$ remains constant.
For these intervals $\orbit{\run}$ can be shown to satisfy the \acli{PD} \citep{MS16}
\begin{equation}
\label{eq:PD}
\tag{PD}
\dot\strat_{\pure}
	= \payv_{\pure}(\strat) - \abs{\supp(\strat)}^{-1} \insum_{\purealt\in\supp(\strat)} \payv_{\purealt}(\strat) \quad
	\text{if $\pure\in\supp(\strat)$}.
\end{equation}
In contrast to the replicator dynamics, different trajectories of \eqref{eq:PD} can merge or split any number of times, and they may transit from one face of $\strats$ to another in finite time \cite{MS16,MerSan18}.

The two cases above are not just conveniently chosen examples, but archetypes of the fundamentally different behaviors that can be observed under \eqref{eq:FTRL} for different regularizers.
As we discuss in the supplement, this polar split is intimately tied to the behavior of the derivatives of $\hreg$ at the boundary of $\strats$.
To formalize this, we say that $\hreg$ is \emph{steep} if $\norm{\nabla\hreg(\strat)} \to \infty$ whenever $\strat\to\bd(\strats)$;
by contrast, if $\sup_{\strat\in\strats} \norm{\nabla\hreg(\strat)} < \infty$, we say that $\hreg$ is \emph{non-steep}.
Thus, in terms of our examples, the negentropy function of \cref{ex:hreg-logit} is the archetype for steep regularizers, while the $L^{2}$ penalty of \cref{ex:hreg-Eucl} is the non-steep one.
The split between steep and non-steep dynamics may then be stated as follows:
\begin{enumerate}
\item
If $\hreg$ is steep, the mixed-strategy trajectories $\orbit{\run} = \mirror(\dorbit{\run})$ carry all the information required to predict the evolution of the system;
in particular, $\orbit{\start}$ fully determines $\orbit{\run}$ for all $\run\geq0$, and $\orbit{\start}$ remains fully mixed for all time.
\item
If $\hreg$ is non-steep, the trajectories $\orbit{\run} = \mirror(\dorbit{\run})$ do not fully capture the state of the system:
$\orbit{\start}$ \emph{does not} determine $\orbit{\run}$ for all $\run\geq0$, and even the times when $\orbit{\run}$ changes support cannot be anticipated by knowing $\orbit{\start}$ alone.
For concision, we defer the precise statement and proof of this dichotomy to the paper's supplement.
\end{enumerate}

%----------------------------------------------------------------------
%%% FTRL
%----------------------------------------------------------------------
\section{Convergence analysis and results}
\label{sec:analysis}
%----------------------------------------------------------------------
%%% ANALYSIS
%----------------------------------------------------------------------
% !TEX root = ./Main.tex

We now turn to the equilibrium convergence properties of \eqref{eq:FTRL}.
The central question that we seek to address here is the following:
\emph{Which \aclp{NE} can be stable and attracting under \eqref{eq:FTRL}?}
\emph{Are all equilibria created equal in that regard?}
%----------------------------------------------------------------------
%%% STABILITY
%----------------------------------------------------------------------
\subsection{Notions of stability}
%To answer this question, we first need to define what ``stable'' and ``attracting'' mean.
At a high level, a point is
\begin{enumerate*}
[(\itshape a\upshape)]
\item
\emph{stable} when every trajectory that starts nearby remains nearby;
and
\item
\emph{attracting} when it attracts all trajectories that start close enough.
\end{enumerate*}
Already, this heuristic shows that defining these notions for \eqref{eq:FTRL} is not straightforward:
the target points are strategy profiles in $\strats$, while the dynamics \eqref{eq:FTRL} evolve in the dual space $\scores$.
When $\hreg$ is \emph{steep},
%based on \cref{sec:FTRL}
we can define an equivalent presentation of \eqref{eq:FTRL} on $\strats$, so this problem can be circumvented by working solely with mixed strategies;
however, when $\hreg$ is \emph{non-steep}, this is no longer possible and we need to navigate carefully between $\strats$ and $\scores$.
In view of this, we have the following definitions:
\begin{itemize}
\item
$\eq\in\strats$ is \emph{stable} if, for every neighborhood $\nhd$ of $\eq$ in $\strats$, there exists a neighborhood $\alt\nhd$ of $\eq$ such that $\orbit{\run} = \mirror(\dorbit{\run}) \in \nhd$ for all $\run\geq0$ whenever $\orbit{\start} = \mirror(\dorbit{\start}) \in \alt\nhd$.
\item
$\eq\in\strats$ is \emph{attracting} if there exists a neighborhood $\nhd$ of $\eq$ in $\strats$ such that $\orbit{\run} = \mirror(\dorbit{\run}) \to \eq$ whenever $\orbit{\start} = \mirror(\dorbit{\start}) \in \nhd$.
\item
$\eq\in\strats$ is \emph{asymptotically stable} if it is both stable and attracting.
\end{itemize}
For obvious reasons, asymptotic stability is the ``gold standard'' for questions pertaining to equilibrium convergence and it will be our litmus test for the appropriateness of an equilibrium $\eq\in\strats$ as an outcome of play.
Specifically, if a \acl{NE} is not asymptotically stable under \eqref{eq:FTRL}, it is not reasonable to expect a no-regret learner to converge to it, meaning in turn that it cannot be justified as an end-state of the players' learning process.
We expound on this below.

%----------------------------------------------------------------------
%%% INCOMPRESSIBILITY
%----------------------------------------------------------------------
\subsection{Volume preservation}

A key observation regarding asymptotic stability is that neighborhoods of initial conditions near an asymptotically stable point should ``contract'' over time, eventually shrinking down to the point in question.
Our first result below provides an apparent contradiction to this principle:
it shows that volume is preserved under \eqref{eq:FTRL},
\emph{irrespective of the underlying game.}

\begin{restatable}
{proposition}{volumeY}
\label{prop:volume}
Let $\region_{\start} \subseteq \scores$ be a set of initial conditions for \eqref{eq:FTRL} and let $\region_{\run} = \setdef{\dorbit{\run}}{\dorbit{\start} \in \region_{\start}}$ denote its evolution under \eqref{eq:FTRL} after time $\run\geq\start$.
Then, $\vol(\region_{\run}) = \vol(\region_{\start})$.
\end{restatable}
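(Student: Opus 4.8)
The plan is to recognize \eqref{eq:FTRL} as an autonomous first-order system $\dotdorbit{\run} = \vecfield(\dorbit{\run})$ on the score space $\scores$, driven by the vector field $\vecfield(\score) = \payv(\mirror(\score))$, and then to invoke Liouville's theorem: the flow $\flow{\run}{\cdot}$ of a $C^{1}$ field preserves Lebesgue volume precisely when its divergence vanishes identically. Concretely, writing $\flow{\run}{\score}$ for the solution issued from $\score$, the Jacobian determinant of the flow obeys $\ddt \det D\flow{\run}{\score} = (\nabla\cdot\vecfield)(\flow{\run}{\score})\,\det D\flow{\run}{\score}$; hence $\nabla\cdot\vecfield \equiv 0$ forces $\det D\flow{\run}{\cdot} \equiv 1$, and the change-of-variables formula then yields $\vol(\region_{\run}) = \vol(\region_{\start})$. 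The crux is therefore to show that the divergence of $\vecfield$ is everywhere zero.

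\textbf{The divergence computation.} The divergence equals the trace of the Jacobian $D\vecfield$, i.e. $\nabla\cdot\vecfield = \sum_{\play\in\players}\sum_{\pure_{\play}\in\pures_{\play}} \partial\payv_{\play\pure_{\play}}(\mirror(\score))/\partial\score_{\play\pure_{\play}}$. The key structural observation is that, by \eqref{eq:payv}, the payoff component $\payv_{\play\pure_{\play}}(\strat) = \pay_{\play}(\pure_{\play};\strat_{-\play})$ depends only on the \emph{opponents'} strategies $\strat_{-\play}$ and is independent of player $\play$'s own strategy $\strat_{\play}$. Since each choice map is separable — $\mirror_{\play}$ is a function of $\score_{\play}$ alone — the coordinate $\score_{\play\pure_{\play}}$ influences $\mirror(\score)$ only through $\strat_{\play} = \mirror_{\play}(\score_{\play})$, a variable on which $\payv_{\play\pure_{\play}}$ does not depend. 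By the chain rule each diagonal term $\partial\payv_{\play\pure_{\play}}(\mirror(\score))/\partial\score_{\play\pure_{\play}}$ therefore vanishes; more strongly, the entire diagonal block $\partial\payv_{\play}/\partial\score_{\play} = (\partial\payv_{\play}/\partial\strat_{\play})(\partial\mirror_{\play}/\partial\score_{\play})$ is zero because $\partial\payv_{\play}/\partial\strat_{\play} = 0$. Summing the identically vanishing diagonal entries gives $\nabla\cdot\vecfield \equiv 0$, which is exactly the uncoupled, game-independent cancellation the statement advertises.

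\textbf{Main obstacle: regularity at the boundary.} Liouville's theorem in the form above requires $\vecfield$ to be $C^{1}$, and here the delicacy depends on the regularizer. For steep $\hreg$ (\eg the entropic case, where $\mirror = \logit$ is real-analytic), $\vecfield$ is smooth on all of $\scores$ and the argument applies verbatim. For non-steep $\hreg$ (\eg the Euclidean projector of \cref{ex:hreg-Eucl}), $\mirror$ is only piecewise $C^{1}$: it is smooth on the open regions of $\scores$ where $\supp(\mirror(\score))$ stays constant, but fails to be differentiable across the lower-dimensional strata where the support changes. The plan is to run the divergence argument on each such open region, where $\vecfield$ is $C^{1}$ and the computation above gives $\nabla\cdot\vecfield = 0$, and then to patch across the exceptional strata — which form a Lebesgue-null set — using continuity of the flow together with absolute continuity of the map $\run\mapsto\vol(\region_{\run})$. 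Verifying that this patching incurs no net change in volume, so that preservation holds globally rather than merely between successive support transitions, is the principal technical hurdle and the one place where the steep/non-steep dichotomy of \cref{sec:FTRL} genuinely intervenes.
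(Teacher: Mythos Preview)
Your approach is essentially identical to the paper's: show that the vector field $\vecfield(\score)=\payv(\mirror(\score))$ has vanishing divergence via the chain rule and the fact that $\payv_{\play}$ does not depend on $\strat_{\play}$, then invoke Liouville's formula. The paper's proof is exactly this two-line computation, with no further discussion.

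The one notable difference is your third paragraph. The paper does not raise the regularity issue at all; it simply applies Liouville's formula as if $\vecfield$ were $C^{1}$ throughout, and the non-steep case is not singled out. Your concern is technically legitimate --- the Euclidean projector is only piecewise smooth --- but the paper treats this as a non-issue, and in fact the statement of Liouville's formula quoted in the appendix is applied without regularity caveats. So you are being more careful than the paper here, not less; the ``patching across strata'' plan you sketch is extra scaffolding the paper does not supply. If you want to match the paper's level of rigor you can drop that paragraph entirely; if you want to be fully rigorous you would need to either invoke a Lipschitz/almost-everywhere version of Liouville (the field is globally Lipschitz since $\mirror$ is, and divergence-free on a set of full measure) or carry out the patching argument explicitly, which the paper does not do.
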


\Cref{prop:volume} (which we prove in the supplement through an application of Liouville's formula) is surprising in its universality as it holds for \emph{all games} and \emph{all instances} of \eqref{eq:FTRL}.
As such, it provides a blanket generalization of the well-known volume-preserving property for the replicator dynamics established by \citet{Aki80},
\revise{as well as subsequent results}
for zero-sum games \citep{MPP18}.

%----------------------------------------------------------------------
%% Figure for steep/non-steep grid begins here

\begin{figure}[t]
\centering
%----------------------------------------------------------------------
%%% GRID
%----------------------------------------------------------------------
% !TEX root = ../Main.tex

\begin{tikzpicture}[scale=.45]
\small
	
	\def\dualoffset{8.5}
	\def\steep{2*\dualoffset}
	\def\radius{.4}
	\foreach \x in {-3,-2,-1,0,1,2,3}
	\foreach \y in {-3,-2,-1,0,1,2,3}
	{
	\draw[fill=black!10] (\x,\y) circle (1pt);
	\draw[fill=black!10] (\x+\dualoffset,\y) circle (1pt);
	}
	\draw (-3,-3) -- (-3,3) -- (3,3) -- (3,-3) -- (-3,-3);
	\draw (-3+\steep,-3) -- (-3+\steep,3) -- (3+\steep,3) -- (3+\steep,-3) -- (-3+\steep,-3);

	\draw[-stealth,thick] (-3.5+\dualoffset,0) -- (3.5+\dualoffset,0);
	\draw[-stealth,thick] (0+\dualoffset,-3.5) -- (0+\dualoffset,3.5);
	
	\foreach \k in {0,1,2}
	{
	\draw[fill=PrimalColor,opacity=0.3] (\k,\k) circle (\radius);
	\draw[fill=DualColor,opacity=0.3] (\k+\dualoffset,\k) circle (\radius);
	}
	\filldraw[fill=PrimalColor,opacity=0.3] 
	(3-\radius,3) arc[start angle=180, end angle=270, radius=\radius]--(3,3-\radius)--(3,3)--(3-\radius,3);
	\draw[-, ultra thick,PrimalColor] (3,3-\radius)--(3,3)--(3-\radius,3);
	\draw[fill=DualColor,opacity=0.3] (3+\dualoffset,3) circle (\radius);

	\draw[fill=PrimalColor,opacity=0.3] (\steep,0) circle (\radius);
	\draw[fill=PrimalColor,opacity=0.3] (\steep + 1.38635, 1.38635) circle (\radius/1.38635);
	\draw[fill=PrimalColor,opacity=0.3] (\steep + 2.28478, 2.28478) circle (\radius/2.28478);
	\draw[fill=PrimalColor,opacity=0.3] (\steep + 2.71544, 2.71544) circle (\radius/2.71544);

	\foreach \k in {0,1,2}
	{
	\draw[-stealth,thick] (0.3+\k,0.3+\k) -- (0.7+\k,0.7+\k);
	\draw[-stealth,thick] (0.3+\k+\dualoffset,0.3+\k) -- (0.7+\k+\dualoffset,0.7+\k);
	}
	
	%\def\eps{0.9}
	%\draw (0+\steep,0) circle (0.01cm);
	%\foreach \x in {0,5,...,1000}
	%{
	% \foreach \y in {0,5,...,1000}
	% {
	% \draw ({\steep + (1/(exp(-\x)+1))-0.5}, { (1/(exp(-\y)+1)) -0.5 }) circle (0.01cm);
	% }
	%}
	\foreach \signx in {-1,1}
	\foreach \signy in {-1,1}
	\foreach \kx in {0,1,2,3}
	\foreach \ky in {0,1,2,3}
	{
	\draw[fill=black!10] ({\steep + 3*\signx*( (exp(\kx)-1)/(exp(\kx)+1) )},{ 3*\signy*(exp(\ky)-1)/(exp(\ky)+1)}) circle (1pt);
	}
	
%	\foreach \k in {0,1,2,3}
%	{
%	\draw[fill=PrimalColor,opacity=0.3] \foreach \t in {0,5,...,360}
%	{
%	-- ({\steep + exp( cos(\t)+\k)/(exp(cos(\t)+\k)+1)-0.5+sqrt(\k)/1.5 }, { exp(sin(\t)+\k)/(exp(sin(\t)+\k)+1) -0.5+sqrt(\k)/1.5 })
%	} -- cycle;
%	}

	%\draw[fill=DualColor,opacity=0.3] (3+\dualoffset,3) circle (\radius);
	%\draw[fill=PrimalColor,opacity=0.3] \foreach \t in {0,5,...,360}
	% {
	%	-- ({\steep + exp( cos(\t)+3)/(exp(cos(\t)+3)+1)-0.5+3/2 }, { exp(sin(\t)+3)/(exp(sin(\t)+3)+1) -0.5 +3/2})
	% } -- cycle;
 
	\draw[-stealth,thick] (0.1+\steep,0.1) -- (1.2+\steep,1.2);
	\draw[-stealth,thick] (1.5+\steep,1.5) -- (2.2+\steep,2.2);
	\draw[-stealth,thick] (2.4+\steep,2.4) -- (2.6+\steep,2.6);
	
	\draw (\dualoffset,3.5) node {\normalsize \eqref{eq:ERD} \;\;\; \eqref{eq:EW}\hphantom{x}};

	\draw (-3,3.5) node {\normalsize\eqref{eq:PD}};
	\draw [stealth-] (1,3.3) to [out=20,in=160] (-2.5+\dualoffset,3.3) ;
	\fill (0,3.75) node {$\strat = \Eucl(\dpoint)$};
	\fill (0,-4) node {Strategy space ($\strats$)};

	\draw (-1+2.5*\dualoffset,3.5) node {\normalsize\eqref{eq:RD}};
	\draw [-stealth] (2+\dualoffset,3.3) to [out=20,in=160] (-1+\dualoffset+\dualoffset,3.3) ;
	\fill (\steep,3.75) node {$\strat = \logit(\dpoint)$};
	\fill (\steep,-4) node {Strategy space ($\strats$)};

%	\fill (3.5,-3.5) node {$x=Q^{-1}(y)$};
%	\fill (3.5+\dualoffset,-3.5) node {$x=Q^{-1}(y)$};
	\fill (\dualoffset,-4) node {Score space ($\dpoints$)};
\end{tikzpicture}
\caption{The duality between scores and strategies under \eqref{eq:FTRL}:
the dynamics are volume-preserving in $\scores$, but a volume of initial conditions could either collapse in finite time (in the Euclidean case, left), or shrink asymptotically (in the logit case, right).
This is due to the vastly different geometric properties of each system.}
\label{fig:grid}
\end{figure}
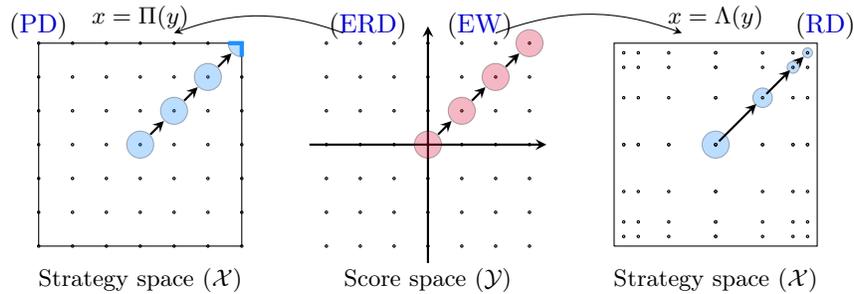

%% Figure for steep/non-steep grid ends here
%----------------------------------------------------------------------

%----------------------------------------------------------------------
%%% INTERIOR INSTABILITY
%----------------------------------------------------------------------
\subsection{Instability of fully mixed equilibria}

As stated above, the volume-preserving property of \eqref{eq:FTRL} would seem to suggest that no strategy can be asymptotically stable.
However, this is a figment of the duality between strategy and score variables:
a mixed strategy orbit $\orbit{\run} = \mirror(\dorbit{\run})$ could converge in $\strats$, even though the corresponding dual orbit $\dorbit{\run}$ diverges in $\scores$ (for an illustration, see \cref{fig:grid} above).
This again brings into sharp contrast the behavior of \eqref{eq:FTRL} at the boundary of $\strats$ versus its behavior at the interior.
Our first instability result below shows that the volume-preserving property of \eqref{eq:FTRL} rules out the stability of \emph{any} fully mixed equilibrium, in \emph{any} game:

\begin{restatable}{theorem}{stableInt}
\label{thm:unstable-int}
A fully mixed \acl{NE} cannot be asymptotically stable under \eqref{eq:FTRL}.
\end{restatable}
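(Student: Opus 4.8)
The plan is to leverage the volume-preservation property of \cref{prop:volume}, but applied to a suitable \emph{reduced} system, since the naive volume argument in the full score space $\scores$ is bound to fail. Indeed, a strategy orbit $\orbit{\run} = \mirror(\dorbit{\run})$ can converge to $\eq$ while the corresponding score orbit $\dorbit{\run}$ escapes to infinity along the ``redundant'' directions, so a region of $\scores$ could perfectly well keep its volume by contracting transversally while spreading along these directions (exactly the picture in \cref{fig:grid}). The crux is therefore to quotient out precisely these redundant directions and show that what remains is still volume-preserving.

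The structural fact that makes this possible is a gauge invariance of the choice map: since $\braket{\ones}{\strat_{\play}} = 1$ on $\strats_{\play}$, the defining maximization \eqref{eq:choice} only shifts by a constant when $\score_{\play}$ is translated by a multiple of $\ones$, so $\mirror_{\play}(\score_{\play} + c\,\ones) = \mirror_{\play}(\score_{\play})$ for every $c\in\R$. Consequently the entire right-hand side $\payv(\mirror(\cdot))$ of \eqref{eq:FTRL} is invariant under adding a multiple of $\ones$ to any player's score block. I would therefore split each player's score space as $\scores_{\play} = \scores_{\play}^{0}\oplus\R\ones$, with $\scores_{\play}^{0} = \setdef{\score_{\play}}{\braket{\ones}{\score_{\play}}=0}$, and write $\dorbit{\run}=(\tilde\score(\run),\rho(\run))$. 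By gauge invariance the reduced part obeys a self-contained system $\dot{\tilde\score}=\tilde V(\tilde\score)$, while the gauge part satisfies $\dot\rho = r(\tilde\score)$, i.e.\ it is slaved to $\tilde\score$ and contributes a pure shear. Since the change of variables $\score\mapsto(\tilde\score,\rho)$ is linear and $\dot\rho$ is independent of $\rho$, the divergence of the full field reduces to $\tr(\mathrm{D}\tilde V)$; as \cref{prop:volume} forces this divergence to vanish identically, the reduced flow on $\scores^{0}\defeq\prod_{\play}\scores_{\play}^{0}$ is itself volume-preserving.

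Next I would transfer the problem to $\intstrats$, whose dimension matches that of $\scores^{0}$. Near the fully mixed profile $\eq$, the restriction $\mirror\vert_{\scores^{0}}$ is a local diffeomorphism onto a neighborhood of $\eq$ in $\intstrats$ (for the entropic regularizer this is the global logit diffeomorphism, and for a non-steep regularizer such as $\Eucl$ the projection is locally affine at an interior point, so trajectories starting close to $\eq$ keep full support by stability). Again by gauge invariance, the strategy orbit $\mirror(\dorbit{\run})$ depends only on the reduced datum $\tilde\score(0)$, so asymptotic stability of $\eq$ in the paper's sense is equivalent to asymptotic stability, under the reduced flow, of the point $\tilde\score^{\ast}=(\mirror\vert_{\scores^{0}})^{-1}(\eq)$, which is a rest point because at a fully mixed equilibrium all pure payoffs $\payv_{\pure}(\eq)$ coincide.

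Finally I would invoke the general principle that a volume-preserving flow admits no asymptotically stable equilibrium. Assuming $\tilde\score^{\ast}$ were asymptotically stable, I would fix a small closed ball $\cpt$ about it inside its basin, with $\vol(\cpt)>0$; a standard compactness argument (Lyapunov stability pins a trajectory inside $B_{\eps}(\tilde\score^{\ast})$ once it enters a small enough ball, and attraction together with compactness of $\cpt$ makes the entry time uniform) shows that the image of $\cpt$ under the reduced flow is eventually contained in $B_{\eps}(\tilde\score^{\ast})$ for every $\eps>0$, whence its volume tends to $0$. This contradicts the volume preservation established above, so $\eq$ cannot be asymptotically stable. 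I expect the main obstacle to be exactly the bookkeeping of the redundant direction in the first two steps --- isolating it, checking it carries no divergence, and verifying that the stability notions survive passing to the quotient --- rather than the final, essentially classical, volume-contraction contradiction.
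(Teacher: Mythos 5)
Your proposal is correct, and its core reduction is the same as the paper's: you quotient out the per\textendash player direction $\R\ones$ on which $\mirror_{\play}$ is constant and observe that the induced field on the quotient is still divergence\textendash free; the paper does exactly this in \cref{lem:X_volume_preserving}, using score differences $\zpoint_{\play\pure} = \dpoint_{\play\pure} - \dpoint_{\play\purebench_{\play}}$ as explicit coordinates on your $\scores^{0}$ and then pushing Lebesgue measure forward to $\intstrats$ through the reduced mirror map. Where you genuinely diverge is the endgame. The paper (\cref{thm:No_Asset_Interior}) takes a small ball $E$ in the basin but \emph{bounded away} from the equilibrium, notes that $\bigcup_{\run\geq0}\flow{\run}{E}$ is a bounded invariant set of positive measure, and invokes Poincar\'e's recurrence theorem to produce trajectories returning to $E$ infinitely often, contradicting attraction. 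You instead take a compact ball \emph{around} the equilibrium and run the classical uniform\textendash attraction argument (Lyapunov stability plus compactness gives a uniform entry time into $B_{\eps}$), so that the forward image has volume tending to $0$, contradicting Liouville directly. Your route is slightly more elementary \textendash\ it needs only volume preservation and continuity of the flow, not the recurrence theorem \textendash\ and is perfectly sound here since the reduced field is Lipschitz and bounded. What the paper's formulation buys is generality: \cref{thm:No_Asset_Interior} is stated for arbitrary \emph{closed sets} in $\intstrats$, and that set version is reused verbatim in the proofs of \cref{thm:unstable-mixed,thm:unstable-set}; your pointwise contraction argument would need to be restated for compact invariant sets to serve the same purpose. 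One point worth making explicit in your write\textendash up (you gesture at it) is that the paper's stability notion is phrased via neighborhoods in $\strats$ and arbitrary preimages of $\orbit{\start}$, so you should confirm that for a \emph{fully mixed} $\eq$ the preimage in $\scores^{0}$ of a small interior neighborhood is a bona fide bounded neighborhood of $\tilde\score^{\ast}$ on which $\mirror\vert_{\scores^{0}}$ is a diffeomorphism \textendash\ true for both steep and non\textendash steep regularizers at interior points, and precisely the fact that fails on $\bd(\strats)$ and forces the separate analysis of partially mixed equilibria.
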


The main idea of the proof of \cref{thm:unstable-int} relies on a tandem application of \cref{prop:volume} together with the dimensionality reduction idea we discussed for the entropic case in \cref{sec:FTRL}.
In the resulting quotient space, the inverse image of an interior point $\eq\in\intstrats$ is a single point and the induced dynamics remain volume-preserving.
If $\eq$ is asymptotically stable, a limit point argument rules out the possibility of a trajectory entering and exiting a small neighborhood of its preimage infinitely many times.
At the same time, Lyapunov stability and volume preservation imply that the dynamics are locally recurrent.
This contradicts the transient property established above and proves that $\eq$ cannot be asymptotically stable;
the details involved in making these arguments precise are fairly intricate, so we defer the proof of \cref{thm:unstable-int} to the supplement.

%From an epistemic viewpoint,
This universal instability result has significant implications as it provides a dynamic justification of the fragility of fully mixed \aclp{NE}.
%Indeed, if a single player makes a small mistake in implementing a fully mixed strategy, all players could be thrown out of equilibrium because of this error \citep{FT91,LRS19}.
\Cref{thm:unstable-int} illustrates this principle through the lens of regret minimization:
any deviation from a fully mixed equilibrium invariably creates an opportunity that can be exploited by a no-regret learner.
When every player adheres to such a policy, this creates a vicious cycle which destroys any chance of stability for fully mixed equlibria.

%----------------------------------------------------------------------
%%% MIXED INSTABILITY
%----------------------------------------------------------------------
\subsection{The case of partially mixed equilibria}
Taking this premise to its logical extreme, a natural question that arises is whether this instability persists as long as even a \emph{single} player employs a mixed strategy at equilibrium. In the previous case, after the dimensionality reduction argument we described in \cref{sec:FTRL}, \revise{neighborhoods of} fully mixed equilibria \revise{in the space of strategies ($\strats$) correspond to sets of finite volume in the space of payoffs ($\scores$).}
On the contrary, the case of \emph{partially} mixed equilibria is much more complex
%since they would be mapped to a set of infinite volume using the polar cone arguments
\revise{because neighborhoods of points on the boundary of $\strats$ correspond to sets of \emph{infinite} volume in the space of payoffs} 
\textendash\ and this, even after dimensionality reduction (\cf \cref{fig:polarcone}).
Because of this, volume preservation arguments cannot rule out asymptotic stability of \revise{\aclp{NE} lying at the boundary of the strategy space:}
indeed, pure \aclp{NE} also lie on the boundary but they \emph{can} be asymptotically stable \citep{CGM15,MS16,CHM17-NIPS,MerSan18}.
%consequently, a more precise argument is needed for the intermediate case of partially mixed equilibria.

\revise{In view of the above, it is not a priori clear whether partially mixed equilibria would behave more like pure or fully mixed ones \textendash\ or if no conclusion can be drawn whatsoever.
Our next result shows that the dynamics of \ac{FTRL} represent a very sharp selection mechanism in this regard:}

%\begin{comment}
%Taking this premise to its logical extreme, a natural question that arises is whether this instability persists as long as even a \emph{single} player employs a mixed strategy at equilibrium.
%Indeed, if an agent has several best responses at equilibrium, any switch to an alternative best-response would not incur any regret on the deviating player, but it could potentially trigger a vicious cycle of changes by the other no-regret learners (who might be presented with an opportunity to exploit).
%%Since each player's best-response correspondence is upper hemicontinuous, a deviation by the mixing player could potentially trigger a similar vicious cycle between no-regret learners,
%%suggesting a positive answer to the above question.
%%implying in turn that mixed equilibria cannot be stable under \eqref{eq:FTRL}.
%Our next result provides a definitive confirmation of this hypothesis:
%\end{comment}

\begin{restatable}{theorem}{stableMixed}
\label{thm:unstable-mixed}
Only strict \aclp{NE} can be asymptotically stable under \eqref{eq:FTRL}.
\end{restatable}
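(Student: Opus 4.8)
The plan is to argue by contradiction and to reduce the partially mixed case to the fully mixed case that \cref{thm:unstable-int} already settles. Recall that under our genericity assumption every \acl{NE} is quasi-strict, and a quasi-strict equilibrium is strict precisely when each support $\supp(\eq_{\play})$ is a singleton. Hence, if $\eq$ fails to be strict, some player $\play$ has $\abs{\supp(\eq_{\play})} \geq 2$ and $\eq$ is a (partially or fully) mixed equilibrium. First I would assume toward a contradiction that such an $\eq$ is asymptotically stable under \eqref{eq:FTRL}.

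\textbf{Passing to the restricted game.} The key construction is the \emph{restricted game} $\fingame'$ obtained from $\fingame$ by shrinking each player's action set to $\pures'_{\play} \defeq \supp(\eq_{\play})$, keeping the same payoffs and regularizing by the restriction of $\hreg_{\play}$ to $\simplex(\pures'_{\play})$ (which is steep exactly when $\hreg_{\play}$ is). Let $\strats' \defeq \prod_{\play}\simplex(\pures'_{\play})$ be the associated face of $\strats$, and regard $\eq$ as a point of $\strats'$. Two facts make $\fingame'$ the right object. Because off-support masses vanish along $\strats'$, the payoff field of $\fingame'$ agrees there with that of $\fingame$, so the \eqref{eq:FTRL} dynamics of $\fingame'$ form a bona fide regularized-learning system whose drift coincides with the $\fingame$-drift along the face. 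Moreover $\supp(\eq_{\play}) = \pures'_{\play}$ by construction, so $\eq$ is \emph{fully mixed} in $\fingame'$, and since at least one $\pures'_{\play}$ has two or more elements it is a genuine interior equilibrium; that it is a \acl{NE} of $\fingame'$ is immediate, as a best response over $\strats_{\play}$ is a fortiori a best response over the smaller simplex.

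\textbf{Transferring asymptotic stability.} The heart of the argument is the implication that asymptotic stability of $\eq$ in $\fingame$ forces asymptotic stability of $\eq$, now viewed in $\strats'$, in the game $\fingame'$. Here the steep/non-steep dichotomy of \cref{sec:FTRL} enters decisively. By quasi-strictness every off-support pure strategy is strictly suboptimal at $\eq$, so $\payv_{\play\pure_{\play}}(\eq) < \payv_{\play\peq_{\play}}(\eq)$ for $\pure_{\play}\notin\supp(\eq_{\play})$; by continuity this persists on a neighborhood, driving the off-support scores downward and the off-support mass to zero. When $\hreg$ is \emph{non-steep}, this decay is exact: the Euclidean-type choice map clips the dominated coordinates to $0$, a neighborhood of $\eq$ is forward invariant inside $\strats'$, and the orbit is governed there precisely by the \eqref{eq:PD}-type dynamics of $\fingame'$, so the transfer is immediate. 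When $\hreg$ is \emph{steep}, $\strats'$ is only approached asymptotically; I would instead use that the faces of $\strats$ are invariant for the interior-extended dynamics and that, as the off-support mass vanishes, the induced strategy field converges to the \eqref{eq:RD}-type field of $\fingame'$, and then feed this into the Lyapunov function furnished by asymptotic stability of $\eq$ to recover stability and attraction of $\eq$ relative to $\strats'$.

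\textbf{Closing the loop, and the main obstacle.} Once $\eq$ is known to be asymptotically stable in $\fingame'$, applying \cref{thm:unstable-int} to $\fingame'$ — in which $\eq$ is fully mixed — yields the desired contradiction, so no non-strict equilibrium can be asymptotically stable under \eqref{eq:FTRL}. The delicate point is the steep case of the transfer step: uniform approximation of face-trajectories by interior ones holds only on compact time windows, whereas attraction is an infinite-horizon statement, so one cannot naively interchange the limits $\run\to\infty$ and (off-support mass)$\to 0$. Overcoming this is where the technical weight lies, and it requires the finer structure of steep \eqref{eq:FTRL} dynamics — invariance of the faces of $\strats$ together with steepness estimates controlling the normal velocity near $\bd(\strats)$ — to promote the finite-time approximation into a genuine conclusion about the $\omega$-limit of $\eq$ in $\fingame'$.
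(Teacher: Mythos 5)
Your overall strategy coincides with the paper's: assume a non-strict (hence, by genericity, quasi-strict and mixed) equilibrium $\eq$ is asymptotically stable, pass to the game restricted to $\prod_{\play}\supp(\eq_{\play})$ where $\eq$ becomes fully mixed, and invoke \cref{thm:unstable-int}; the steep/non-steep split is also the paper's. The gaps are in the transfer step, in both branches. In the non-steep branch the transfer is not ``immediate.'' The assertion that the choice map ``clips the dominated coordinates to $0$'' and that ``a neighborhood of $\eq$ is forward invariant inside $\strats'$'' is exactly what must be proved, and the second claim is not even the right statement: in the non-steep regime the face spanned by $\supp(\eq)$ is \emph{not} forward invariant (a trajectory started on it can acquire off-support mass when the corresponding \ac{KKT} multipliers vanish). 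What the paper actually proves (\cref{lem:finite_converg_Nonsteep}) is that every trajectory started in a suitable neighborhood $U_{0}$ of $\eq$ in $\strats$ satisfies $\supp(\orbit{\run})=\supp(\eq)$ for all $\run\geq T_{0}$ with a \emph{uniform} bound $T_{0}=2G/c$, where $c>0$ is the payoff gap furnished by quasi-strictness on a neighborhood and $G$ bounds the oscillation of $\nabla\hreg$ (this is precisely where non-steepness enters). That lemma is the technical heart of the non-steep case and is missing from your argument; moreover, even granting it, one cannot simply cite \cref{thm:unstable-int} for the restricted game, since asymptotic stability of $\eq$ for that game's \emph{own} dynamics has not been established \textemdash\ the paper instead reruns the volume-preservation/recurrence argument directly on the image $\flowmap(U_{0},T_{0})$, after checking that this set has positive measure inside the face.

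In the steep branch you have located the difficulty in the wrong place. No approximation of face trajectories by interior ones, no interchange of limits, and no Lyapunov function is needed: by \cref{prop:FTRL-strat} the faces of $\strats$ are \emph{exactly} forward invariant under steep \eqref{eq:FTRL}, and the restriction of the dynamics to the face $\strats'$ \emph{is} the \eqref{eq:FTRL} system of the restricted game. Asymptotic stability of $\eq$ in $\strats$ therefore passes to asymptotic stability relative to $\strats'$ simply by intersecting the neighborhoods in the definitions with $\strats'$, after which \cref{thm:unstable-int} applies at once (the paper in fact proves the stronger set-valued statement, \cref{thm:unstable-set}, this way). Your proposed machinery \textemdash\ convergence of the induced strategy field to the \eqref{eq:RD}-type field as the off-support mass vanishes, fed into a Lyapunov function \textemdash\ is both unnecessary and, as you acknowledge, not carried out.
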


\begin{corollary}
\label{cor:stable-strict}
If $\eq$ is partially mixed, it cannot be asymptotically stable under \eqref{eq:FTRL}.
\end{corollary}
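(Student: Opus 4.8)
The stated corollary is an immediate consequence of \cref{thm:unstable-mixed}: a partially mixed equilibrium is, by definition, not pure, and strict equilibria are pure a fortiori, so a partially mixed $\eq$ is not strict and hence cannot be asymptotically stable under \eqref{eq:FTRL}. The plan below is therefore to indicate how I would obtain the partially mixed case \emph{directly}, since this proves the corollary without leaning on the full theorem and, more importantly, exposes the mechanism behind it.

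First I would pass to the \emph{support subgame} $\fingame^{\ast}$, in which each player $\play$ is restricted to the actions in $\supp(\eq_{\play})$. By the variational characterization \eqref{eq:Nash-var}, the restriction of $\eq$ is a Nash equilibrium of $\fingame^{\ast}$, and since every player now plays every available action with positive probability, this restriction is a \emph{fully mixed} equilibrium of $\fingame^{\ast}$; moreover, $\eq$ being partially mixed guarantees that at least one player has two or more actions in $\fingame^{\ast}$, so the reduced strategy space $\strats^{\ast} = \prod_{\play}\simplex(\supp(\eq_{\play}))$ is genuinely positive-dimensional. I would then invoke \cref{thm:unstable-int} for the game $\fingame^{\ast}$ to conclude that the restriction of $\eq$ is not asymptotically stable under the \eqref{eq:FTRL} dynamics of $\fingame^{\ast}$.

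The crux is then to transfer this instability from $\fingame^{\ast}$ back to $\fingame$. The bridge is quasi-strictness: for every out-of-support action $\pure_{\play}\notin\supp(\eq_{\play})$ one has $\payv_{\play\pure_{\play}}(\eq) < \payv_{\play\peq_{\play}}(\eq)$ for $\peq_{\play}\in\supp(\eq_{\play})$, and by continuity of the payoffs this strict gap persists on a neighborhood of $\eq$. This makes the support face $\strats^{\ast}$ locally attracting in the transverse (out-of-support) directions, while the dynamics tangent to $\strats^{\ast}$ coincide with the \eqref{eq:FTRL} dynamics of $\fingame^{\ast}$. A reduction principle then forces asymptotic stability of $\eq$ in $\fingame$ to imply asymptotic stability of its restriction within $\strats^{\ast}$, contradicting the previous step.

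The main obstacle lies precisely in this transfer, and it is here that the steep/non-steep dichotomy of \cref{sec:FTRL} enters. In the non-steep (Euclidean) case the transverse strict gap clamps the out-of-support coordinates at zero near $\eq$, so $\strats^{\ast}$ is genuinely forward-invariant on a neighborhood and the reduction is clean. In the steep case the subtlety is that trajectories never touch $\strats^{\ast}$: they remain fully mixed and merely converge to it, so one cannot simply restrict the flow to the face. The fix I would pursue is to show that the tangential (in-support) instability survives the perturbation coming from the decaying transverse coordinates --- equivalently, that the reduced in-support dynamics limit to those of $\fingame^{\ast}$ and therefore inherit, through the volume-preservation identity of \cref{prop:volume}, the obstruction to attraction established in \cref{thm:unstable-int}. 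Controlling this coupling between the exponentially vanishing transverse directions and the volume-preserving, non-attracting tangential directions is the delicate technical point; once it is in place, the contradiction with asymptotic stability closes exactly as in the fully mixed case.
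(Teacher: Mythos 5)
Your opening deduction is precisely the paper's proof of \cref{cor:stable-strict}: the corollary carries no separate argument because a partially mixed $\eq$ is by definition not pure, strict equilibria are pure a fortiori, so $\eq$ is not strict and \cref{thm:unstable-mixed} applies. That one-line reduction is complete and correct, and nothing more is needed.

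The ``direct'' sketch you append is really a re-derivation of the partially mixed case of \cref{thm:unstable-mixed}; it follows the paper's strategy (pass to the support subgame, use quasi-strictness, invoke the interior instability result) except in the steep case, where you diverge. You claim that ``one cannot simply restrict the flow to the face'' because interior trajectories never reach $\strats^{\ast}$, and you propose a reduction principle controlling the coupling between the decaying out-of-support coordinates and the volume-preserving in-support dynamics. The paper avoids this entirely: by \cref{prop:FTRL-strat}, the steep mixed-strategy dynamics \eqref{eq:FTRL-strat} are well-posed on all of $\strats$, boundary included, and every face is forward-invariant; one therefore tests asymptotic stability against initial conditions lying \emph{on} the face spanned by $\supp(\eq)$, which remain there for all time and evolve as the \ac{FTRL} dynamics of the restricted game, so asymptotic stability of $\eq$ in $\fingame$ directly implies asymptotic stability of $\eq$ as an interior point of the face, contradicting \cref{thm:No_Asset_Interior} (this is \cref{thm:steep_set_not_Interior}). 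Your proposed coupling argument is left unexecuted --- it is the hardest step in your outline --- and it is unnecessary. In the non-steep case your picture is closer to the paper's, though the face is not ``forward-invariant on a neighborhood'': \cref{lem:finite_converg_Nonsteep} shows trajectories only \emph{attain} $\supp(\eq)$ after a finite uniform time $T_{0}$, after which the restriction and recurrence arguments apply. None of this affects the corollary itself, which your first paragraph already settles.
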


Viewed in isolation, \cref{thm:unstable-int} would seem to be subsumed by \cref{thm:unstable-mixed}, but this is not so:
the former plays an integral role in the proof of the latter, so it cannot be viewed as a special case.
In more detail, the proof of \cref{thm:unstable-mixed} builds on \cref{thm:unstable-int} along two separate axes, depending on whether the underlying regularizer is steep or not:

\begin{enumerate}
[leftmargin=*]
\item
In the steep case, as we discussed in \cref{sec:FTRL} there is a well-posed dynamical system on $\strats$.
As we show in the supplement, each face of $\strats$ is forward-invariant in this system,
so $\eq$ must also be asymptotically stable when constrained to the face $\eqs$ of $\strats$ spanned by $\supp(\eq)$.
The conclusion of \cref{thm:unstable-mixed} then follows by noting that $\eq$ is interior in $\eqs$ and applying \cref{thm:unstable-int} to the restriction of the underlying game to $\eqs$.
\item
The non-steep case is considerably more difficult because \eqref{eq:FTRL} no longer induces a well-posed system on $\strats$.
In lieu of this, by examining the finer structure of the inverse image of $\eq$, it is possible to show the following:
for every small enough compact neighborhood $\cpt$ of $\eq$ in $\strats$, there exists a finite time $\tau_{\cpt} \geq 0$ such that $\supp(\orbit{\run}) = \supp(\eq)$ for all $\run\geq\tau_{\cpt}$ whenever $\orbit{\start} \in \cpt$.
As it turns out, the dynamics after $\run\geq\tau_{\cpt}$ \revise{loca}lly coincide with the mixed strategy dynamics of \eqref{eq:FTRL} applied to the restriction of the underlying game to the face $\eqs$ of $\strats$ spanned by $\eq$.
Since $\eq$ is a fully mixed equilibrium in this restricted game, it cannot be asymptotically stable.
\end{enumerate}

%----------------------------------------------------------------------
%%% SETWISE STABILITY
%----------------------------------------------------------------------
\subsection{Stable limit sets}

We conclude our analysis with a result concerning more general behaviors whereby the dynamics of \ac{FTRL} do not converge to a point, but to a more general \emph{invariant set} \textendash\ such as a chain of stationary points interconnected by solution orbits, a structure known as a \emph{heteroclinic cycle} \citep[see \eg][and references therein]{HS98,San10}. % and references therein.
As an example, in the case of two-player zero-sum games with a fully mixed equilibrium, it is known that the trajectories of \eqref{eq:FTRL} form periodic orbits (cycles).
However, these orbits are \emph{not} asymptotically stable:
if the initialization of the \ac{FTRL} dynamics is slightly perturbed, the resulting trajectory will be a different periodic orbit, which does not converge to the first (in the language of dynamical systems, the cycles observed in zero-sum games are not \emph{limit cycles}).
We are thus led to the following natural question:
\begin{center}
\emph{What type of invariant structures can arise as stable limits of \eqref{eq:FTRL}?}
\end{center}
To state this question formally, we will require the setwise version of asymptotic stability:
a set $\set$ is called \emph{asymptotically stable} under \eqref{eq:FTRL} if
\begin{enumerate*}
[\itshape a\upshape)]
%\addtolength{\itemsep}{-\smallskipamount}
\item
all orbits $\orbit{\run} = \mirror(\dorbit{\run})$ of \eqref{eq:FTRL} that start sufficiently close to $\set$ remain close;
and
\item
all orbits that start nearby eventually converge to $\set$.
\end{enumerate*}
Then, focusing on the case of steep dynamics to avoid more complicated statements, we have:

\begin{restatable}{theorem}{stableSet}
\label{thm:unstable-set}
Every asymptotically stable set of steep \eqref{eq:FTRL} contains a pure strategy.
%Suppose that $\set$ is asymptotically stable under \eqref{eq:FTRL} with a steep regularizer.
%Then $\set$ contains a pure strategy profile $\eq[\pure] \in \pures$.
\end{restatable}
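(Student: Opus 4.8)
The plan is to argue by induction on the dimension $d=\dim\strats$, combining a volume-preservation argument that forbids stable sets in the interior with a restriction-to-faces argument that pushes the problem to the boundary. Throughout I use the fact (valid in the steep case) that \eqref{eq:FTRL} induces a well-posed semiflow $\flow{\run}{\cdot}$ on $\strats$ for which the relative interior of every face is forward-invariant; in particular, each face $F$ is invariant, and the dynamics restricted to $F$ is again steep \eqref{eq:FTRL} for the subgame supported on $F$. The first ingredient is a setwise strengthening of \cref{thm:unstable-int}: \emph{no nonempty compact set $\set\subseteq\relint(\strats)$ can be asymptotically stable}. Using the dimensionality reduction behind the proof of \cref{thm:unstable-int}, the dynamics descend to a well-posed flow $\hat\flowmap_{\run}$ on $\R^{d}$ which, by \cref{prop:volume}, preserves Lebesgue volume and under which the preimage $K$ of $\set$ is compact, invariant, and asymptotically stable. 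But a volume-preserving flow on $\R^{d}$ admits no nonempty compact asymptotically stable set: taking a compact forward-invariant trapping region $N$ with $K\subseteq\intr N$ and $\hat\flowmap_{\run}(N)\to K$, one has $\vol(N)>\vol(K)$ (since $N\supseteq B_{\eps_{0}}(K)$ and $B_{\eps_{0}}(K)\setminus K$ is open and nonempty), so for $\eps$ small enough $\vol(B_{\eps}(K))<\vol(N)$; choosing $\run$ with $\hat\flowmap_{\run}(N)\subseteq B_{\eps}(K)$ then gives $\vol(N)=\vol(\hat\flowmap_{\run}(N))\le\vol(B_{\eps}(K))<\vol(N)$, a contradiction.

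The second ingredient is that asymptotic stability is inherited by faces: if $\set$ is asymptotically stable in $\strats$ and $F$ is a face with $\set\cap F\neq\emptyset$, then $\set_{F}\defeq\set\cap F$ is asymptotically stable for the restricted steep dynamics on $F$. Invariance of $F$ forces any orbit that starts in $F$ and converges to $\set$ to converge in fact to $\set\cap F=\set_{F}$, which gives attraction. For stability I would use the elementary compactness fact that $\sup\{\dist(\point,\set_{F}):\point\in F,\ \dist(\point,\set)\le\eps\}\to 0$ as $\eps\to 0$ (otherwise a limit point would lie in $F\cap\set=\set_{F}$ yet at positive distance from $\set_{F}$); combined with the stability of $\set$ in $\strats$ and the invariance of $F$, this shows that orbits in $F$ starting near $\set_{F}$ remain near $\set_{F}$.

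With these in hand the induction closes. For $d=0$, $\strats$ is a single pure profile and the claim is trivial. For $d\ge 1$: if $\set\subseteq\relint(\strats)$ then $\set$ is compact in the interior and the first ingredient yields a contradiction, so $\set$ must meet $\bd(\strats)$; choosing a facet $F$ (of dimension $d-1$) with $\set\cap F\neq\emptyset$, the second ingredient makes $\set\cap F$ asymptotically stable for the steep dynamics of the corresponding subgame, and the inductive hypothesis produces a vertex of $F$ lying in $\set\cap F\subseteq\set$; since vertices of $F$ are vertices of $\strats$, this is the sought pure strategy. I expect the crux to be the interior step: the naive volume-preservation heuristic fails on the boundary, because neighborhoods of boundary points correspond to \emph{infinite}-volume regions in score space (\cf the discussion around \cref{fig:polarcone}), so \cref{prop:volume} can be leveraged cleanly only for sets compactly contained in $\relint(\strats)$. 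Making the reduction precise and checking that the reduced flow is volume-preserving and complete is the technical heart; after that, the shrinking-trapping-region contradiction and the face induction are comparatively routine, the only remaining subtlety being to confirm in the second step that the restricted dynamics on $F$ is genuinely steep \eqref{eq:FTRL} for the subgame, so that the inductive hypothesis applies.
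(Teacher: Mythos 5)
Your proposal is correct and follows essentially the same route as the paper: rule out asymptotically stable compact sets in the relative interior via volume preservation of the reduced score dynamics, show that asymptotic stability passes to the forward-invariant faces of $\strats$, and descend through faces until only a vertex can remain. The only divergence is in the interior step, where you reach the contradiction by a direct shrinking-volume comparison on a trapping region (which requires the standard uniform-attraction refinement you leave implicit) instead of the paper's appeal to Poincar\'e recurrence; both are routine consequences of the same volume-preservation lemma.
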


The proof of \cref{thm:unstable-set} relies on an ``infinite descent'' argument whereby the faces of $\points$ that intersect with $\set$ are eliminated one-by-one, until only pure strategies remain as candidate elements of $\set$ with minimal support;
we provide the details in the supplement.

The importance of \cref{thm:unstable-set}
%Its importance
lies in that it provides a succinct criterion for identifying possible attracting sets of \eqref{eq:FTRL}.
Indeed, by Conley's decomposition theorem (also known as the ``fundamental theorem of dynamical systems'') \citep{Con78}, the flow of \eqref{eq:FTRL} in an arbitrary game decomposes into a chain recurrent part and an attracting part~(see \cite{papadimitriou2018nash,papadimitriou2019game} for several examples/discussion in the case of replicator dynamics).
The recurrent part is exemplified by the periodic orbits that arise in zero-sum games with an interior equilibrium (there are no attractors in this case) \citep{MPP18}.
\Cref{thm:unstable-set} goes a long way to showing that the attracting part of \eqref{eq:FTRL} always intersects the \emph{extremes} of the game's strategy space \textendash\ \ie the players' set of \emph{pure} strategies.
A special case of \Cref{thm:unstable-set}, in the case of replicator dynamics, was employed in~\cite{omidshafiei2019alpha} as a step in the definition of new, dynamics/decomposition-based solution concepts.
Formalizing the exact form of this decomposition in arbitrary games is an open direction for future research with far-reaching implications for the theory of online learning in games.

%----------------------------------------------------------------------
%%% CONCLUSIONS
%----------------------------------------------------------------------
\section{Concluding remarks}
\label{sec:conclusion}
%Broader Impact discussion is not applicable.

The well known universal existence theorem for (mixed) Nash equilibria in general games has been very influential not only from a mathematics perspective but also from a public policy one as it seems to suggest that there is no inherent tension in any societal setting between the single-minded pursuit of individual profits and societal stability.
Nash equilibria satisfy both desiderata simultaneously. 
Thus, there is in principle no need for centralized intervention and guidance as market forces will converge upon such a solution.

Our results present an argument in the opposite direction. Unless the game has a pure Nash equilibrium, which is definitely not satisfied in numerous strategic interactions, then societal systems do not self-stabilize, even if they are driven by our most effective payoff seeking dynamics, i.e., gradient learning and its follow-the-regularizer-leader variants. Exploring the tradeoffs between individual optimality and societal stability is thus a much more subtle issue than it first meets the eye, and we hope that we inspire follow-up work that can elucidate these questions further.

%----------------------------------------------------------------------
%%% THANKS
%----------------------------------------------------------------------
\section*{Acknowledgments}
\label{ref:thanks}
\small
%----------------------------------------------------------------------
%%% THANKS
%----------------------------------------------------------------------
% !TEX root = ./Main.tex
%
%
This research was partially supported by the COST Action CA16228 ``European Network for Game Theory'' (GAMENET), the French National Research Agency (ANR) under grant ALIAS, and the Onassis Foundation undr Scholarship ID: F ZN 010-1/2017-2018.

E.V.~Vlatakis-Gkaragkounis is grateful to be supported by NSF grants CCF-1703925, CCF-1763970,
CCF-1814873, CCF-1563155, and by the Simons Collaboration on Algorithms and Geometry. 

T.~Lianeas is supported by the Hellenic Foundation for Research and Innovation (H.F.R.I.) under the ``First Call for H.F.R.I. Research Projects to support Faculty members and Researchers and the procurement of high-cost research equipment grant'',  project BALSAM, HFRI-FM17-1424.

P.~Mertikopoulos is grateful for financial support by
the French National Research Agency (ANR) in the framework of
the ``Investissements d'avenir'' program (ANR-15-IDEX-02),
the LabEx PERSYVAL (ANR-11-LABX-0025-01),
and
MIAI@Grenoble Alpes (ANR-19-P3IA-0003).

G.~Piliouras gratefully acknowledges AcRF Tier-2 grant (Ministry of Education – Singapore) 2016-T2-1-170, grant PIE-SGP-AI-2018-01, NRF2019-NRF-ANR095 ALIAS grant and NRF 2018 Fellowship NRF-NRFF2018-07 (National Research Foundation Singapore).
\normalsize

%*************************************************************
%*****    APPENDIX
%*************************************************************
\appendix
\numberwithin{equation}{section}		% for numbering  in the appendix
\numberwithin{lemma}{section}		% for numbering  in the appendix
\numberwithin{proposition}{section}		% for numbering  in the appendix
\numberwithin{theorem}{section}		% for numbering in the appendix

%----------------------------------------------------------------------
%%% APP: EXAMPLES
%----------------------------------------------------------------------
\section{An ontology of \aclp{NE}: representative examples}
\label{app:examples}
%----------------------------------------------------------------------
%%% APP: EXAMPLES
%----------------------------------------------------------------------
% !TEX root = ./Main.tex

%\subsection{\aclp{NE}: an ontology}

In the archetypal game of Prisoner's Dilemma (left), it is easy to check that the unique \acl{NE} is the mutual betrayal which is strict (and hence pure).
On the other hand, Matching Pennies (right) is an example of a zero-sum game whose unique \acl{NE} is fully mixed but still quasi-strict (since all strategies present in its support are unilateral best responses to it).
We mention the above to clarify that quasi-strict \emph{does not mean} pure equilibria and includes also the fully mixed \acl{NE};
the terminology is, perhaps, unfortunate, but otherwise deeply entrenched in the game-theoretic literature \citep{FT91}.

%----------------------------------------------------------------------
%% Table of games begins here

\begin{table}[htbp]
\centering
\setlength{\extrarowheight}{\smallskipamount}
%----------------------------------------------------------------------
%%% GAMES
%----------------------------------------------------------------------
% !TEX root = ../Main.tex

\begin{tabular}{cc}
\begin{tabular}{cc|c|c|}
	& \multicolumn{1}{c}{}	& \multicolumn{2}{c}{Player $Y$}
		\\
	& \multicolumn{1}{c}{}	&\multicolumn{1}{c}{$B$}	&\multicolumn{1}{c}{$S$}
		\\
	\cline{3-4}
	\multirow{2}*{Player $X$}	& $B$ & $(3,3)$ & $(0,5)$ \\\cline{3-4}
	& $S$ & $(5,0)$ & $(1,1)$ \\\cline{3-4}
\end{tabular}&
\begin{tabular}{cc|c|c|}
	& \multicolumn{1}{c}{} & \multicolumn{2}{c}{Player $Y$}\\
	& \multicolumn{1}{c}{} & \multicolumn{1}{c}{$H$}	& \multicolumn{1}{c}{$T$} \\\cline{3-4}
	\multirow{2}*{Player $X$}	& $H$ & $(1,-1)$ & $(-1,1)$ \\\cline{3-4}
	& $T$ & $(-1,1)$ & $(1,-1)$ \\\cline{3-4}
\end{tabular}
\end{tabular}

\medskip
\caption{Prisoner's Dilemma (left) \& Matching Pennies (right).}
\label{tab:games}
\end{table}

%% Table of games ends here
%----------------------------------------------------------------------

%----------------------------------------------------------------------
%%% REGULARIZERS
%----------------------------------------------------------------------
%\subsection{Examples of regularizers and the induced dynamics}

%----------------------------------------------------------------------
%%% APP: STRUCTURAL
%----------------------------------------------------------------------
\section{Basic properties of the \ac{FTRL} dynamics}
\label{app:structural}
%----------------------------------------------------------------------
%%% APP: BASIC
%----------------------------------------------------------------------
% !TEX root = ./Main.tex

%----------------------------------------------------------------------
%%% DEFINITIONS
%----------------------------------------------------------------------
\subsection{Definitions from dynamical systems}

In this appendix, we provide some general preliminaries from general topology and the theory of dynamical systems that we will use freely in the sequel.

%For a closed set $A$, we will call neighborhood of $A$ an open set $B$ that contains $A$.
 
A key notion in our analysis is that of \emph{\textpar{Poincaré} recurrence}.
Intuitively, a dynamical system is recurrent if, after a sufficiently long (but \emph{finite}) time, almost every state returns arbitrarily close to the system's initial state.%
\footnote{Here, ``almost'' means that the set of such states has full Lebesgue measure.}
More formally, given a dynamical system on $\points$ that is defined by means of a \emph{semiflow}
$\flowmap\from\points\times[0,\infty)\to\points$,
we have:%
\footnote{A smooth map $\flowmap\from\points\times[0,\infty)\to\points$ is called a \emph{semiflow} if $\flow{0}{\point} = \point$ and $\flow{\run+\runalt}{\point} = \flow{\runalt}{\flow{\run}{\point}}$ for all $\point\in\points$ and all $\run,\runalt\geq0$.
Heuristically, $\flow{\run}{\point} \equiv \flow{\run}{\point}$ describes the trajectory of the dynamical system starting at $\point$.}

\begin{definition}
\label{def:recurrence}
A point $\point\in\points$ is said to be \emph{recurrent} under $\flowmap$ if, for every neighborhood $U$ of $\point$ in $\points$, there exists an increasing sequence of times $t_{n}\uparrow\infty$ such that $\flow{\run_{n}}{\point} \in U$ for all $n$.
Moreover, the flow $\flowmap$ is called \emph{\textpar{Poincaré} recurrent} if, for every measurable subset $A$ of $\points$, the set of recurrent points in $A$ has full measure.
\end{definition}
The above definition directly implies that the flow $\flow{\run}{\point}$ from a recurrent point $\point$ cannot converge to any $\pointalt \neq \point$.
Poincaré's recurrence theorem gives sufficient condition for the existence of such points.
 
\begin{theorem}[Poincaré Recurrence Theorem]\label{thm:poincare_reccurrence}
If a flow $\flowmap$ preserves volume and its orbits are bounded,
then almost every point is recurrent under $\flowmap$.
% then for each open set there exist orbits that intersect the set infinitely often.
 \end{theorem}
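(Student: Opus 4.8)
The plan is to deduce \cref{thm:poincare_reccurrence} from the classical recurrence theorem for \emph{finite} measure spaces, using boundedness of orbits to carve the (infinite-volume) ambient space $\points$ into a countable exhaustion by finite-volume invariant pieces. This reduction is the heart of the matter: volume preservation alone is never enough (a rigid translation of $\R^{\vdim}$ preserves volume yet has no recurrent points), so the boundedness hypothesis must be spent precisely here.

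First I would discretize, passing from $\flowmap$ to its time-one map $T \defeq \flow{1}{\cdot}\from\points\to\points$. Since $\flowmap$ is a flow, $T$ is an invertible self-map of $\points$, and it preserves volume by hypothesis. It suffices to show that almost every point returns to each of its neighborhoods infinitely often under $T$: the integer return times then furnish a sequence $\run_{n}\uparrow\infty$ with $\flow{\run_{n}}{\point}\in\nhd$, which is exactly what \cref{def:recurrence} demands.

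Next comes the reduction to finite measure. For $R>0$ set $\points_{R} = \setdef{\point\in\points}{\sup_{\run\geq\start}\norm{\flow{\run}{\point}}\leq R}$; by continuity the supremum is attained over a countable dense set of times, so $\points_{R}$ is measurable, and since an orbit issuing from $\flow{\runalt}{\point}$ is a sub-orbit of the one issuing from $\point$, each $\points_{R}$ is forward-invariant. Moreover $\points_{R}$ sits inside the ball of radius $R$, so $\vol(\points_{R})<\infty$, while boundedness of every orbit gives $\points = \bigcup_{R\in\N}\points_{R}$. As $T$ is an invertible volume-preserving map and $\points_{R}$ is forward-invariant of finite volume, $T$ restricts to a measure-preserving bijection of $(\points_{R},\vol)$ up to a null set.

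On each $\points_{R}$ I would then run the wandering-set argument. For measurable $A\subseteq\points_{R}$, let $W = \setdef{\point\in A}{T^{n}\point\notin A \text{ for all } n\geq1}$. The preimages $T^{-n}W$, $n\geq0$, are pairwise disjoint — if $T^{m}\zpoint$ and $T^{n}\zpoint$ both lay in $W$ with $m<n$, then $T^{\,n-m}(T^{m}\zpoint)\notin A$ would contradict $T^{n}\zpoint\in A$ — and each has volume $\vol(W)$; finiteness of $\vol(\points_{R})$ then forces $\vol(W)=0$. A point of $A$ that revisits $A$ only finitely often has a last-return image in $W$, so the set of such points lies in $\bigcup_{n\geq0}T^{-n}W$ and is likewise null; hence almost every point of $A$ returns to $A$ infinitely often. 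Applying this to a countable basis $\{\nhd_{k}\}$ of $\points_{R}$ and discarding the corresponding null sets gives recurrence for almost every point of $\points_{R}$, and the union over $R\in\N$ yields the theorem. The one genuinely delicate step is the finite-measure reduction; everything after it is the standard Poincar\'e bookkeeping.
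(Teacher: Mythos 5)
The paper itself offers no proof of \cref{thm:poincare_reccurrence}: it is invoked as a classical result, so your proposal must stand on its own against the standard argument it reconstructs. Your overall route \textendash\ discretizing to the time-one map $T=\flow{1}{\cdot}$, exhausting $\points$ by the measurable, forward-invariant, finite-volume sets $\points_{R}$, and running the wandering-set argument on each \textendash\ is indeed the right (and standard) way to prove the theorem, and you correctly identify that the bounded-orbits hypothesis must be spent in the finite-measure reduction.

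There is, however, one step that fails as written. You conclude $\vol(W)=0$ from the pairwise disjointness of the preimages $T^{-n}W$ together with $\vol(\points_{R})<\infty$, but this implicitly requires $T^{-n}W\subseteq\points_{R}$, which is false in general: $\points_{R}$ is only \emph{forward}-invariant, and a point $\point$ with $T^{n}\point\in W\subseteq\points_{R}$ may have left the ball of radius $R$ during its first $n$ time units, so $\point\notin\points_{R}$. In the ambient space, whose volume is infinite, countably many disjoint sets of common positive volume yield no contradiction \textendash\ your own counterexample (a rigid translation of $\R^{\vdim}$) exhibits exactly such a family. The repair is immediate: use forward images instead. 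The sets $T^{n}W$, $n\geq0$, do lie in $\points_{R}$ by forward invariance; they are pairwise disjoint because $T$ is injective (this is where the hypothesis of a genuine flow, rather than the paper's semiflow, is spent: if $T^{m}\point=T^{n}\pointalt$ with $\point,\pointalt\in W$ and $m<n$, injectivity gives $\point=T^{n-m}\pointalt\in A$, contradicting $\pointalt\in W$); and each has volume $\vol(W)$, so finiteness of $\vol(\points_{R})$ forces $\vol(W)=0$. With this in place the rest of your bookkeeping is sound: the points of $A$ returning only finitely often lie in $\bigcup_{n\geq0}T^{-n}W$, which is null since invertibility gives $\vol(T^{-n}W)=\vol(W)=0$ for every $n$, and the sweep over a countable basis $\{\nhd_{k}\}$ and over $R\in\N$ completes the proof.
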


The key notion in the above formulation of the theorem is that of volume preservation:
formally, a flow $\flowmap$ is \emph{volume-preserving} if $\vol(\flow{\run}{\region}) = \vol(\region)$ for any set of initial conditions $\region\subseteq\points$.
A useful condition to establish this property is via \emph{Liouville's formula}, as stated below:

\begin{theorem}
[Liouville's formula]
\label{thm:Liouville}
Let $\Phi$ be the flow of a dynamical system with infinitesimal generator $\vecfield$, \ie $\flow{\run}{\point}$ is the solution trajectory of the ordinary differential equation
\begin{equation}
\frac{d}{d\run}\orbit{\run}
	= \vecfield(\orbit{\run})
\end{equation}
with initial condition $\orbit{\start} = \point$.
Then, letting $\region_{\run} = \flow{\run}{\region}$ for an arbitrary measurable set $\region$, we have
\begin{equation}
\label{eq:Liouville}
\frac{d}{dt} \vol[\region_{\run}]
	= \int_{\region_{\run}} \operatorname{div}[\vecfield(\point)] \dd\point
\end{equation}
%let $A(t) =\Phi(A, t)$ and its volume be $\mathrm{vol}[A(t)] = \int_{A(t)} \mathrm{d} \pmb{x}$. Then we have that
%\begin{equation*}
%    \frac{d \mathrm{vol}[A(t)]}{\mathrm{d}t} = \int_{A(t)} \mathrm{div}[f(\pmb{x})] \mathrm{d}\pmb{x}
%\end{equation*}
\end{theorem}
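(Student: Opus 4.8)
The plan is to prove Liouville's formula by pulling the evolving volume integral back onto the fixed domain $\region$ via the flow map and then tracking how the Jacobian of this map changes in time. First I would apply the change of variables $\point \mapsto \flow{\run}{\point}$ to write
\[
\vol[\region_{\run}] = \int_{\region} J(\run,\point) \dd\point,
\]
where $J(\run,\point) \defeq \det D_{\point}\flow{\run}{\point}$ is the Jacobian determinant of the time-$\run$ flow with respect to its initial condition. Since $\flow{\start}{\point} = \point$ we have $J(\start,\point) = 1$, and because $\flow{\run}{\cdot}$ is a diffeomorphism for each $\run$ the determinant never vanishes; hence $J > 0$ throughout and no absolute value is needed. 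Smoothness of the semiflow then lets me differentiate under the integral sign, giving $\frac{d}{d\run}\vol[\region_{\run}] = \int_{\region} \frac{\pd}{\pd\run} J(\run,\point)\dd\point$.

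The heart of the argument is computing $\frac{\pd}{\pd\run}J$. Writing $A(\run) = D_{\point}\flow{\run}{\point}$ and differentiating the defining equation $\frac{d}{d\run}\flow{\run}{\point} = \vecfield(\flow{\run}{\point})$ with respect to the initial condition $\point$ (interchanging the $\run$- and $\point$-derivatives) yields the first-variation equation $\dot A(\run) = D\vecfield(\flow{\run}{\point})\, A(\run)$ with $A(\start) = \eye$. Applying Jacobi's formula for the derivative of a determinant then gives
\[
\frac{d}{d\run}\det A(\run) = \det A(\run)\,\tr\bigl(A(\run)^{-1}\dot A(\run)\bigr) = \det A(\run)\,\tr\bigl(D\vecfield(\flow{\run}{\point})\bigr).
\]
Since the trace of the Jacobian of $\vecfield$ is exactly its divergence, this delivers the pointwise identity $\frac{\pd}{\pd\run}J(\run,\point) = \operatorname{div}[\vecfield(\flow{\run}{\point})]\,J(\run,\point)$.

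Substituting this back and undoing the change of variables then completes the proof:
\[
\frac{d}{d\run}\vol[\region_{\run}] = \int_{\region} \operatorname{div}[\vecfield(\flow{\run}{\point})]\,J(\run,\point)\dd\point = \int_{\region_{\run}}\operatorname{div}[\vecfield(\point)]\dd\point,
\]
where the last equality is the change of variables $\point\mapsto\flow{\run}{\point}$ run in reverse. The main obstacle is the middle step, namely correctly deriving the variational equation for $A(\run)$ and applying Jacobi's formula to extract $\tr D\vecfield = \operatorname{div}\vecfield$; by contrast, the two changes of variables and the differentiation under the integral sign are routine once the smoothness of $\flowmap$ is invoked.
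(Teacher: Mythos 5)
The paper states Liouville's formula as a classical result and offers no proof of its own, so there is nothing to diverge from: your argument is the standard textbook derivation (pull the volume integral back to the fixed domain via the flow, derive the first-variation equation $\dot A = D\vecfield\cdot A$ with $A(\start)=\eye$, apply Jacobi's formula, and use $\tr D\vecfield = \operatorname{div}\vecfield$ before changing variables back), and it is correct. The only point worth flagging is that differentiating under the integral sign for an \emph{arbitrary} measurable $\region$ tacitly requires $\vol(\region)<\infty$ together with local-in-time bounds on $\operatorname{div}\vecfield$ along the flow \textemdash\ a hypothesis satisfied in every use the paper makes of the formula, where the relevant vector fields are built from bounded payoff functions.
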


%As an immediate implication of Liouville’s formula, we have:
\begin{corollary}
If $\vecfield$ is \emph{incompressible} over $\vecspace$ \textpar{\ie $\operatorname{div}\vecfield(\point) = 0$ for all $\point\in\vecspace$}, the induced flow $\flowmap$ is volume-preserving.
\end{corollary}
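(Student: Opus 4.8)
The plan is to read the claim off directly from Liouville's formula (\cref{thm:Liouville}), which already carries all the analytic content; the corollary is then a one-line consequence of the incompressibility hypothesis. First I would apply \cref{thm:Liouville} to the flow $\flowmap$ generated by $\vecfield$, taking $\region\subseteq\vecspace$ to be an arbitrary (measurable) set of initial conditions and writing $\region_{\run} = \flow{\run}{\region}$ for its image at time $\run$. This gives
\begin{equation*}
\frac{d}{d\run} \vol[\region_{\run}]
	= \int_{\region_{\run}} \operatorname{div}[\vecfield(\point)] \dd\point
	\quad
	\text{for all $\run\geq\start$.}
\end{equation*}

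Next I would invoke the standing hypothesis $\operatorname{div}\vecfield(\point) = 0$ for all $\point\in\vecspace$: the integrand vanishes identically, so the right-hand side is $0$ for every $\run$. Hence the scalar map $\run\mapsto\vol[\region_{\run}]$ has zero derivative on $[\start,\infty)$ and is therefore constant along the flow. To identify the value of this constant, I would use the defining property of a (semi)flow at time zero, namely $\flow{\start}{\point} = \point$ for all $\point$, which yields $\region_{\start} = \flow{\start}{\region} = \region$ and thus $\vol[\region_{\start}] = \vol[\region]$. Chaining the two facts gives
\begin{equation*}
\vol[\flow{\run}{\region}]
	= \vol[\region_{\run}]
	= \vol[\region_{\start}]
	= \vol[\region]
	\quad
	\text{for all $\run\geq\start$,}
\end{equation*}
which is exactly the assertion that $\flowmap$ is volume-preserving.

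Since Liouville's formula does the heavy lifting, there is no genuine obstacle here; the only points deserving a word of care are bookkeeping ones. I would note that $\region$ should be taken measurable so that $\vol[\region_{\run}]$ is well-defined and \cref{thm:Liouville} applies, and, to justify differentiating under the integral sign cleanly, one may first restrict attention to bounded measurable $\region$ (the general case following by exhausting $\region$ with bounded sets and using monotone convergence, or simply by reading volume preservation locally). With these caveats in place, the conclusion is immediate.
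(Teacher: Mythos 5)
Your proposal is correct and matches the paper's intended argument exactly: the corollary is stated as an immediate consequence of Liouville's formula (\cref{thm:Liouville}), and your write-up simply makes that one-line deduction explicit \textemdash\ vanishing divergence forces $\frac{d}{d\run}\vol[\region_{\run}] = 0$, so the volume is constant and equal to its initial value $\vol[\region]$. The measurability and bounded-exhaustion caveats you add are sensible bookkeeping but do not change the substance.
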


%----------------------------------------------------------------------
%%% PROPERTIES AND DICHOTOMY
%----------------------------------------------------------------------

\subsection{Structural properties of the \ac{FTRL} dynamics: the steep/non-steep dichotomy}

To proceed with our analysis, we will need to clarify the precise technical requirements for the dynamics' regularizer function $\hreg$.
These are as follows:
\begin{enumerate*}
[(\itshape i\hspace*{.5pt})]
\item
$\hreg\in C_{0}(\R_{+}^{\nPures}) \cap C_{2}(\R_{++}^{\nPures})$,
\ie $\hreg$ is continuous on $\R_{+}^{\nPures}$ and two times continuously differentiable on $\R_{++}^{\nPures}$;
\item
$\hreg$ is strongly convex on $\strats$;
%\ie $\Hess(\hreg(\strat)) \mgeq \hstr \eye$ for some $\hstr>0$ and all $\strat\in\intstrats$;
and
\item
the inverse Hessian $\hmat(\strat) = \Hess(\hreg(\strat))^{-1}$ of $\hreg$ admits a Lipschitz extension to all of $\strats$ such that $\tanvec^{\top} \hmat(\strat) \tanvec > 0$ whenever $\supp(\tanvec) \supseteq \supp(\strat)$.
\end{enumerate*}
These conditions are purely technical in nature and %, as we discuss in the supplement,
they are satisfied by all the regularizers used in practice, \cf \citep{CT93,Kiw97b,ABB04,MS16,BecTeb03,SS11} and references therein.
As we discussed in the main body of our paper, there is an important distinction to be made depending on whether $\hreg$ is \emph{steep} or \emph{non-steep}.
Formally, we say that $\hreg$ is \emph{steep} if $\norm{\nabla\hreg(\strat)} \to \infty$ whenever $\strat\to\bd(\strats)$ and $\rank(\hmat(\point)) = \abs{\supp(\point)}$ for all $\strat$;
by contrast, if $\sup_{\strat\in\strats} \norm{\nabla\hreg(\strat)} < \infty$ and $\rank(\hmat(\point)) = \nPures$, we say that $\hreg$ is \emph{non-steep}.
The qualititative difference in behavior between these cases is illustrated in the figure below (which shows the very different behavior of the derivates of $\hreg$ near the boundary of the state space).

%----------------------------------------------------------------------
% Dichotomy figure begins here

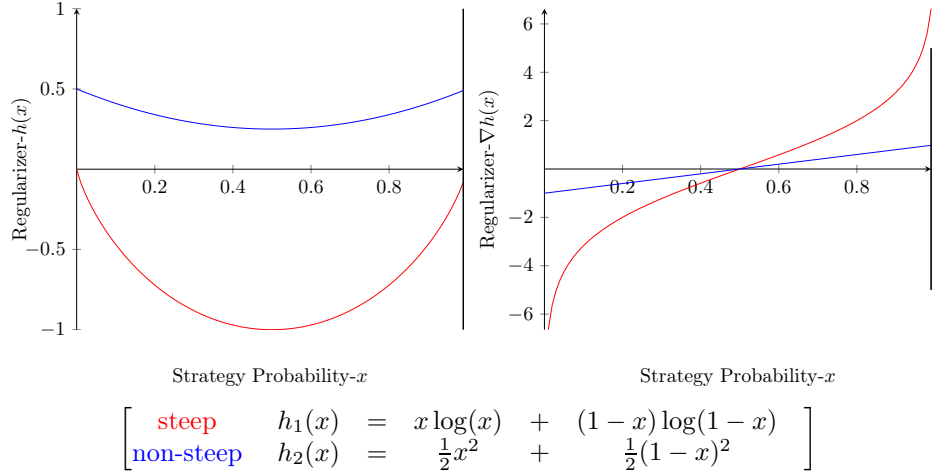
\begin{figure}[tbp]
%----------------------------------------------------------------------
%%% STEEP / NON-STEEP
%----------------------------------------------------------------------
% !TEX root = ../Main.tex

\begin{tikzpicture}[scale=0.75]
\begin{axis}[
    axis lines = center,
    xlabel = {Strategy Probability-$x$},
    ylabel = {Regularizer-$h(x)$},
    x label style={at={(axis description cs:0.5,-0.1)},anchor=north},
    y label style={at={(axis description cs:-0.1,.5)},rotate=90,anchor=south}
]
\addplot [
    color=red, 
    domain=0:.99,
    samples=101
]{x*log2(x)+(1-x)*log2(1-x)};
\addplot [
    color=blue, 
    domain=0:.99, 
    samples=101
]{1/2*(x^2+(1-x)^2)};
\addplot [
thick,
color=black
] coordinates {(0.99, -1) (0.99, 1)};
\end{axis}
\end{tikzpicture}
\begin{tikzpicture}[scale=0.75]
\begin{axis}[
    axis lines = center,
    xlabel = {Strategy Probability-$x$},
    ylabel = {Regularizer-$\nabla h(x)$},
    x label style={at={(axis description cs:0.5,-0.1)},anchor=north},
    y label style={at={(axis description cs:-0.1,.5)},rotate=90,anchor=south}
]
\addplot [
    color=red, 
    domain=0:.99,
    samples=101
]{log2(x/(1-x))};
\addplot [
    color=blue, 
    domain=0:.99, 
    samples=101
]{2*x - 1};
\addplot [
thick,
color=black
] coordinates {(.99, -5) (.99, 5)};
\end{axis}
\end{tikzpicture}
\[
\begin{bmatrix}
\text{\color{red} steep }&h_1(x)&=&x\log(x)&+&(1-x)\log(1-x)&\\
\text{\color{blue} non-steep }&h_2(x)&=&\frac{1}{2}x^2&+&\frac{1}{2}(1-x)^2&
\end{bmatrix}
\]
\caption{Steep \vs non-steep regularizers (note in particular the singular behavior of the gradient at the boundary in the case of steep regularizers).}
\end{figure}

% Dichotomy figure ends here
%----------------------------------------------------------------------

%----------------------------------------------------------------------
%%% FTRL
%----------------------------------------------------------------------
% !TEX root = ./Main.tex

\begin{tcolorbox}[enhanced,width=5in, drop fuzzy shadow southwest, 
                    boxrule=0.4pt,sharp corners,colframe=yellow!80!black,colback=yellow!10]
The following lemma illustrates the relation between mixed strategies and score vectors and the mirror map \eqref{eq:choice} that defines the dynamics \eqref{eq:FTRL}.
% behavior of the players' mixed strategies under the dynamics \eqref{eq:FTRL}
%The following lemma is simply an implication of the optimality of $\point$.
We focus on the perspective of an arbitrary player, say $\play$, and for ease of notation we write $\mirror$, $\strat$ and $\dpoint$ instead of $\mirror_\play$, $\point_\play$ and $\dpoint_\play$ respectively.
The lemma begins to illustrate the gulf between the steep and non-steep cases.
\end{tcolorbox}

\begin{lemma}
\label{lem:KKT}
$\point=\mirror(\dpoint)$ if and only if there exist $\mu\in \R$ and $\nu_\pure\in\R_{+}$ such that, for all $\pure\in \pures$, we have:
\begin{enumerate*}
[\itshape a\upshape)]
\item
$\dpoint_\pure=\frac{\partial \hreg}{\partial \point_\pure}+\mu-\nu_\pure$;
and
\item
$\point_\pure \nu_\pure=0$
\end{enumerate*}
In particular, if $\hreg$ is steep, we have $\nu_\pure = 0$ for all $\pure\in\pures$.
\end{lemma}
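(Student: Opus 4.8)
The plan is to read the defining relation $\point=\mirror(\dpoint)$ as the assertion that $\point$ solves the convex program
\[
\max_{\strat\in\strats}\ \braket{\dpoint}{\strat}-\hreg(\strat),
\]
and then to identify conditions (a)--(b) as its Karush--Kuhn--Tucker (KKT) system. First I would observe that, since $\hreg$ is strongly convex, the objective is strongly concave and $\strats$ is compact and convex, so this program has a unique maximizer, which is by definition $\mirror(\dpoint)$. Writing the simplex constraints explicitly as $\insum_{\pure\in\pures}\point_\pure=1$ together with $\point_\pure\ge0$ for all $\pure\in\pures$, the feasible set is a polytope and all constraints are affine, so a constraint qualification holds automatically. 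Hence the KKT conditions are \emph{necessary} for optimality; and, the program being convex, they are also \emph{sufficient}. This twofold role of the KKT system is exactly what delivers the ``if and only if''.

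I would then form the Lagrangian for the equivalent minimization of $\phi(\strat)=\hreg(\strat)-\braket{\dpoint}{\strat}$, attaching a multiplier $\mu\in\R$ to the equality constraint $\insum_\pure\point_\pure=1$ and multipliers $\nu_\pure\ge0$ to the sign constraints $-\point_\pure\le0$. Stationarity in coordinate $\pure$ reads $\partial\hreg/\partial\point_\pure-\dpoint_\pure+\mu-\nu_\pure=0$, which is (a); complementary slackness reads $\point_\pure\nu_\pure=0$, which is (b); and dual feasibility gives $\nu_\pure\in\R_{+}$. This proves the forward implication. The converse is immediate from sufficiency: any $\point\in\strats$ admitting multipliers $\mu,\nu_\pure$ satisfying (a)--(b) is a KKT point of a convex program, hence its unique global maximizer, i.e. $\point=\mirror(\dpoint)$.

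For the steepness claim, the key point is that steepness forces the optimizer into the relative interior $\intstrats$. Concretely, suppose $\point_\pure=0$ for some $\pure$, and consider the feasible direction that shifts infinitesimal mass from some $\purealt\in\supp(\point)$ onto coordinate $\pure$. The corresponding one-sided derivative of $\phi$ at $\point$ contains the term $\partial\hreg/\partial\point_\pure$, which diverges to $-\infty$ as $\point_\pure\to0^{+}$ by steepness, while the remaining terms stay bounded; hence $\phi$ strictly decreases along this direction, contradicting the optimality of $\point$. Therefore $\point_\pure>0$ for every $\pure\in\pures$, and complementary slackness (b) then forces $\nu_\pure=0$ for all $\pure$.

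I expect the only genuinely delicate step to be this interiority argument and the attendant question of differentiability at $\bd(\strats)$. In the non-steep case the derivatives $\partial\hreg/\partial\point_\pure$ appearing in (a) must be understood through the continuous extension of $\nabla\hreg$ to $\strats$ (which exists under the standing regularity hypotheses on $\hreg$), whereas in the steep case the gradient blows up at the boundary but the optimizer lands in the interior, where $\hreg\in C^{2}$ and all partials are classical. Pinning down the blow-up of $\partial\hreg/\partial\point_\pure$ against the bounded contribution of the linear and opposing-coordinate terms is the main technical obstacle; the rest is the routine KKT bookkeeping sketched above.
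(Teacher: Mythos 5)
Your proposal is correct and follows essentially the same route as the paper: both read $\point=\mirror(\dpoint)$ as the unique solution of the concave program over the simplex, invoke the KKT conditions (necessary because the constraints are affine, sufficient by convexity) to obtain (a) from stationarity and (b) from complementary slackness, and derive the steep case from the blow-up of $\partial\hreg/\partial\point_\pure$ at the boundary. Your interiority argument for the steep case is in fact spelled out more carefully than the paper's one-line remark, but it is the same underlying idea.
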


\begin{proof}
Recall that 
\begin{align*}
\mirror(\dpoint)&= \argmax_{\point\in\points} \left\{ \braket{\dpoint}{\point} - \hreg(\point) \right\}\\
                &=\argmax \setdef*{\sum_{\pure\in\pures}\dpoint_\pure \point_\pure-\hreg(\point)}{\sum_{\pure\in\pures}\point_\pure=1 \mbox{ and } \forall \pure \in\pures:\point_\pure\geq 0 }
\end{align*}
The result follows by applying the \ac{KKT} conditions to this optimization problem and noting that, since the constraints are affine, the \ac{KKT} conditions are sufficient for optimality.
%{\color{blue}Set (i) of constraints are the stationarity constraints, which in our case are $\nabla( \sum_{\pure\in\pures}\dpoint_\pure \point_\pure - \hreg(\point) )=\mu\nabla(\sum_{\pure\in\pures} \point_\pure-1) - \sum_{\pure\in\pures} \nu_\pure\nabla \point_\pure$ , while set (ii) of constraints are the complementary slackness constraints.}
Our Langragian is   \[\mathcal{L}(\point,\mu,\nu)=( \sum_{\pure\in\pures}\dpoint_\pure \point_\pure - \hreg(\point) )-{\mu(\sum_{\pure\in\pures} \point_\pure-1)} + {\sum_{\pure\in\pures} \nu_\pure \point_\pure}\]
where the set of constraints (i) of the statement of the lemma  are the stationarity constraints, 
which in our case are $\nabla\mathcal{L}(\point,\mu,\nu)=0\Leftrightarrow
\nabla( \sum_{\pure\in\pures}\dpoint_\pure \point_\pure - \hreg(\point) )=\mu\nabla(\sum_{\pure\in\pures} \point_\pure-1) - \sum_{\pure\in\pures} \nu_\pure\nabla \point_\pure$
, while the set of constraints (ii) of the statement of the lemmas are the complementary slackness constraints.
Note that complementary slackness implies that whenever $\nu_\pure>0$ whenever $\pure\notin \supp(\point)$.
Finally, if $\hreg$ is steep, we have $\abs{\pd_{\pure}\hreg(\point)} \to \infty$ as $\point\to\bd(\points)$, which implies that the \ac{KKT} conditions admit a solution with $\nu_{\pure} = 0$.
\end{proof}

\begin{tcolorbox}[enhanced,width=5in, drop fuzzy shadow southwest, 
                    boxrule=0.4pt,sharp corners,colframe=yellow!80!black,colback=yellow!10]
The following lemma shows that if the support of $\point(t)$ does not change over a given interval of time, then the evolution of the players' mixed strategies under \eqref{eq:FTRL} follows a certain differential equation that can be calculated explicitly.
%is fixed for a time interval, then a dynamical system arises.
The lemma below also shows that the  trajectory of play coincides with the trajectory that would have resulted if the game were constrained to the strategies present in the support of $\orbit{\run}$.
%if the game had only  the strategies of that support. 
Again, for ease of notation  we focus on  player $\play$ and omit $\play$ from all subscripts.
\end{tcolorbox}

%\begin{comment}
\begin{proposition}
\label{prop:FTRL-strat}
Let $\orbit{\run} = \mirror(\dorbit{\run})$ be a mixed strategy orbit of \eqref{eq:FTRL}, and let $\runs$ be an interval over which $\supp(\orbit{\run})$ is constant.
%and equal to $\supported\subseteq\pures$.
Then, for all $\run\in\runs$, $\orbit{\run}$ satisfies the mixed strategy dynamics:
\begin{equation}
\label{eq:FTRL-strat}
\tag{\ref*{eq:FTRL}-s}
\dot\strat_{\pure}
	= \insum_{\purealt\in\supp(\strat)}
		\bracks*{\hmat_{\pure\purealt}(\strat) - \hmat_{\pure}(\strat) \hmat_{\purealt}(\strat)}\,
		\payv_{\pure}(\strat),
\end{equation}
where $\hmat_{\pure}(\strat) = \bracks*{\sum_{\purealt,\alt\purealt\in\supp(\strat)} \hmat_{\purealt\alt\purealt}(\strat)}^{-1/2} \sum_{\purealt\in\supp(\strat)} \hmat_{\pure\purealt}(\strat)$.
In particular, we have the following dichotomy:
\begin{enumerate}
\item
If $\hreg$ is steep, the dynamics \eqref{eq:FTRL-strat} are \emph{well-posed}, \ie they admit unique global solutions from any initial condition $\strat\in\strats$ \textpar{including the boundary}.
Moreover, the faces of $\strats$ are forward-invariant under \eqref{eq:FTRL-strat}:
the support of $\orbit{\run}$ remains constant for all $\run\geq0$.
\item
If $\hreg$ is non-steep, the dynamics \eqref{eq:FTRL-strat} are not well-posed:
solutions $\orbit{\run}$ to \eqref{eq:FTRL-strat} exist only up to a finite time, after which the support of $\orbit{\run}$ may change.
\end{enumerate}
\end{proposition}
%\end{comment}

%\begin{lemma}\label{lem:restr_dynamics}
%Let $\orbit{\run} = \mirror(\dorbit{\run})$ be a mixed strategy orbit of \eqref{eq:FTRL}, and let $\runs$ be an interval over which $\supp(\orbit{\run})=\pures^*$ is constant.
%%and equal to $\supported\subseteq\pures$.
%Then, for all $\run\in\runs$, $\orbit{\run}$ satisfies the mixed strategy dynamics:
%\begin{equation}%\label{eqn:restr_dynamics}
%\label{eq:FTRL-strat}\tag{\ref*{eq:FTRL}-s}
%    \dot{\point}_\pure=\sum_{\purealt\in\pures^*}\Big[ \hmat_{\pure\purealt}(\point)-\frac{1}{G(\point)}\hmat_\pure(\point)\hmat_{\purealt}(\point)  \Big]u_{\purealt}(\point),
%\end{equation}
%where
%$\hmat_{\pure\purealt}(\point)$ is the inverse of the restricted Hessian matrix $H_{\pure\purealt}(\point)=\frac{\partial^2\hreg}{\partial\point_\pure\partial\point_\purealt}$, for $\pure,\purealt\in\pures^*$,
%$\hmat_\pure(\point)=\sum_{\purealt\in\pures^*}\hmat_{\pure\purealt}(\point)$ and $G(\point)=\sum_{\pure\in\pures^*}\hmat_\pure(\point)=\sum_{\pure,\purealt\in\pures^*}\hmat_{\pure\purealt}(\point)$
%\end{lemma}

\begin{proof}
For the first part of the lemma, we follow a line of reasoning due to \citep{MS16}.
Specifically, letting $g_{\pure}(\point) = \pd_{\pure}\hreg(\point)$, \cref{lem:KKT} yields
\begin{equation}
\dpoint_\pure(t)
	= g_\pure(t)+\mu(t), \,\,\forall t\in I,\forall\pure\in\pures^*
\end{equation}
Since $\dpoint_\pure$ and $g_\pure$ are both smooth, so is  $\mu(t)$.
Thus, differentiating with respect to $t$  we get
\begin{align*}
    \dot{\dpoint}_\pure(t)&=\sum_{\purealt\in\pures}\frac{\partial^2\hreg}{\partial\point_{\purealt}\partial\point_\pure} \dot{\point}_{\purealt}(t)+\dot{\mu}(t)\\
    &=\sum_{\purealt\in\pures^*}\frac{\partial^2\hreg}{\partial\point_{\purealt}\partial\point_\pure} \dot{\point}_{\purealt}(t)+\dot{\mu}(t)
\end{align*}
since for all $t\in I$ and $\purealt \in\pures\setminus\pures^*$, $\point_{\purealt}(t)=0$, and thus $\dot{\point}_\pure(t)=0$. Multiplying with the inverse of the Hessian, and omitting $t$ for brevity, we get
\begin{equation}\label{eq:dynamics_boundary_nonsteep}\dot{\point}_\pure=\sum_{\purealt\in\pures^*}\hmat_{\pure\purealt}\dot{\dpoint}_{\purealt}+%\hmat_\pure
\sum_{\purealt\in\pures^*}\hmat_{\pure\purealt} \dot{\mu}
\end{equation}
By the definition of the dynamics, $\dot{\dpoint}_{\purealt}=\payv_{\purealt}$ and since the support remains constant $\sum_{\pure\in\pures^*}\dot{\point}_\pure=0$. Summing up \Cref{eq:dynamics_boundary_nonsteep} for $\pure\in\pures^*$ we get
\begin{equation}
\sum_{\pure,\purealt\in\pures^*}\hmat_{\pure\purealt}\payv_{\purealt}+G\dot{\mu}=0\Leftrightarrow \dot{\mu}=-\frac{\sum_{\purealt\in\pures^*}\hmat_{\purealt}\payv_{\purealt}}{G}
\end{equation}
where $G = \sum_{\purealt,\alt\purealt\in\supp(\strat)} \hmat_{\purealt\alt\purealt}(\strat)$.
Substituting the latter and $\dot{\dpoint}_{\purealt}=\payv_{\purealt}$ to \Cref{eq:dynamics_boundary_nonsteep} we get the desired result.

For the second part of the lemma, the well-posedness of \eqref{eq:FTRL-strat} follows from the fact that $\hmat(\point)$ admits a Lipschitz continuous extension to all of $\point$;
moreover, by the rank assumption, the field $\hmat_{\pure}(\point)$ is also Lipschitz continuous (since the denominator does not vanish; recall that $\im\hmat(\point) = \R^{\supp(\point)}$).
Finally, forward invariance follows by noting that, for every initial condition $\strat\in\strats$, the quantity $\sum_{\pure\in\supp(\point)}\orbit[\pure]{\run}$ is a constant of motion (identically equal to $1$), and that $\dot\point_{\pure} = 0$ whenever $\point_{\pure} = 0$.
\end{proof}

\begin{tcolorbox}[enhanced,width=5in, drop fuzzy shadow southwest, 
                    boxrule=0.4pt,sharp corners,colframe=yellow!80!black,colback=yellow!10]
A key take-away from the above result is that, in the steep case, the dynamics \eqref{eq:FTRL-strat} can be seen as a patchwork of dynamical systems, each evolving on a specific face of $\strat$, and each a continuous extension of the other at the points where they come into contact.
Neither of the above is true for the non-steep case.
%The next result unravels the polar split between steep and non-steep dynamics. We do not prove it since we do not use it. What we use for our results is  \cref{prop:forward_Invariance}, proved in  \cite[Proposition 6.1]{MerSan18} which  gives for the steep case that the support remains constant and  \eqref{eq:FTRL-strat} uniquely defines the trajectory of any initial point. %Steep regularizers only return point in the relative interior where a whereas non-steep regularizers may give points in the boundary of the relative interior
\end{tcolorbox}

\subsection{Volume preservation in $\dpoints$ and $\intstrats$}

\hspace{3cm }

\begin{tcolorbox}[enhanced,width=5in, drop fuzzy shadow southwest, 
                    boxrule=0.4pt,sharp corners,colframe=yellow!80!black,colback=yellow!10]

After restating it, we proceed to show \cref{prop:volume} by simply  applying Liouville's formula (\cref{thm:Liouville}).
\end{tcolorbox}

\volumeY*

\begin{proof}%[Proof of Proposition \ref{prop:volume}]

We have to show that the dynamics of \eqref{eq:FTRL} (i.e., $\dotdorbit{\run}
	= \payv(\mirror(\dorbit{\run}))$)  are incompressible.
For any player $\play$ and any $\pure\in\pures_\play$ we have
\begin{equation}
\label{eq:incompressible}
\frac{\pd \payv_{\play\pure}}{\pd \dpoint_{\play\pure}}
	= \sum_{\purealt\in\pures_{\play}}
		\frac{\pd  \payv_{\play\pure_{\play}}}{\pd \strat_{\play\purealt}}
		\frac{\pd \strat_{\play\purealt}}{\pd \dpoint_{\play\pure}}
	= 0,
\end{equation}
because $\payv_{\play}$ does not depend on $\strat_{\play}$.
We thus obtain $\operatorname{div}_{\dpoint} \payv(\dpoint) = 0$, i.e., the dynamics \eqref{eq:FTRL} are incompressible.
The result then follows from Liouville's formula
\end{proof}

\begin{tcolorbox}[enhanced,width=5in, drop fuzzy shadow southwest, 
                    boxrule=0.4pt,sharp corners,colframe=yellow!80!black,colback=yellow!10]

In the following lemma, we show that the flow defined by \ac{FTRL} in the interior $\intstrats$ of $\points$  is incompressible under a suitably defined measure. For that, using Liouville's formula, we first show that in the so-called $\zpoint$-space, i.e., a ``slice'' of the payoff space, the respective flow is incompressible. Using that, we can easily get a diffeomorphism from  $\intstrats$ to the $\zpoint$-space, we define the volume of a set in $\intstrats$  to be the volume of the corresponding set in the $\zpoint$-space, and thus incompresssibility in the interior  comes for free.
\end{tcolorbox}

%----------------------------------------------------------------------
% Steep balls figure begins here

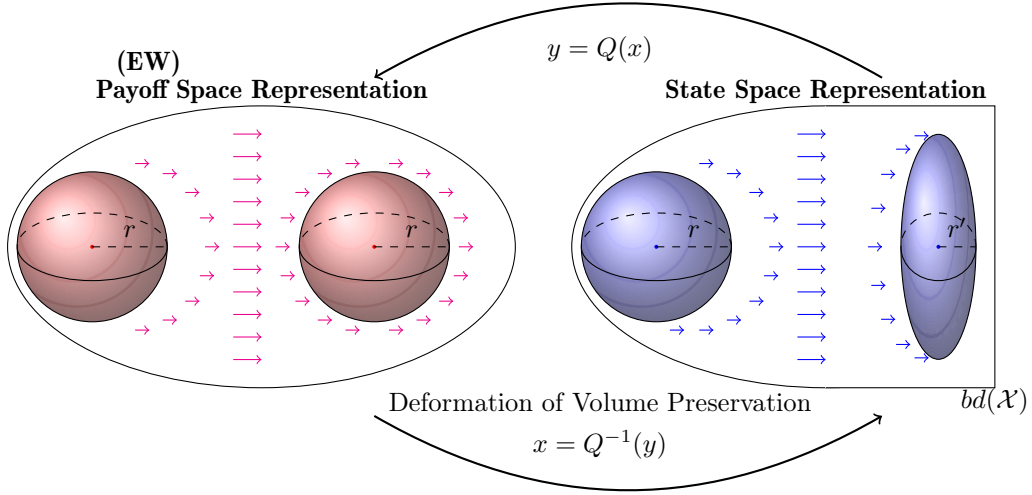
\begin{figure}[htbp]
\begin{tikzpicture}[scale=0.75]
\def\radius{1.33}
\def\centerAx{-5}
\def\centerAy{0}
\def\centerA{(\centerAx,\centerAy)}
 \fill (-4,3.25) node {\textbf{(EW)}};
  \fill (4,-2.75) node {Deformation of Volume Preservation};
 \draw (-2,2.75) node {\textbf{Payoff Space Representation}};
 \draw (-2,0) ellipse (4.5cm and 2.5cm);
  \shade[ball color = red!40, opacity = 0.4] \centerA circle ({\radius});
  \draw \centerA circle ({\radius});
  \draw ($ (-{\radius},0) + (\centerAx,\centerAy) $) arc (180:360:{\radius} and 0.6);
  \draw[dashed] ($ ({\radius},0) + (\centerAx,\centerAy) $) arc (0:180:{\radius} and 0.6);
  \fill[fill=red] \centerA circle (1pt);
  \draw[dashed] \centerA -- node[above]{$r$}  ($ ({\radius},0) + (\centerAx,\centerAy) $);
\foreach \t in {-80,-60,...,80}% generatrices
{
		\draw[magenta,->] ({1.5*cos(\t)-4.5},{1.5*sin(\t)})--({1.5*cos(\t)-4.25},{1.5*sin(\t)});
		\draw[magenta,->] ({1.5*cos(\t)},{1.5*sin(\t)})--({1.5*cos(\t)+.25},{1.5*sin(\t)});
		\draw[magenta,->] (-{1.5*cos(\t)-.25},{1.5*sin(\t)})--(-{1.5*cos(\t)},{1.5*sin(\t)});
}

\foreach \t in {-10,-8,...,10}% generatrices
{
		\draw[magenta,->] (0-2.5,{\t/5})--(0.5-2.5,{\t/5});
}
\def\centerBx{0}
\def\centerBy{0}
\def\centerB{(\centerBx,\centerBy)}
   \shade[ball color = red!40, opacity = 0.4] \centerB circle ({\radius});
  \draw \centerB circle ({\radius});
  \draw ($ (-{\radius},0) + (\centerBx,\centerBy) $) arc (180:360:{\radius} and 0.6);
  \draw[dashed] ($ ({\radius},0) + (\centerBx,\centerBy) $) arc (0:180:{\radius} and 0.6);
  \fill[fill=red] \centerB circle (1pt);
  \draw[dashed] \centerB -- node[above]{$r$}  ($ ({\radius},0) + (\centerBx,\centerBy) $);

\def\trans{10}
\def\centerCx{0+\trans/2}
\def\centerCy{0}
\def\centerC{(\centerCx,\centerCy)}
 \draw (-2+\trans,2.75) node {\textbf{State Space Representation}};
 %\draw[thick, red, -latex] (-2+\trans,0)  [partial ellipse=30:150:3cm and 2cm];
  \draw (-2+\trans,0) ellipse (4.5cm and 2.5cm);
  \filldraw[white] (\trans-2,2.5) -- (\trans*2-2,2.5) -- (\trans*2-2,-2.5) -- (\trans-2,-2.5) -- (\trans-2,2.5) ;
  \draw[black] (\trans-2,2.5) -- (\trans*1.3-2,2.5) -- (\trans*1.3-2,-2.5) -- (\trans-2,-2.5) ;
  
  \shade[ball color = blue!40, opacity = 0.4] \centerC circle ({\radius});
  \draw \centerC circle ({\radius});
  \draw ($ (-{\radius},0) + (\centerCx,\centerCy) $) arc (180:360:{\radius} and 0.6);
  \draw[dashed] ($ ({\radius},0) + (\centerCx,\centerCy) $) arc (0:180:{\radius} and 0.6);
  \fill[fill=blue] \centerC circle (1pt);
  \draw[dashed] \centerC -- node[above]{$r$}  ($ ({\radius},0) + (\centerCx,\centerCy) $);
\foreach \t in {-100,-80,...,80}% generatrices
{
		\draw[blue,->] ({1.5*cos(\t)-4.5+\trans},{1.5*sin(\t)})--({1.5*cos(\t)-4.25+\trans},{1.5*sin(\t)});
}
\foreach \t in {-80,-60,...,80}% generatrices
{
	\draw[blue,->] (-{cos(\t)-.25+\trans},{2*sin(\t)})--(-{cos(\t)+\trans},{2*sin(\t)});
}
\foreach \t in {-10,-8,...,10}% generatrices
{
		\draw[blue,->] (0-2.5+\trans,{\t/5})--(0.5-2.5+\trans,{\t/5});
}

\def\centerDx{5+\trans/2}
\def\centerDy{0}
\def\centerD{(\centerDx,\centerDy)}
  \shade[ball color = blue!40, opacity = 0.4] \centerD ellipse ({0.5*\radius} and {1.5*\radius});
  \draw  \centerD ellipse ({0.5*\radius} and {1.5*\radius});
  \draw[dashed] ($({\radius/2},0) +  (\centerDx,\centerDy) $) arc (0:180:{0.5*\radius} and 0.6);
  \draw ($ (-{\radius/2},0) + (\centerDx,\centerDy) $) arc (180:360:{0.5*\radius} and 0.6);
  \fill[fill=blue] \centerD circle (1pt);
  \draw[dashed] \centerD -- node[above]{$r'$}  ($ ({\radius/2},0) + (\centerDx,\centerDy) $);

\draw [<-,thick]  (0,3) to [out=30,in=150] (-1+\trans,3) ;%(5,2) ;
\draw [->,thick]  (0,-3) to [out=-30,in=-150] (-1+\trans,-3);%(5,-2) ;
\fill (4,3.5) node {$y=Q(x)$};
\fill (4,-3.5) node {$x=Q^{-1}(y)$};
\fill (\trans*1.3-2,-2.75) node {$bd(\mathcal{X})$};

\end{tikzpicture}
\caption{From payoffs to strategies and back, the steep case:
deformation of neighborhoods under the logit map $\mirror = \logit$.}
\end{figure}

% Steep balls figure ends here
%----------------------------------------------------------------------

\begin{lemma}\label{lem:X_volume_preserving}
There exists a measure $\mu_\point$ for which the flow in the interior of $\points$ is incompressible, i.e., for any subset $U\subset\intstrats$ of initial conditions, and any $t_0\geq 0$ so that for any $0\leq t\leq t_0$: $\flowmap(U,t)\subset\intstrats$, it is  $\mu_\point(U)=\mu_\point(\flowmap(U,t))$. %Using different language, $\vol (U) = \vol(\flowmap(U,t))$.
\end{lemma}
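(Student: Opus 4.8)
The plan is to excise the redundant directions of the score space and then transport the volume preservation of \cref{prop:volume} across this reduction. The difficulty is that the choice map $\mirror$ is not injective on the interior: by \eqref{eq:polar}, each player's polar cone at an interior point is the diagonal $\setdef{(\mu,\dotsc,\mu)}{\mu\in\R}$, so the kernel of $\mirror$ over $\intstrats$ is the $\nPlayers$-dimensional ``diagonal'' subspace $\zone \defeq \prod_{\play} \R\ones$. First I would fix a reference action $\purebench_\play\in\pures_\play$ for each player and pass to the \emph{slice} $\zpoints$ of $\scores$ transverse to $\zone$, coordinatized by $\zpoint_{\play\pure} \defeq \dpoint_{\play\pure} - \dpoint_{\play\purebench_\play}$ for $\pure\neq\purebench_\play$. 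The goal is then to show that (i)~\eqref{eq:FTRL} descends to an autonomous, incompressible flow on $\zpoints$, and (ii)~$\mirror$ induces a diffeomorphism $\wilde\mirror\from\zpoints\to\intstrats$; pulling back the Lebesgue measure of $\zpoints$ along $\wilde\mirror^{-1}$ then furnishes the desired measure $\mu_\point$.

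For (i), the starting observation is that the objective in \eqref{eq:choice} is shifted by a player-independent constant when a multiple of $\ones$ is added to $\dpoint_\play$ (because $\strat_\play$ sums to one), so $\mirror_\play(\dpoint_\play + c\ones) = \mirror_\play(\dpoint_\play)$. Hence $\strat = \mirror(\dpoint)$ depends only on the slice coordinate $\zpoint$, and we may write $\strat = \wilde\mirror(\zpoint)$. Differentiating the slice coordinates along \eqref{eq:FTRL} yields the autonomous system $\dot\zpoint_{\play\pure} = \payv_{\play\pure}(\wilde\mirror(\zpoint)) - \payv_{\play\purebench_\play}(\wilde\mirror(\zpoint))$, whose divergence is
\begin{equation*}
\operatorname{div}_{\zpoint}
	= \insum_{\play}\insum_{\pure\neq\purebench_\play}
		\frac{\pd}{\pd\zpoint_{\play\pure}}
		\bracks*{\payv_{\play\pure}(\strat) - \payv_{\play\purebench_\play}(\strat)}
	= 0,
\end{equation*}
since $\payv_\play$ depends only on $\strat_{-\play} = \mirror_{-\play}(\zpoint_{-\play})$ and hence not at all on $\zpoint_\play$ --- exactly the cancellation exploited in the proof of \cref{prop:volume}. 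By \cref{thm:Liouville}, the induced flow $\psi$ on $\zpoints$ is volume preserving.

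For (ii), I would invoke \cref{lem:KKT}: on $\intstrats$ all $\strat_\pure>0$, so $\nu_\pure=0$ and $\dpoint_\play = \nabla\hreg_\play(\strat_\play) + \mu_\play\ones$; projecting to the slice gives $\zpoint_{\play\pure} = \pd_\pure\hreg_\play(\strat_\play) - \pd_{\purebench_\play}\hreg_\play(\strat_\play)$. Strong convexity makes $\nabla\hreg_\play$ a $C^1$ diffeomorphism of $\relint\simplex(\pures_\play)$ onto its image, and composing with the (full-rank, modulo $\zone$) slice projection shows that $\zpoint = \wilde\mirror^{-1}(\strat)$ is a diffeomorphism. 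Since $\strat(\run) = \mirror(\dorbit{\run}) = \wilde\mirror(\zpoint(\run))$, the interior flow is conjugate to $\psi$, i.e.\ $\flowmap(\strat,\run) = \wilde\mirror(\psi(\wilde\mirror^{-1}(\strat),\run))$ for as long as the trajectory remains in $\intstrats$. Setting $\mu_\point(U) \defeq \vol(\wilde\mirror^{-1}(U))$, we conclude $\mu_\point(\flowmap(U,\run)) = \vol(\psi(\wilde\mirror^{-1}(U),\run)) = \vol(\wilde\mirror^{-1}(U)) = \mu_\point(U)$ by the volume preservation of $\psi$.

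The main obstacle I anticipate is the rigorous justification that \eqref{eq:FTRL} genuinely descends to a well-posed flow on $\zpoints$ and that $\wilde\mirror$ is a diffeomorphism onto \emph{all} of $\intstrats$ (rather than merely a local one), which is where strong convexity and the steepness structure of \cref{prop:FTRL-strat} enter; in the non-steep case one must additionally keep careful track of the hypothesis $\flowmap(U,\run)\subset\intstrats$, restricting every statement above to the time window on which the trajectory stays interior.
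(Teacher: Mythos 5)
Your proposal is correct and follows essentially the same route as the paper's proof: pass to the score-difference (slice) coordinates $\zpoint_{\play\pure}=\dpoint_{\play\pure}-\dpoint_{\play\purebench_\play}$, observe that $\mirror$ is invariant along the diagonal so the dynamics descend to an autonomous system there, compute that the divergence vanishes because $\payv_\play$ does not depend on $\strat_\play$, and pull back Lebesgue measure on the $\zpoint$-space through the induced bijection with $\intstrats$ (via the \ac{KKT} characterization $\dpoint=\nabla\hreg(\strat)+\mu\ones$ on the interior). The only difference is one of emphasis --- you flag the diffeomorphism property as the delicate point, which the paper likewise asserts via \cref{lem:KKT} --- so there is nothing substantive to add.
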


\begin{proof}
First we go on to define the $\zpoint$-space. The intuition for defining and using the $\zpoint$-space  can be based on \cref{lem:KKT} which implies that %which gives a one to many correspondence of points in $\intstrats$ to points in the $\dpoint$-space. 
for any $\point\in\intstrats$,  any two corresponding points $\dpoint,\dpoint'$   in the $\dpoint$-space differ by a constant, since for all $\play$ and $\pure_j\in\pures_\play$, $\dpoint_{\play\pure_j}=\frac{\partial \hreg}{\partial \point_{\play\pure_j}}+\mu_\play$ and $\dpoint'_{\play\pure_j}=\frac{\partial \hreg}{\partial \point_{\play\pure_j}}+\mu_\play'$ for some $\mu_\play$ and $\mu_\play'$ (recall $\point\in\intstrats$ implies $\nu_{\play\pure_j}=0$). Thus, all $\dpoint$'s that correspond to an $\point\in\intstrats$ form an equivalent class.  For each class, we pick as representative the $\dpoint$ in the class that has $0$ in some specific  coordinate $\purebench_\play$, for every player $\play$. The set of representatives form the $\zpoint$-space and there is a a one to one correspondence of points of $\intstrats$ to points in the $\zpoint$-space which moreover can be used to define an incompressible flow in  $\intstrats$.

%----------------------------------------------------------------------
% XYZ figure begins here

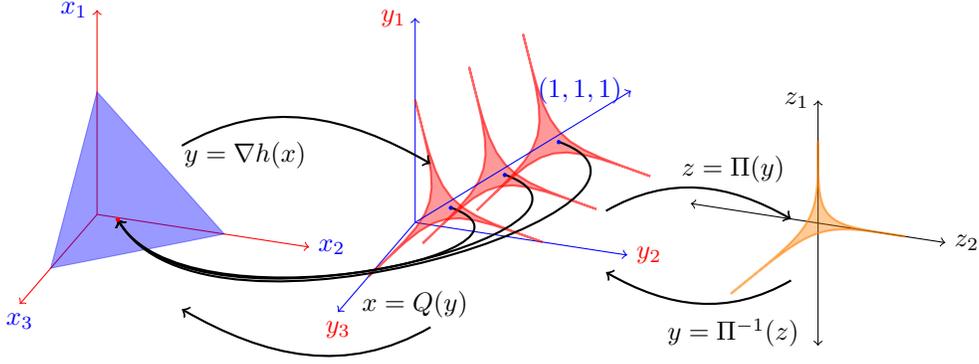
\begin{figure}[htbp]
\tdplotsetmaincoords{65}{110}
\begin{tikzpicture}[tdplot_main_coords,scale=0.6]
  \def\laxis{5}
  \def\ltriangle{3}
  \begin{scope}[->,red]
    \draw (0,0,0) -- (\laxis,0,0) node [below] {\textcolor{blue}{$x_3$}};
    \draw (0,0,0) -- (0,\laxis,0) node [right] {\textcolor{blue}{$x_2$}};
    \draw (0,0,0) -- (0,0,\laxis) node [left] {\textcolor{blue}{$x_1$}};
  \end{scope}
  \filldraw [opacity=.4,blue] (\ltriangle,0,0) -- (0,\ltriangle,0) --
  (0,0,\ltriangle) -- cycle;
  
  \def\transferx{0}
  \def\transfery{7.5}
  \def\transferz{1}

  \def\laxis{5}
  \def\ltriangle{1.5}
  \begin{scope}[->,blue]
    \draw (0+\transferx,0+\transfery,0+\transferz) -- (\laxis+\transferx,0+\transfery,0+\transferz) node [below]{\textcolor{red}{$y_3$}};
    \draw (0+\transferx,0+\transfery,0+\transferz) -- (0+\transferx,\laxis+\transfery,0+\transferz) node [right] {\textcolor{red}{$y_2$}};
    \draw (0+\transferx,0+\transfery,0+\transferz) -- (0+\transferx,0+\transfery,\laxis+\transferz) node [left] {\textcolor{red}{$y_1$}};
  \end{scope}

\def\ltriangle{3}
\foreach \k in {1,3,5}%{4.5,3,1.5,0,0.33}
{

  \def\transferx{0+\k}
  \def\transfery{7.5+\k}
  \def\transferz{1+\k}
\filldraw [thick,draw opacity=.6,fill opacity=.4,red] 
    (  {\ltriangle*(0^3-(0^3+(1-0)^3)/3 ) +\transferx}, 
            {\ltriangle*((1-0)^3 -(0^3+(1-0)^3)/3) +\transfery} , 
            {\ltriangle*(0-(0^3+(1-0)^3)/3) +\transferz} )  
    \foreach \t in {0,0.1,...,1.1}  
    { -- (  {\ltriangle*(\t^3-(\t^3+(1-\t)^3)/3 ) +\transferx}, 
            {\ltriangle*((1-\t)^3 -(\t^3+(1-\t)^3)/3) +\transfery} , 
            {\ltriangle*(0-(\t^3+(1-\t)^3)/3) +\transferz} ) 
    }
    \foreach \t in {0,0.1,...,1.1}  
    { -- (  {\ltriangle*((1-\t)^3-(\t^3+(1-\t)^3)/3 ) +\transferx}, 
            {\ltriangle*(0 -(\t^3+(1-\t)^3)/3) +\transfery} , 
            {\ltriangle*(\t^3-(\t^3+(1-\t)^3)/3) +\transferz} ) 
    }
    \foreach \t in {0,0.1,...,1.1}  
    { -- (  {\ltriangle*(0-(\t^3+(1-\t)^3)/3 ) +\transferx}, 
            {\ltriangle*(\t^3 -(\t^3+(1-\t)^3)/3) +\transfery} , 
            {\ltriangle*((1-\t)^3-(\t^3+(1-\t)^3)/3) +\transferz} ) 
    };

\draw [->,thick] (   {(0.3)^3+\transferx}  ,  {(0.6)^3+\transfery} ,{( 0.2 )^3 +\transferz })to [out=-20,in=-60] (0.3,0.6,0.1) ;
  \filldraw[blue]  (   {(0.3)^3+\transferx}               ,      {(0.6)^3+\transfery}             ,{( 0.1 )^3 +\transferz }) circle (1pt);
  \filldraw[red] (0.3,0.6,0.1) circle (1pt);
    
}
\draw[->,blue] (0+\transferx,0+\transfery,0+\transferz) -- (8+\transferx,8+\transfery,8+\transferz) node [left] {\textcolor{blue}{$(1,1,1)$}};

\draw [->,thick]  (0,2,2) to [out=30,in=150] (-1,+\transfery,2) ;%(5,2) ;
\draw [<-,thick]  (0,2,-2) to [out=-30,in=-150] (-1,\transfery,-2);%(5,-2) ;
\fill (0,3.5,2) node {$y=\nabla h(x)$};
\fill (0,\transfery,-1) node {$x=Q(y)$};

  \def\transferx{0}
  \def\transfery{17}
  \def\transferz{2.5}
   \def\laxis{3}
  \def\ltriangle{2}
  \begin{scope}[->,black]
    \draw (0+\transferx,0+\transfery,0+\transferz) -- (0+\transferx,\laxis+\transfery,0+\transferz) node [right] {\textcolor{black}{$z_2$}};
    \draw (0+\transferx,0+\transfery,0+\transferz) -- (0+\transferx,0+\transfery,\laxis+\transferz) node [left] {\textcolor{black}{$z_1$}};
   \draw (0+\transferx,0+\transfery,0+\transferz) -- (0+\transferx,-\laxis+\transfery,0+\transferz) node [right] {\textcolor{black}{}};
    \draw (0+\transferx,0+\transfery,0+\transferz) -- (0+\transferx,0+\transfery,-\laxis+\transferz) node [left] {\textcolor{black}{}};
  \end{scope}

\draw [->,thick]  (0,12,2) to [out=30,in=150] (-1,\transfery-1,2) ;%(5,2) ;
\draw [<-,thick]  (0,12,0.5) to [out=-30,in=-150] (-1,\transfery-1,0.5);%(5,-2) ;
\fill (0,15,3.5) node {$z=\Pi(y)$};
\fill (0,15,-0.5) node {$y=\Pi^{-1}(z)$};
  
\filldraw [thick,draw opacity=.6, fill opacity=.4, orange]  
(\transferx, { \ltriangle*(0^3-(1-0-(1-0))^3) +\transfery} , { \ltriangle*((1-0)^3-(1-0-(1-0))^3) +  +\transferz} ) 
\foreach \t in {0,0.01,...,1.01}
{ 
    --(\transferx, { \ltriangle*(\t^3-(1-\t-(1-\t))^3) +\transfery} , { \ltriangle*((1-\t)^3-(1-\t-(1-\t))^3) +  +\transferz} ) 
}
\foreach \t in {0,0.01,...,1.01}
{ 
    --(\transferx, { \ltriangle*((1-\t)^3-(1-(1-\t))^3) +\transfery} , { \ltriangle*(-(1-(1-\t))^3) +  +\transferz} ) 
}
\foreach \t in {0,0.01,...,1.0}
{ 
    --(\transferx, { \ltriangle*(-(1-\t)^3) +\transfery} , { \ltriangle*(\t^3-(1-\t)^3) +  +\transferz} ) 
}
;
\end{tikzpicture}
\caption{The three different representation spaces of \eqref{eq:FTRL}:
primal (strategies, left);
dual (payoffs/scores, center);
dual quotient (payoff/score differences, right).}
\end{figure}

% XYZ figure ends here
%----------------------------------------------------------------------

So, for a \emph{benchmark} strategy $\purebench_\play\in\pures_{\play}$ for every player $\play\in\players$ and for all $\pure \in \pures_{\play} \exclude{\purebench_{\play}}\equiv\puresbench_\play$ consider the corresponding score differences
\begin{equation}
\label{eq:score-z}
\zpoint_{\play\pure}
	= \dpoint_{\play\pure} - \dpoint_{\play\purebench_\play}.%
\end{equation}
Obviously, $\zpoint_{\play\pure}
	= \dpoint_{\play\pure_\play} - \dpoint_{\play\purebench_\play}$ is identically zero so we can ignore it in the above definition.
In so doing, we obtain a linear map $\Pi_{\play}\from\R^{\pures_{\play}} \to \R^{\puresbench_{\play}}$ sending $\dpoint_{\play} \mapsto \zpoint_{\play}$;
aggregating over all players, we also write $\Pi$ for the product map $\Pi = (\Pi_{1},\dotsc,\Pi_{\nPlayers})$ sending $\dpoint\mapsto \zpoint$.
For posterity, note that this map is surjective but \emph{not} injective,%
\footnote{Specifically, $\Pi_{\play}(\dpoint_{\play}) = \Pi_{\play}(\dpoint_{\play}')$ if and only if $\dpoint_{\play\pure_{\play}}' = \dpoint_{\play\pure_{\play}} + c$ for some $c\in\R$ and all $\pure_{\play}\in\pures_{\play}$.}
so it does not allow us to recover the score vector $\dpoint$ from the score difference vector $\zpoint$.

Now, under \ac{FTRL}, the score differences \eqref{eq:score-z} evolve as
\begin{equation}
\label{eq:dyn-z}
\dot \zpoint_{\play\pure}
	= \payv_{\play\pure}(\strat(t)) - \payv_{\play\purebench_{\play}}(\strat(t)).
\end{equation}
%However, since the \ac{RHS} of \eqref{eq:dyn-z} depends on $x = \mirror(y)$ and the mapping $y\mapsto z$ is not invertible (so $y$ cannot be expressed as a function of $z$), the above does not a priori constitute an autonomous dynamical system (as required to apply Poincaré's recurrence theorem).
Our first step below is to show that \eqref{eq:dyn-z} %does in fact 
constitutes a well-defined dynamical system on $\zpoint$ as long as the correpsonding $\point$'s remain in $\intstrats$.

To do so, consider the reduced mirror map $\hat\mirror_{\play}\from\R^{\puresbench_{\play}} \to \strats_{\play}$ defined as
\begin{equation}
\label{eq:choice-z}
\hat\mirror_{\play}(\zpoint_{\play})
	= \mirror_{\play}(\dpoint_{\play})
\end{equation}
for some $\dpoint_{\play}\in\R^{\pures_{\play}}$ such that $\Pi_{\play}(\dpoint_{\play}) = \zpoint_{\play}$.
That such a $\dpoint_{\play}$ exists is a consequence of $\Pi_{\play}$ being surjective;
furthemore, that $\hat\mirror_{\play}(\zpoint_{\play})$ is well-defined is a consequence of the fact that $\mirror_{\play}$ is invariant on the fibers of $\Pi_{\play}$.
Indeed, by construction, and as long as the corresponding $\point$'s remain in $\intstrats$ we have $\Pi_{\play}(\dpoint_{\play}) = \Pi_{\play}(\dpoint_{\play}')$ if and only if $\dpoint_{\play\pure}' = \dpoint_{\play\pure} + c$ for some $c\in\R$ and all $\pure\in\pures_{\play}$.
Hence, by the definition of $\mirror_{\play}$, we get
\begin{flalign}
\mirror_{\play}(\dpoint_{\play}')
%	&= \argmax_{x_{\play}\in\strats_{\play}} \braces{\braket{y_{\play}'}{x_{\play}} - h_{\play}(x_{\play})}
%	\notag\\
%	\txs
	&\txs
	= \argmax\limits_{\strat_{\play}\in\strats_{\play}} \braces*{\braket{\dpoint_{\play}}{\strat_{\play}} + c \sum_{\pure\in\pures_{\play}} \strat_{\play\pure} - \hreg_{\play}(\strat_{\play})}
	\notag\\
	&= \argmax_{\strat_{\play}\in\strats_{\play}} \braces{\braket{\dpoint_{\play}}{\strat_{\play}} - \hreg_{\play}(\strat_{\play})}
%	\notag\\
	= \mirror_{\play}(\dpoint_{\play}),
\end{flalign}
where we used the fact that $\sum_{\pure\in\pures_{\play}} \strat_{\play\pure} = 1$.
The above shows that $\mirror_{\play}(\dpoint_{\play}') = \mirror_{\play}(\dpoint_{\play})$ if and only if $\Pi_{\play}(\dpoint_{\play}) = \Pi_{\play}(\dpoint_{\play}')$, so $\hat\mirror_{\play}$ is well-defined.
%{\color{red}The if and only if here does not look correct. On the boundary many z-points map to the same x-point. In the proof it is not clear why the distinction between interior and boundary points is important. We need to invoke \cref{lem:KKT} directly to make it clear.}
Letting $\hat\mirror \equiv (\hat\mirror_{1},\dotsc,\hat\mirror_{\nPlayers})$ denote the aggregation of the players' individual mirror maps $\hat\mirror_{\play}$, it follows immediately that $\mirror(y) = \hat\mirror(\Pi(y)) = \hat\mirror(z)$ by construction.

Hence, the dynamics \eqref{eq:dyn-z} may be written as
\begin{equation}
\label{eq:FRL-z}
\dot \zpoint
	= V(\zpoint),
\end{equation}
where
\begin{equation}
\nu_{\play\pure}(\zpoint)
	= \payv_{\play\pure}(\hat\mirror_{\play}(\zpoint)) - \payv_{\play\purebench_{\play}}(\hat\mirror_{\play}(\zpoint)).
\end{equation}
These dynamics obviously constitute an autonomous system. %, so our goal will be to use Liouville's formula and Poincaré's theorem in order to establish recurrence and then conclude that the induced trajectory of play $\strat(t)$ is recurrent by leveraging the properties of $\hat\mirror$.

%As a first step towards applying Liouville's formula, 
Next, we show  incompressibiity of the $\zpoint$-space. % dynamics \eqref{eq:FRL-z} are \emph{incompressible}.
Indeed, for all $\pure\in\pures$ we have
\begin{equation}
\label{eq:incompressible}
\frac{\pd \nu_{\play\pure}}{\pd \zpoint_{\play\pure}}
	= \sum_{\purealt\in\pures_{\play}}
		\frac{\pd  \nu_{\play\pure_{\play}}}{\pd \strat_{\play\purealt}}
		\frac{\pd \strat_{\play\purealt}}{\pd \zpoint_{\play\pure}}
	= 0,
\end{equation}
because $\payv_{\play}$ does not depend on $\strat_{\play}$.
We thus obtain $\nabla_{\zpoint} \cdot V(\zpoint) = 0$, i.e., the dynamics \eqref{eq:FRL-z} are incompressible.

For the last step, for a set $A\subset\intstrats$ define $\mu_\point(A):=\mu_\zpoint(\mirror^{-1}(A))$,  where   $\mu_\zpoint$ is the Lebesgue measure in the $\zpoint$-space. Then for any $U\subset\intstrats$, as long as $\flowmap(U,t)$ remains in $\intstrats$, %by \cref{lem:X_volume_preserving} 
it is 
$$\mu_\point(U)=\mu_\zpoint(\mirror^{-1}(U))=\mu_\zpoint(\mirror^{-1}(\flowmap(U,t)))=\mu_\point(\flowmap(U,t))$$ 
as needed.
\end{proof}

%----------------------------------------------------------------------
%%% APP: DICHOTOMY
%----------------------------------------------------------------------
%\section{On the Steep vs Non-Steep Dichotomy}
%\label{app:dichotomy}
%\input{App-FundDichotomy}

%----------------------------------------------------------------------
%%% APP: INTERIOR
%----------------------------------------------------------------------
\section{Proof of \cref{thm:unstable-int}}
\label{app:interior}
%----------------------------------------------------------------------
%%% APP: INTERIOR
%----------------------------------------------------------------------
% !TEX root = ./Main.tex

%[PRELIMINARIES] $\hat \mirror $, z-space , stability for sets, $\flowmap$.

\begin{tcolorbox}[enhanced,width=5in, drop fuzzy shadow southwest, 
                    boxrule=0.4pt,sharp corners,colframe=yellow!80!black,colback=yellow!10]

Below we  show that there are no asymptotically stable sets (or points) in $\intstrats$. Indeed, if this were the case, there would be a full-measure set of initial conditions outside the asymptotically stable set $A^*$ that converges to $A^*$, while at the same time its trajectories are bounded (by stability).
This contradicts Poincaré's recurrence theorem, because the flow in $\intstrats$ is volume-preserving by \cref{lem:X_volume_preserving}.
\end{tcolorbox}

\begin{theorem}
\label{thm:No_Asset_Interior}
Let $A^*$ be a closed set of $\intstrats$. Then $A^*$ is not asymptotically stable under \ac{FTRL}.
\end{theorem}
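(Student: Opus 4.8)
The plan is to reach a contradiction from the incompatibility, on a set of positive measure, between global attraction to $A^*$ and Poincaré recurrence. Suppose toward a contradiction that $A^*$ is asymptotically stable. Using stability I would first fix a compact neighborhood $K \subset \intstrats$ of $A^*$ and an open neighborhood $U$ with $A^* \subseteq U \subseteq K$ such that every \ac{FTRL} orbit issuing from $U$ remains inside $K$ for all $\run \geq 0$ and converges to $A^*$. Choosing $K$ to be a compact subset of the \emph{open} set $\intstrats$ is what keeps these orbits bounded away from $\bd(\strats)$, so that they never leave $\intstrats$ in finite time; this is precisely what upgrades the conditional volume-preservation of \cref{lem:X_volume_preserving} (valid only while $\flowmap(U,\run)\subset\intstrats$) to a statement holding for all time.

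Next I would pass to the $\zpoint$-space used in the proof of \cref{lem:X_volume_preserving}. Since all relevant orbits stay in $\intstrats$, \cref{lem:KKT} forces $\nu_\pure = 0$, so the reduced mirror map $\hat\mirror$ restricts to a diffeomorphism onto $\intstrats$, under which the incompressible flow $\dot\zpoint = V(\zpoint)$ on the $\zpoint$-space corresponds to the \ac{FTRL} flow on $\intstrats$. Transporting $U$, $K$, $A^*$ through $\hat\mirror^{-1}$ gives an open set $\tilde U$, a compact set $\tilde K$, and a closed set $\tilde A = \hat\mirror^{-1}(A^*)$ with $\tilde A \subseteq \tilde U \subseteq \tilde K$, such that the $\zpoint$-orbits from $\tilde U$ remain in the bounded set $\tilde K$ for all time and converge to $\tilde A$. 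Because $U$ is a nonempty open subset of $\intstrats$, the set $\tilde U$ has positive $\mu_\zpoint$-measure.

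With incompressibility and boundedness secured, I would invoke the Poincaré recurrence theorem (\cref{thm:poincare_reccurrence}) for the flow of $V$ on $\tilde U$: almost every point of $\tilde U$ is recurrent. Recurrence, however, is incompatible with convergence to $\tilde A$ from outside $\tilde A$: if $\tilde\zpoint \in \tilde U \setminus \tilde A$, then $\delta = \dist(\tilde\zpoint, \tilde A) > 0$ since $\tilde A$ is closed, whereas recurrence yields times $\run_n \uparrow \infty$ with $\flow{\run_n}{\tilde\zpoint} \to \tilde\zpoint$, so $\dist(\flow{\run_n}{\tilde\zpoint}, \tilde A) \to \delta > 0$, contradicting attraction to $\tilde A$. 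Hence every recurrent point of $\tilde U$ lies in $\tilde A$, so $\tilde A$ has full measure in $\tilde U$; that is, $\tilde U \setminus \tilde A$ is $\mu_\zpoint$-null. But $\tilde U \setminus \tilde A$ is open (as $\tilde U$ is open and $\tilde A$ closed), and a null open set is empty, so $\tilde U \subseteq \tilde A$ and thus $A^* \supseteq U$ is open. Being also closed in the connected (convex) set $\intstrats$, this forces $A^* = \intstrats$, which is impossible for an asymptotically stable set, since $\intstrats$ is not closed in $\strats$ and attracts nothing from its exterior. This contradiction shows $A^*$ cannot be asymptotically stable.

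The crux I expect is the reduction in the first two paragraphs, which must simultaneously deliver both hypotheses of \cref{thm:poincare_reccurrence}. Volume preservation holds neither in the original space $\strats$ under Euclidean volume nor unconditionally in the interior, but only through the $\zpoint$-space and only while trajectories stay in $\intstrats$; correspondingly, ``bounded orbits'' must be read in $\zpoint$-coordinates, where $\hat\mirror^{-1}$ carries a compact interior set to a bounded set. Asymptotic stability is exactly the ingredient that confines orbits to a compact subset of the interior for all time, bridging the conditional statement of \cref{lem:X_volume_preserving} and the unconditional hypotheses of the recurrence theorem. Once this confinement is in place, the recurrence-versus-convergence contradiction and the open-null-set argument are routine.
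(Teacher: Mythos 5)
Your proposal is correct and follows essentially the same route as the paper's proof: use stability to confine orbits to a compact subset of $\intstrats$, transfer to the volume-preserving $\zpoint$-space of \cref{lem:X_volume_preserving}, and play Poincaré recurrence against attraction to $A^*$. The only cosmetic difference is the endgame: the paper localizes to a small ball bounded away from $A^*$ and derives the contradiction there, whereas you show the set of non-recurrent points of $\tilde U$ is open and null (hence empty) and finish with a connectedness argument; both are valid.
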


\begin{proof}
To reach a contradiction, let $A^*$  be an asymptotically stable set, \ie attracting and Lyapunov stable, belonging in $\intstrats$. 
Since $A^*$ is attracting, there exists a neighborhood $U$ of $A^*$ all points of which converge to $A^*$.
Without loss of generality, since $A^*$ is closed, we may assume that $U$ lies in  $\intstrats$, and its closure is disjoint from the boundary of $\strats$.

Now, since $A^*$ is Lyapunov stable, there exists some neighborhood $U_0$ of $A^*$ so that whenever $\point(0)\in U_0$, $\point(t)\in U$.  Pick some $\point_0\in U_0\setminus A^*$.
%Let $t^*$ be some finite time for which, starting at $\point_0(0)=\point_0$, $\point_0(t^*)$ lies inside $U_0\setminus A^*$. There must be such a $t^*$  since  $A^*$ is attracting. By continuity 
Since $A^*$ is closed and $U_0$ is open, there is a small enough neighborhood $E$ of $\point_0$ so that  all  points of $E$ %after time $t^*$ 
lie inside $U_0\setminus A^*$. %, \ie $\flowmap(E,t^*)\subseteq U_0$.
%\TL{It could be that the ball goes to the boundary and loses its volume right? So we have to find a ball that stays in the interior. Do we need to bother?}
By Lyapunov stability, %the latter implies that 
for all $t\geq 0$,  $\flowmap(E,t)\subseteq U$. But then  the set $E_\infty = \cup_{t\geq 0} \flowmap( E,t)$ is bounded, having positive measure that does not change over time %, invariant (wrt to  its volume, 
(\Cref{lem:X_volume_preserving}). 
%\LF{Maybe here we are mixing forward time invariance and measure invariance. Both are needed here. Just noting in case this overloading happens elsewhere.}
Therefore it is a Poincaré recurrent set. But this means that all but a measure zero set of initializations in %$\flowmap( E,t^*)$
$E$ lead to recurrent trajectories that return infinitely often to %$\flowmap( E,t^*)$. 
$E$. Picking $E$ %small enough so that $\flowmap( E,t^*)$ is 
to be bounded away from $A^*$ (which is a closed set) we conclude that there are points in $E$ (and thus $U$) that do not converge to  $A^*$, a contradiction.
\end{proof}

%----------------------------------------------------------------------
% Recurrence figure begins here

\begin{figure}[tbp]
\begin{tikzpicture}[scale=0.7]
    \coordinate (zstar) at (-2, 0);
    \coordinate (zstar2) at (-1.2, -0.5);
    \coordinate (z0) at (2, 0);
    \draw [red,thick,domain=10:149, ->] plot ({-1.5 + 3.5 * cos(\x)}, {0.5+3 * sin(\x)});
    \draw [red,thick,domain=200:320, ->] plot ({-1.2 + 3.5 * cos(\x)}, {1+3.5 * sin(\x)});
    \draw (0,0) ellipse (7 and 4) ;
    \draw (0,4) node[above] {$U$};
    \draw[dashdotted] (-1,-0.2) ellipse (5 and 3) ;
    \draw (-1,2.8) node[above] {$U_0$};
    \draw[color = white , name path=Dz] (zstar) circle (2.4);
    \draw (zstar) circle (1);
    \draw (zstar) node[above] {$A^*$};
    \path [draw, ->, bend left, name path=A--B] (z0) to (zstar2);
    \path [name intersections={of=A--B and Dz, by=E}];
    \filldraw (z0) circle (1pt);
    \draw (2,0.1) node[above] {$x_0$};
    \filldraw[blue,  opacity=0.2] (z0) circle (1.2);
    \draw (2,-1.9) node[above] {$E$};
    \path let \p1 = (E)  in coordinate (Eprime) at (-140 -\x1,-\y1);
    \filldraw[blue,  opacity=0.2] (Eprime) circle (1.2);
    \draw (Eprime) node[above] {$\Phi (E, t)$};
    \node[draw] at (-0.9, 1.9) {Poincar\'e Reccurence};
\end{tikzpicture}
\caption{The various sets in the proof of \cref{thm:No_Asset_Interior}.}
\end{figure}
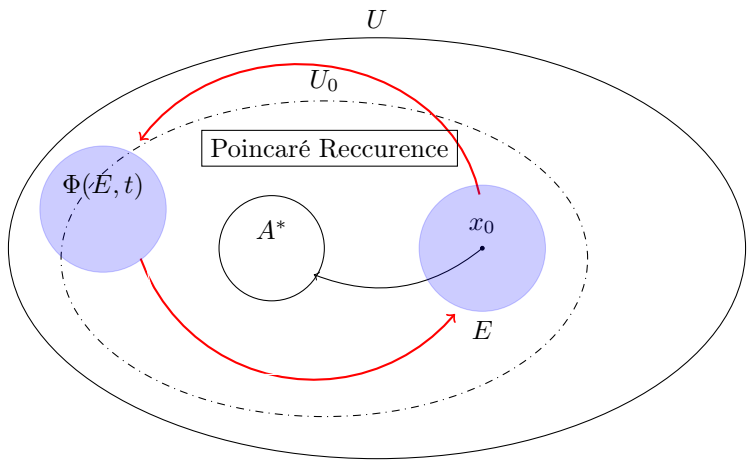

% Recurrence figure ends here
%----------------------------------------------------------------------

Now, given that any singleton set $\{\strat\}$, $\strat\in\strats$, is closed, the above yields:

\begin{theorem}
\label{thm:No_Aspoint_Interior}
There are no asymptotically stable points in $\intstrats$ 
\end{theorem}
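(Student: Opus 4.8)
The plan is to obtain this as an immediate specialization of \cref{thm:No_Asset_Interior}, so essentially no new work is required. First I would unwind the definitions of stability and attraction for a single point and compare them with the setwise versions: a point $\strat \in \strats$ is asymptotically stable under \eqref{eq:FTRL} exactly when the singleton $\{\strat\}$ is asymptotically stable as a set. In both formulations the requirement is that every orbit starting sufficiently close stays close and that all nearby orbits converge to $\strat$; for a one-point set ``converging to $\{\strat\}$'' is literally ``converging to $\strat$.'' This equivalence between the pointwise and setwise notions is the only thing that genuinely needs to be verified, and it is immediate.

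Next I would observe that any singleton $\{\strat\}$ is closed in $\strats$, and that if $\strat \in \intstrats$ then $\{\strat\}$ is a closed subset of $\intstrats$. Thus $A^{*} = \{\strat\}$ satisfies the hypotheses of \cref{thm:No_Asset_Interior}, which then certifies that $\{\strat\}$ cannot be asymptotically stable under \eqref{eq:FTRL}. Invoking the equivalence from the previous step, the point $\strat$ itself cannot be an asymptotically stable point. Since $\strat \in \intstrats$ was arbitrary, this establishes that there are no asymptotically stable points in $\intstrats$, which is the claim.

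I do not expect any real obstacle here: all of the substantive content — the construction of the $\zpoint$-space, the incompressibility of the induced flow and the resulting invariant measure $\mu_{\strat}$ (\cref{lem:X_volume_preserving}), and the Poincaré-recurrence argument that contradicts attraction — is already discharged inside the proof of \cref{thm:No_Asset_Interior}. The present statement is therefore a direct corollary, and the proof amounts only to the bookkeeping of recognizing a point as a closed singleton set and matching up the two notions of asymptotic stability.
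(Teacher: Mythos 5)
Your proposal matches the paper's own treatment exactly: the paper derives \cref{thm:No_Aspoint_Interior} as an immediate consequence of \cref{thm:No_Asset_Interior} by observing that any singleton $\{\strat\}$ with $\strat\in\intstrats$ is a closed subset of $\intstrats$, and your additional remark that pointwise and setwise asymptotic stability coincide for singletons is the same (implicit) bookkeeping step. Correct, and no gap.
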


\Cref{thm:unstable-int} (restated below) then follows as a corollary.

\stableInt*

%----------------------------------------------------------------------
%%% APP: NON-STEEP
%----------------------------------------------------------------------
\section{Proof of \cref{thm:unstable-mixed}: the non-steep case}
\label{app:nonsteep}
%----------------------------------------------------------------------
%%% APP: NON-STEEP
%----------------------------------------------------------------------
% !TEX root = ./Main.tex

Our goal in this appendix is to provide the proof of \cref{thm:unstable-mixed}, which we restate below for convenience:

\stableMixed*

Because of the fundamental dichotomy between steep and non-steep \ac{FTRL} dynamics, we will break the proof in two cases, treating here the non-steep regime;
the steep case will be proved in \cref{app:steep} as a consequence of a more general result.
The fundamental distinction between the two cases is that, in the non-steep regime, the mixed-strategy dynamics of \eqref{eq:FTRL} could change support infinitely many times, which means that the type of volume-preservation arguments employed in the previous section cannot work (because the corresponding preimages in the $z$-space could have infinite volume; see below for a graphical illustration).
However, as we show below, this ``change of support'' is a blessing in disguise:
if $\eq$ is asymptotically stable, nearby trajectories will end up employing only those strategies present in $\eq$ in finite time.

%----------------------------------------------------------------------
% Non-steep balls figure begins here

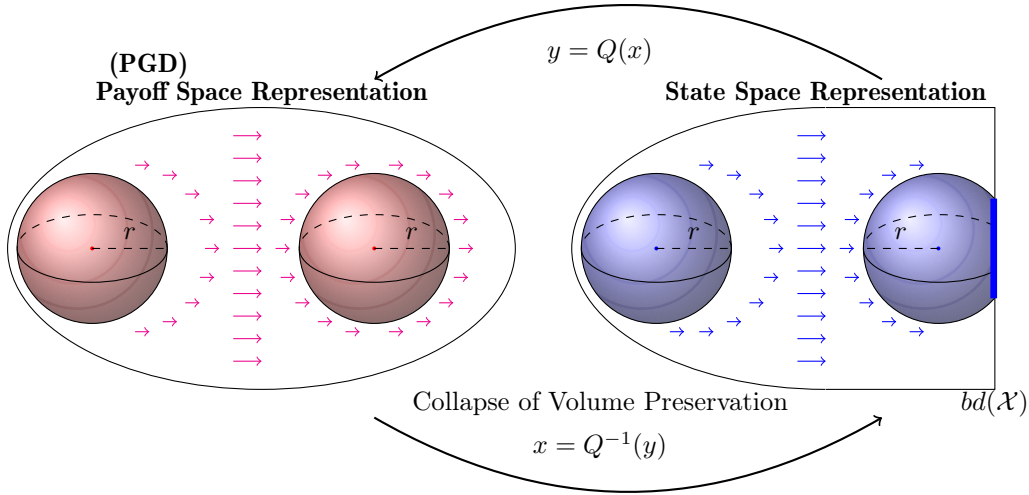
\begin{figure}[htbp]
\begin{tikzpicture}[scale=0.75]
\def\radius{1.33}
\def\centerAx{-5}
\def\centerAy{0}
\def\centerA{(\centerAx,\centerAy)}
 \fill (-4,3.25) node {\textbf{(PGD)}};
 \fill (4,-2.75) node {Collapse of Volume Preservation};
 \draw (-2,2.75) node {\textbf{Payoff Space Representation}};
 \draw (-2,0) ellipse (4.5cm and 2.5cm);
  \shade[ball color = red!40, opacity = 0.4] \centerA circle ({\radius});
  \draw \centerA circle ({\radius});
  \draw ($ (-{\radius},0) + (\centerAx,\centerAy) $) arc (180:360:{\radius} and 0.6);
  \draw[dashed] ($ ({\radius},0) + (\centerAx,\centerAy) $) arc (0:180:{\radius} and 0.6);
  \fill[fill=red] \centerA circle (1pt);
  \draw[dashed] \centerA -- node[above]{$r$}  ($ ({\radius},0) + (\centerAx,\centerAy) $);
\foreach \t in {-80,-60,...,80}% generatrices
{
		\draw[magenta,->] ({1.5*cos(\t)-4.5},{1.5*sin(\t)})--({1.5*cos(\t)-4.25},{1.5*sin(\t)});
		\draw[magenta,->] ({1.5*cos(\t)},{1.5*sin(\t)})--({1.5*cos(\t)+.25},{1.5*sin(\t)});
		\draw[magenta,->] (-{1.5*cos(\t)-.25},{1.5*sin(\t)})--(-{1.5*cos(\t)},{1.5*sin(\t)});
}

\foreach \t in {-10,-8,...,10}% generatrices
{
		\draw[magenta,->] (0-2.5,{\t/5})--(0.5-2.5,{\t/5});
}
\def\centerBx{0}
\def\centerBy{0}
\def\centerB{(\centerBx,\centerBy)}
   \shade[ball color = red!40, opacity = 0.4] \centerB circle ({\radius});
  \draw \centerB circle ({\radius});
  \draw ($ (-{\radius},0) + (\centerBx,\centerBy) $) arc (180:360:{\radius} and 0.6);
  \draw[dashed] ($ ({\radius},0) + (\centerBx,\centerBy) $) arc (0:180:{\radius} and 0.6);
  \fill[fill=red] \centerB circle (1pt);
  \draw[dashed] \centerB -- node[above]{$r$}  ($ ({\radius},0) + (\centerBx,\centerBy) $);

\def\trans{10}
\def\centerCx{0+\trans/2}
\def\centerCy{0}
\def\centerC{(\centerCx,\centerCy)}
 \draw (-2+\trans,2.75) node {\textbf{State Space Representation}};
 %\draw[thick, red, -latex] (-2+\trans,0)  [partial ellipse=30:150:3cm and 2cm];
  \draw (-2+\trans,0) ellipse (4.5cm and 2.5cm);
  \filldraw[white] (\trans-2,2.5) -- (\trans*2-2,2.5) -- (\trans*2-2,-2.5) -- (\trans-2,-2.5) -- (\trans-2,2.5) ;
  \draw[black] (\trans-2,2.5) -- (\trans*1.3-2,2.5) -- (\trans*1.3-2,-2.5) -- (\trans-2,-2.5) ;
  
  \shade[ball color = blue!40, opacity = 0.4] \centerC circle ({\radius});
  \draw \centerC circle ({\radius});
  \draw ($ (-{\radius},0) + (\centerCx,\centerCy) $) arc (180:360:{\radius} and 0.6);
  \draw[dashed] ($ ({\radius},0) + (\centerCx,\centerCy) $) arc (0:180:{\radius} and 0.6);
  \fill[fill=blue] \centerC circle (1pt);
  \draw[dashed] \centerC -- node[above]{$r$}  ($ ({\radius},0) + (\centerCx,\centerCy) $);
\foreach \t in {-100,-80,...,80}% generatrices
{
		\draw[blue,->] ({1.5*cos(\t)-4.5+\trans},{1.5*sin(\t)})--({1.5*cos(\t)-4.25+\trans},{1.5*sin(\t)});
}
\foreach \t in {-80,-60,...,80}% generatrices
{
	\draw[blue,->] (-{1.5*cos(\t)-.25+\trans},{1.5*sin(\t)})--(-{1.5*cos(\t)+\trans},{1.5*sin(\t)});
}
\foreach \t in {-10,-8,...,10}% generatrices
{
		\draw[blue,->] (0-2.5+\trans,{\t/5})--(0.5-2.5+\trans,{\t/5});
}

\def\centerDx{5+\trans/2}
\def\centerDy{0}
\def\centerD{(\centerDx,\centerDy)}
  \shade[ball color = blue!40, opacity = 0.4] \centerD circle ({\radius});
  \draw \centerD circle ({\radius});
  \draw ($ (-{\radius},0) + (\centerDx,\centerDy) $) arc (180:360:{\radius} and 0.6);
  \draw[dashed] ($ ({\radius},0) + (\centerDx,\centerDy) $) arc (0:180:{\radius} and 0.6);
  \fill[fill=blue] \centerD circle (1pt);
  \draw[dashed] \centerD -- node[above]{$r$}  ($ ({-\radius},0) + (\centerDx,\centerDy) $);
  \filldraw[white]
 (\centerDx-0.3+\radius,1) -- (\centerDx+0.1 +\radius,1) -- (\centerDx+0.1 +\radius,-1) -- (\centerDx-0.3+\radius,-1) -- (\centerDx-0.3+\radius,1) ;
   \filldraw[blue]
 (\centerDx-0.4+\radius,0.875) -- (\centerDx-0.3 +\radius,0.875) -- (\centerDx-0.3+\radius,-0.875) -- (\centerDx-0.4+\radius,-0.875) -- (\centerDx-0.4+\radius,0.875) ;
 
 % \filldraw[color = blue!40, opacity = 0.4]
 %(\centerDx-0.5+\radius,1) -- (\centerDx+0 +\radius,1) -- (\centerDx+0 +\radius,-1) -- (\centerDx-0.5+\radius,-1) -- (\centerDx-0.5+\radius,1) ;
%\shade[ball color = blue!40, opacity = 0.4]
% (\centerDx-0.5+\radius,1) -- (\centerDx+0 +\radius,1) -- (\centerDx+0 +\radius,-1) -- (\centerDx-0.5+\radius,-1) -- (\centerDx-0.5+\radius,1) ;

\draw [<-,thick]  (0,3) to [out=30,in=150] (-1+\trans,3) ;%(5,2) ;
\draw [->,thick]  (0,-3) to [out=-30,in=-150] (-1+\trans,-3);%(5,-2) ;
\fill (4,3.5) node {$y=Q(x)$};
\fill (4,-3.5) node {$x=Q^{-1}(y)$};
\fill (\trans*1.3-2,-2.75) node {$bd(\mathcal{X})$};

\end{tikzpicture}
\caption{From payoffs to strategies and back, the non-steep case:
deformation of neighborhoods under the Euclidean projection map $\mirror = \Eucl$.}
\end{figure}

% Non-steep balls figure ends here
%----------------------------------------------------------------------

\begin{tcolorbox}[enhanced,width=5in, drop fuzzy shadow southwest, 
                    boxrule=0.4pt,sharp corners,colframe=yellow!80!black,colback=yellow!10]

The following lemma shows that, for generic games, if the underlying regularizer is non-steep, all trajectories starting near an asymptotically stable point $\eq$ attain the face of $\eq$ in some uniform, finite time.
% then for any asymptotically stable point $\eq$ there is a time $T_0$ for which all points starting close enough to  $\eq$, after time $T_0$ have the same support as $\eq$. 
The intuition for this is that, generically, for any player $\play$, the coordinates of $\dpoint_\play$ that correspond to the support of $\eq_\play$ increase with a ``speed'' that is uniformly higher than those strategies not supported in $\eq$.
The regularizer of player $\play$ could possibly act in favor of the coordinates that do not belong to the support, but in a bounded way, since it is non-steep.
Thus, there is a time after which the coordinates of $\dpoint_\play$ corresponding to the support are bigger enough than the other coordinates, so that the mirror map $\mirror_\play$ keeps returning a point with support equal to the support of $\eq_\play$.
%
%Again, we focus on the perspective of an arbitrary player, say $\play$, and we omit $\play$ for ease of notation.
\end{tcolorbox}

\begin{lemma}\label{lem:finite_converg_Nonsteep}
Let $\eq$ be an asymptotically stable equilirium of  a generic finite game $\fingame$, with the regularizers used, being non-steep. For any neighborhood $U$ of $\eq$, there exists a   %compact [***] 
neighborhood $U_0$ of $\eq$ and a finite time $T_0$ such that if $\point(t)=\mirror(\dpoint(t))$ is an orbit of \ac{FTRL} starting at $\point(0)\in U_0$, then $\supp(\point(t))=\supp(\eq)$ for all $t\geq T_0$.
\end{lemma}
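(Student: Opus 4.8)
The plan is to convert a uniform \emph{linear growth} of the relevant score differences into a \emph{finite-time} stabilization of the support, using the \ac{KKT} description of the choice map from \cref{lem:KKT} as the bridge between the score vector $\dpoint$ and the support of $\point=\mirror(\dpoint)$. The two structural ingredients are genericity (which forces a strict payoff gap off the support of $\eq$) and non-steepness (which keeps $\nabla\hreg$ bounded); the interplay of the two is precisely what makes the required time $T_0$ uniform.

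First I would record the strict gap. Since $\eq$ is quasi-strict, for each player $\play$ the support $\supp(\eq_\play)$ is exactly the set of best responses to $\eq_{-\play}$, so there is $\gamma>0$ with $\payv_{\play\pure}(\eq)-\payv_{\play\purealt}(\eq)\geq\gamma$ for all $\pure\in\supp(\eq_\play)$ and $\purealt\in\pures_\play\setminus\supp(\eq_\play)$. By continuity of $\payv$ I would shrink the target neighborhood $U$ so that this gap stays at least $\gamma/2$ on $U$ and, simultaneously, every $\point\in U$ has $\point_{\play\pure}>0$ for $\pure\in\supp(\eq_\play)$ (possible because $\eq_{\play\pure}>0$). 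By stability of $\eq$ there is then a neighborhood $U_0\subseteq U$ such that any orbit with $\point(0)\in U_0$ stays in $U$ for all $\run\geq0$; the second property already yields $\supp(\eq)\subseteq\supp(\point(\run))$ for every $\run\geq0$, so only the reverse inclusion remains.

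Next I would control the score differences $\dpoint_{\play\pure}(\run)-\dpoint_{\play\purealt}(\run)$ with $\pure\in\supp(\eq_\play)$ and $\purealt\notin\supp(\eq_\play)$. Since $\dotdorbit[\play\purealt]{\run}=\payv_{\play\purealt}(\point(\run))$ and the orbit never leaves $U$, each such difference grows at rate at least $\gamma/2$. The delicate point is a \emph{uniform} lower bound at $\run=0$: applying \cref{lem:KKT} at the initial time and writing $g_{\play\pure}=\pd\hreg/\pd\point_{\play\pure}$, the multiplier $\mu_\play$ cancels upon subtracting, while the slack variables $\nu_{\play\purealt}\geq0$ only help; hence non-steepness ($\abs{g_{\play\pure}}\leq M$ for a uniform $M$) gives $\dpoint_{\play\pure}(0)-\dpoint_{\play\purealt}(0)\geq -2M$ \emph{independently of which preimage} $\dpoint(0)$ of $\point(0)$ is used. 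Consequently $\dpoint_{\play\pure}(\run)-\dpoint_{\play\purealt}(\run)\geq -2M+(\gamma/2)\run$.

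Finally I would feed this back into \cref{lem:KKT}: if some $\purealt\notin\supp(\eq_\play)$ still belonged to $\supp(\point(\run))$, then both $\pure$ and $\purealt$ would have vanishing slack, forcing $\dpoint_{\play\pure}(\run)-\dpoint_{\play\purealt}(\run)=g_{\play\pure}-g_{\play\purealt}\leq 2M$; taking $T_0=8M/\gamma$ makes the growth estimate contradict this for every $\run\geq T_0$. Thus $\point_{\play\purealt}(\run)=0$ for all $\purealt\notin\supp(\eq_\play)$ and $\run\geq T_0$, which together with the inclusion from the first step yields $\supp(\point(\run))=\supp(\eq)$ for all $\run\geq T_0$, with $T_0$ and $U_0$ independent of the orbit. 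I expect the main obstacle to be securing precisely this uniformity: because $\mirror$ is not injective, a single $\point(0)\in U_0$ lifts to many score vectors $\dpoint(0)$, so the estimate must avoid referring to any particular lift — and this is exactly where non-steepness is indispensable, since a steep regularizer would allow $\abs{g_{\play\pure}}\to\infty$ near $\bd(\strats)$ and break the bound.
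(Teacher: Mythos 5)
Your proposal is correct and follows essentially the same route as the paper's proof: genericity/quasi-strictness gives a uniform payoff gap on a stable neighborhood, the \ac{KKT} conditions of \cref{lem:KKT} plus non-steepness bound the initial score differences uniformly over all lifts of $\point(0)$, and the linear growth of the score gap then forces the off-support slack variables to become positive (equivalently, contradicts the bounded gradient difference) after a uniform time $T_0$ of order $G/c$. The only cosmetic difference is that the paper tracks the slack $\nu_{\purealt}(t)$ explicitly and shows it exceeds zero, whereas you argue by contradiction from the bounded gradient difference — these are the same estimate.
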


\begin{proof}
By the genericity assumption, all Nash equilibria are quasi-strict. Clearly we have that for any player  $\play$, $u_{\play\pure}(\eq)>u_{\play\purealt}(\eq)$ for all $\pure \in \supp(\eq_\play)\equiv \pures^*_\play$ and $\purealt_\play \notin \supp(\eq_\play)$. Thus, by continuity there exists some neighborhood $U$ of $\eq$ and a $c>0$ so that for any $\point \in U$ and  any player $\play$, $u_{\play\pure}(\point)> u_{\play\purealt}(\point)+c$ for all $\pure \in \pures^*_\play, \purealt\in\pures_\play\setminus\pures^*_\play$. Additionally, we can choose $U$ small enough  so that for all $\point\in U$, $\supp(\eq)\subseteq\supp(\point)$. Since $\eq$ is asymptotically stable there exists a   neighborhood $U_0$ of $\eq$ so that $\point(t)\in U$ for all $t$ whenever $\point(0)\in U_0$,  and $\lim_{t\to\infty}\point(t)=\eq$.

Consider some $\point(0)\in U_0$ and let $\pure \in \pures^*$, $\purealt\in \pures\setminus\pures^*$. For ease of notation in the following we focus on the perspective of an arbitrary player, say $\play$, and  omit $\play$ from the subscripts. 

By Lemma \ref{lem:KKT}, for any $t\geq0$ there exist a $\mu(t)$ and non negative $v_\pure(t)$'s so that 
\begin{align*}
 \dpoint_\pure(t)&=g_\pure(\point(t))+\mu(t)&\forall \pure \in \pures^*\\  
 \dpoint_\pure(t)&=g_\pure(\point(t))+\mu(t)-v_\purealt(t)&\forall \purealt \in \pures\setminus\pures^* 
\end{align*}
since, by complementary slackness, $v_\pure(t)=0$, whenever $x_\pure(t)>0$.
Subtracting we get 
\begin{equation}\label{eq:Gbound}
\dpoint_\pure(t)-\dpoint_\purealt(t)=g_\pure(\point(t))-g_\purealt(\point(t))+v_\purealt(t)\leq v_\purealt(t)+G
\end{equation}
 with the inequality following, for some constant $G$, by $\hreg$ being non-steep. %(recall Lemma \ref{lem:G_Property_non-steep}).

On the other hand by the definition of the dynamics, using  \Cref{eq:Gbound} and that %$\pure \in \pures^*$, $\purealt\in \pures\setminus\pures^*$
$u_\pure(\point(t))> u_\purealt(\point(t))+c$, for all $t$ (since $x(0)\in U_0$), we get
\begin{align*}
    \dpoint_\pure(t)-\dpoint_\purealt(t)&=\dpoint_\pure(0)-\dpoint_\purealt(0))+\int_0^t[u_\pure(\point(s))-u_\purealt(\point(s))]ds\\
    &> g_\pure(\point(0))-g_\purealt(\point(0))+v_\purealt(0)+ct\\
    &\geq ct+ v_\purealt(0)-G
\end{align*}
with the last inequality following again by $\hreg$ being non-steep.
Combining the latter with \Cref{eq:Gbound}, and since $v_\purealt(0)\geq0$, we get 
$$v_\purealt(t)+G> ct+ v_\purealt(0)-G\Rightarrow v_\purealt(t)> ct-2G$$
 which implies that for $t\geq\frac{2G}{c}$ it is $v_\purealt(t)>0$. This in turn, by complementary slackness, yields $\point_\purealt(t)=0$ for all $t\geq\frac{2G}{c}$, implying $\supp(\point(t))\subseteq\supp(\eq)$. By the choice of $U$ and since $\forall t:\point(t)\in U$ we have $\supp(\point(t))\supseteq\supp(\eq)$  and thus setting $T_0=\frac{2G}{c}$ proves the claim, since the above holds for any player $\play$. 
\end{proof}

\begin{tcolorbox}[enhanced,width=5in, drop fuzzy shadow southwest, 
                    boxrule=0.4pt,sharp corners,colframe=yellow!80!black,colback=yellow!10]

The main result of this section is the following theorem that covers the non-steep case, stating that for generic games  at an asymptotically stable point under non-steep regularizers, every player plays a pure strategy.
The proof combines  results presented above. It reaches a contradiction by showing that points in a small enough neighborhood of the asymptotically stable point $\eq$, instead of converging to it as they ought to, they follow recurrent trajectories.  In a first step it finds points  that after a finite time $T_0$ reach and stay forever at the simplex formed by the support of $\eq$  (using \Cref{lem:finite_converg_Nonsteep}), which moreover have non-zero volume in that simplex. But then, these points follow \ac{FTRL} trajectories in the restricted simplex (\Cref{prop:FTRL-strat}) and $\eq$ belongs in the interior of this simplex.
However, we already know this cannot be the case in the interior (\Cref{thm:No_Aspoint_Interior}) and we follow a similar reasoning. 
\end{tcolorbox}

\begin{theorem}
If $\eq\in\points$ is an asymptotically stable point under non-steep regularizers of a generic game $\fingame$, then it consists of only %is a Nash equilibrium in 
pure strategies.% [STRICT NASH VS GENERIC]. 
\end{theorem}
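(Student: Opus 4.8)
The plan is to argue by contradiction, reducing the statement to the interior case already settled in \cref{thm:No_Aspoint_Interior}. Suppose $\eq$ is asymptotically stable under non-steep \ac{FTRL} but is \emph{not} pure, so that $\abs{\supp(\eq_{\play})} \geq 2$ for at least one player $\play$. Let $\eqs \subseteq \strats$ be the face spanned by $\supp(\eq)$, i.e., the subsimplex $\prod_{\play}\simplex(\supp(\eq_{\play}))$, and note that $\eq$ lies in its relative interior $\relint(\eqs)$ while $\dim\eqs \geq 1$. In particular, $\eq$ is a \emph{fully mixed} equilibrium of the restricted game obtained by deleting from each $\pures_{\play}$ every pure strategy outside $\supp(\eq_{\play})$.

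First I would invoke \cref{lem:finite_converg_Nonsteep} to produce neighborhoods $U_{0} \subseteq U$ of $\eq$ and a finite time $T_{0}$ such that every orbit $\point(t) = \mirror(\dpoint(t))$ with $\point(0) \in U_{0}$ satisfies $\supp(\point(t)) = \supp(\eq)$ for all $t \geq T_{0}$; by Lyapunov stability these orbits also remain in $U$ for all $t$, and by attraction they converge to $\eq$. Hence, for $t \geq T_{0}$ the orbit lives in $\relint(\eqs)$, and by \cref{prop:FTRL-strat} it coincides there with an \ac{FTRL} trajectory of the restricted game on $\eqs$. Since $\eq$ is interior to $\eqs$, \cref{lem:X_volume_preserving} applied to this restricted game furnishes a measure $\mu_{\point}$ on $\relint(\eqs)$ that is preserved by these dynamics.

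With this in place I would replay the Poincar\'e-recurrence argument of \cref{thm:No_Asset_Interior} \emph{inside} the face $\eqs$. Choose an open set $E \subseteq \relint(\eqs)$ that is bounded away from $\eq$, has compact closure contained in $U \cap \relint(\eqs)$, and arises as part of the time-$T_{0}$ image of $U_{0}$, so that $\mu_{\point}(E) > 0$ and every point of $E$ both flows forward by the restricted (volume-preserving) dynamics and converges to $\eq$. The forward orbit $\bigcup_{t \geq 0}\flowmap(E,t)$ then stays in a bounded region of $\relint(\eqs)$ with constant positive $\mu_{\point}$-measure, so by \cref{thm:poincare_reccurrence} almost every point of $E$ returns to $E$ infinitely often. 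Since $E$ is bounded away from $\eq$, such points cannot converge to $\eq$, contradicting attraction. Therefore no player mixes at $\eq$, and $\eq$ is pure.

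The main obstacle is the positive-volume claim used above: the time-$T_{0}$ map collapses the full-dimensional neighborhood $U_{0}$ onto the lower-dimensional face $\eqs$, so it is not immediate that its image carries positive relative volume. The resolution is that, once $t \geq T_{0}$, the score coordinates $\dpoint_{\play\purealt}$ with $\purealt \notin \supp(\eq_{\play})$ are uniformly dominated --- exactly the estimate driving \cref{lem:finite_converg_Nonsteep} --- and hence do not influence $\mirror(\dpoint)$. Consequently the dynamics factor into restricted \ac{FTRL} on the ``active'' score coordinates indexed by $\supp(\eq)$, together with an inert drift in the remaining coordinates. Passing to the reduced $\zpoint$-space of the restricted game (\cref{lem:X_volume_preserving}), where the active dynamics are incompressible, the time-$T_{0}$ image of $U_{0}$ has nonempty interior and hence positive measure --- which is precisely what the recurrence argument requires.
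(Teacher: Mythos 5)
Your proposal is correct and follows essentially the same route as the paper's proof: reduce to the face spanned by $\supp(\eq)$ via \cref{lem:finite_converg_Nonsteep} and \cref{prop:FTRL-strat}, then derive a contradiction with Poincar\'e recurrence using the invariant measure of \cref{lem:X_volume_preserving} on the restricted game, exactly as in \cref{thm:No_Asset_Interior}. Your closing discussion of why the time-$T_{0}$ image of $U_{0}$ has positive relative measure in the face is in fact more explicit than the paper's one-line parenthetical justification of the same point.
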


\begin{proof}
Let $\eq\in\points$ be asymptotically stable, $\pures^*=\supp(\eq)$, with  $|\pures_\play^*|=|\supp(\point_\play^*)|\geq 2$ for some player $\play$,
and  $\points^*$ be its respective simplex. Since $\eq$ is attracting, there exists some (bounded) neighborhood $U$ of $\eq$ for which if $\point(0)\in U$, then $\lim_{t\to \infty}\point(t)=\eq$.
 By \Cref{lem:finite_converg_Nonsteep}, there exists a   %compact [***] 
 neighborhood  $U_0$ of $\eq$ and a finite time $T_0$ such that if $\point(t)=\mirror(\dpoint(t))$ is an orbit of \ac{FTRL} starting at $\point(0)\in U$, then $\supp(\point(t))=\pures^*$ for all $t\geq T_0$. %Let $K^*=U\intersect \pures^*$.
 
By \Cref{prop:FTRL-strat}, for $t\geq T_0$ all trajectories satisfy  \Cref{eq:FTRL-strat} and these trajectories coincide with the trajectories of a generic game $\fingame'$ played on $\pures^*$, with the restricted simplex being $\strats^*$. %(for that, a proof similar to the one of \Cref{lem:finite_converg_Nonsteep} can be employed).
At the same time, similar to the proof of  \Cref{thm:No_Asset_Interior},  $\flowmap(U_0,T_0)$ is a bounded (as a subset of  $U\intersect\strats^*$), positive measure for $\points^*$ (as the evolution of an  open set after a finite time (recall $|\pures_\play^*|\geq2$ for some $\play$)), and invariant (\Cref{lem:X_volume_preserving}) set of $\strats^*$ and, therefore, it is also Poincaré recurrent, contradicting that for $x(0)\in\flowmap(U_0,T_0)\subseteq U $, $\lim_{t\to \infty}\point(t)=\eq$, as implied by the asymptotic stability of $\eq$. 
Thus, if $\eq\in\points$ is asymptotically stable then  $|\supp(\point_\play^*)|=1$ for all $\play$.
 %
 %Moreover, for $\fingame'$,  $\eq$ is asymptotically stable, since for any $\point(0)\in U\ntersect\strats^*$, $\lim_{t\to \infty}\point(t)=\eq$ and any  $\point(0)\in\flowmap(K,T_0)\subseteq  U\ntersect\strats^*$ [.....], $\point(t)\in U\ntersect\strats^*$.   Yet, $\eq$ lies in the interior of $\strats^*$, which  contradicts Theorem \ref{thm:No_Aspoint_Interior} applied  for $\fingame'$.
\end{proof}

%\begin{tcolorbox}[enhanced,width=5in, drop fuzzy shadow southwest, 
%                    boxrule=0.4pt,sharp corners,colframe=yellow!80!black,colback=yellow!10]
%
%%We prove \cref{thm:unstable-mixed} for steep and non-steep regularizers.
%As we stated in the beginning of this appendix, the steep case of \cref{thm:unstable-mixed} comes as a corollary (\cref{cor:steepUnst}) of a more general result on asymptotically stable sets that we prove in the next section (\cref{thm:unstable-set}).
%\end{tcolorbox}

As we stated in the beginning of this appendix, the steep case of \cref{thm:unstable-mixed} comes as a corollary (\cref{cor:steepUnst}) of a more general result on asymptotically stable sets that we prove in the next section (\cref{thm:unstable-set}).

%----------------------------------------------------------------------
%%% APP: STEEP
%----------------------------------------------------------------------
\section{Proof of \cref{thm:unstable-set} and \cref{thm:unstable-mixed}: the steep case}
\label{app:steep}
%----------------------------------------------------------------------
%%% APP: STEEP
%----------------------------------------------------------------------
% !TEX root = ./Main.tex

\begin{tcolorbox}[enhanced,width=5in, drop fuzzy shadow southwest, 
                    boxrule=0.4pt,sharp corners,colframe=yellow!80!black,colback=yellow!10]

The following theorem shows that any asymptotically stable set $A$ cannot be contained in the interior of any non-singleton face $\points'$. This comes as a consequence of \cref{thm:No_Asset_Interior}. When the regularizers are steep, any point starting in $\intstratsPr$ stays in $\intstratsPr$ over time and $A$ being asymptotically stable implies that $A\intersect \points'$ is an asymptotically stable set under the \ac{FTRL} dynamics of the  restricted game played on $\points'$. But for the restricted game \cref{thm:No_Asset_Interior} applies excluding the possibility of $A\intersect \points'$ being an asymptotically stable set inside $\intstratsPr$. 
\end{tcolorbox}

\begin{theorem}\label{thm:steep_set_not_Interior}
Let $A\subseteq \points$ be an asymptotically stable set intersecting a non-singleton face $\points'$ of $\points$. Then, $A\intersect\points'$ cannot be contained in the relative interior of $\points'$.
\end{theorem}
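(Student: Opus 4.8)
The plan is to argue by contradiction and reduce the claim to the interior instability result \cref{thm:No_Asset_Interior}, applied to the game restricted to the face $\points'$. Suppose then that $A' \defeq A \intersect \points'$ is contained in $\relint(\points')$; since $A$ is closed and $\points'$ is compact, $A'$ is a nonempty compact subset of $\relint(\points')$. The first ingredient I would invoke is \cref{prop:FTRL-strat}: because $\hreg$ is steep, the face $\points'$ — and hence its relative interior, on which the support stays constant — is forward-invariant, and the induced flow on $\points'$ coincides with the mixed-strategy dynamics \eqref{eq:FTRL-strat} of the restricted game $\fingame'$ obtained by deleting from each player all pure strategies absent from $\points'$. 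Thus $\points'$ is precisely the full strategy space of $\fingame'$, $\relint(\points')$ is its interior, and the flow on it is a bona fide instance of \eqref{eq:FTRL}.

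The core step will be to promote the asymptotic stability of $A$ in $\points$ to asymptotic stability of $A'$ in $\points'$ under the restricted flow. For Lyapunov stability, given a relatively open neighborhood $N$ of $A'$ in $\points'$, I would write $N = W \intersect \points'$ with $W$ open in $\points$ and set $U = W \cup (\points \setminus \points')$; this $U$ is open, contains $A$, and satisfies $U \intersect \points' = N$. Stability of $A$ then yields a neighborhood $U_{0}$ with $\flowmap(U_{0},t) \subseteq U$ for all $t \geq 0$, and forward invariance of $\points'$ forces every orbit issuing from $U_{0} \intersect \points'$ to remain in $U \intersect \points' = N$. For attraction, I would intersect an attracting neighborhood of $A$ with $\points'$; any orbit from there stays in the compact set $\points'$ and converges to $A$, so its $\omega$-limit set lies in $A \intersect \points' = A'$, and precompactness of the orbit upgrades this to $\dist(\orbit{\run},A') \to 0$. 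Hence $A'$ is asymptotically stable for the restricted flow.

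To conclude, $A'$ is a closed subset of $\relint(\points')$ that is asymptotically stable under the \eqref{eq:FTRL} dynamics of $\fingame'$ — which is exactly what \cref{thm:No_Asset_Interior} forbids (the face being non-singleton guarantees that $\relint(\points')$ is positive-dimensional, so the volume-preservation/recurrence machinery applies). This contradiction shows that $A'$ cannot lie in $\relint(\points')$. I expect the main obstacle to be the attraction step: mere convergence of an orbit to the set $A$ does not by itself localize the limit inside $A'$, so one must combine forward invariance (to trap the orbit in the closed face $\points'$) with the $\omega$-limit and precompactness argument to deduce convergence to $A \intersect \points'$ rather than only to $A$. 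The steepness hypothesis is essential throughout, since it is what supplies the forward invariance of $\points'$ via \cref{prop:FTRL-strat}; in the non-steep regime this reduction collapses, which is why that case must be handled separately.
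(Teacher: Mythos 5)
Your proposal is correct and follows essentially the same route as the paper: contradiction, forward invariance of the face via \cref{prop:FTRL-strat} for steep regularizers, promotion of the asymptotic stability of $A$ to that of $A'=A\intersect\points'$ under the restricted flow, and an appeal to \cref{thm:No_Asset_Interior} for the restricted game on $\points'$. Your explicit construction $U = W \cup (\points\setminus\points')$ and the $\omega$-limit/precompactness argument for the attraction step merely make precise two points the paper's proof passes over more quickly.
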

\begin{proof}
To reach a contradiction let $A$  intersect a non-singleton face $\points'$ of $\points$ and $A' = A \intersect \points'$ be a subset of the relative interior of $\points'$. We will show that if this is the case then $A'$ is an asymptotically stable set under the dynamics of \ac{FTRL} restricted to $\points'$,  that lies in the relative  interior of  $\points'$. This  contradicts Theorem \ref{thm:No_Asset_Interior}.

To reach the contradiction, we go on to prove that $A'$ is an asymptotically stable set under \ac{FTRL} dynamics in $\points'$. We will crucially use that with steep regularizers  for any $\point(0)\in\intstratsPr$, $\point(t)\in \intstratsPr$, for all $t\geq0$, i.e.,  $\points'$ is forward invariant under \ac{FTRL}  (\cref{prop:FTRL-strat}).

To show Lyapunov stability of $A'$ in $\points'$ pick any neighborhood $U'$ of $A'$ in $\points'$. It  can be written as $U'=U\intersect \points'$ for some  neighborhood $U$ of $A$ in $\points$. Since $A$ is Lyapunov stable, there exists  a neighborhood $U_0$ of $A$ in $\points$ such that for any $\point(0)\in U_0$, it is  $\point(t)\in U$ for all $t\geq 0$. Let $U'_0=U_0\intersect \points'$. Using that $\points'$ is forward invariant, the latter implies that for any $\point(0)\in U'_0=U_0\intersect \points'$, it is  $\point(t)\in U\intersect\points'=U'$ for all $t\geq 0$, as needed.

We use similar ideas to show that $A'$ is attracting in $\points'$. Since $A$ is attracting in $\points$ there exist a neighborhood $U$ of  $A$ in $\points$ such that for any $\point(0)\in U$, $\point(t)\to A$. Let $U'=U\intersect \points'$. The latter combined with the forward invariance of $\points'$ implies that  for any $\point(0)\in U\intersect \points'=U'$, $\point(t)\to A\intersect \points'= A'$, as needed. 
 %
%To begin with, since $A$ is asymptotically stable, for any neighborhood $U$ of $A$ in $\points$, there exists a neighborhood $U_0$ of $A$ in $\points$ such that for any $\point(0)\in U_0$, it is  $\point(t)\in U$ for all $t\geq 0$,  and $\point(t)\to A$. Now, let $U' = U \intersect \points'$. Since
%it follows that for any $\point(0)\in U_0' = U_0\intersect \points'$, it is $x(t)\in U'$ for any $t\geq0$  (since $\point(t)\in U$ by stability and  $\point(t)\in\points'$ by forward invariance). Using the latter, for $\point(0)\in U_0'$,  $\point(t)\to A$ implies $\point(t)\to A'$. Finally, since any neighborhood $K'$ of $A'$ can be written as $K'=K\intersect \points'$ for some neighborhood $K$ of $A$ in $\points$, we conclude that $A'$ is asymptotically stable under the dynamics of \ac{FTRL} restricted to $\points'$, as needed for the contradiction.
\end{proof}

\begin{tcolorbox}[enhanced,width=5in, drop fuzzy shadow southwest, 
                    boxrule=0.4pt,sharp corners,colframe=yellow!80!black,colback=yellow!10]

As a corollary of the above theorem we may get \cref{thm:unstable-set}, restated below. Whenever an asymptotically stable set intersects a non-singleton face it must intersect its respective boundary, and thus a face of smaller dimension. Consequently, ``in the long run'', it must intersect a singleton face. Put differently, it should contain a point consisting of only pure strategies.
\end{tcolorbox}

\stableSet*

\begin{proof}
Let $A$ be asymptotically stable and $\points_{min}$ be a face of minimal dimension intersected by $A$ which is  not a singleton. By Theorem \ref{thm:steep_set_not_Interior}, $A$ cannot be contained in the relative interior of $\points_{min}$, so it must intersect the boundary of $\points_{min}$. However, this means that $A$ intersects a face of dimension strictly smaller than that of $\points_{min}$, a contradiction. Thus, $\points_{min}$ is a singleton and  $A$ must contain a vertex of $\points$.
%\TL{Preliminaries ``vertex'' or rephrase}
\end{proof}

\begin{corollary}\label{cor:steepUnst}
If $\point$ is an asymptotically stable point,  then it consists of  only pure strategies.
\end{corollary}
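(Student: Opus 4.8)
The plan is to obtain this as an immediate specialization of \cref{thm:unstable-set} to singleton sets. First I would note that for any strategy profile $\point\in\points$ the singleton $\{\point\}$ is a closed subset of $\points$, and that the pointwise notion of asymptotic stability coincides with the setwise one when the latter is applied to $\set=\{\point\}$: ``remaining close to $\{\point\}$'' simply means remaining in a neighborhood of $\point$, while ``converging to $\{\point\}$'' means $\orbit{\run}\to\point$, and these are exactly the defining conditions for $\point$ to be stable and attracting. Hence an asymptotically stable point $\point$ yields an asymptotically stable set $\{\point\}$ under steep \eqref{eq:FTRL}.

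Next I would invoke \cref{thm:unstable-set}: every asymptotically stable set of steep \eqref{eq:FTRL} contains a pure strategy, i.e.\ a vertex of $\points$. Since $\{\point\}$ is a singleton, the vertex that it is guaranteed to contain can only be $\point$ itself. Therefore $\point$ is a vertex of $\points$, which is precisely the statement that $\point$ consists solely of pure strategies.

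There is no substantive obstacle here: the hard work has already been discharged by the ``infinite descent'' proof of \cref{thm:unstable-set}, which peels off non-singleton faces one dimension at a time (via \cref{thm:steep_set_not_Interior}) until only vertices can survive as minimal-support elements of a stable set. The only step deserving a moment's care is verifying that the setwise stability definition, instantiated at a single point, reproduces the pointwise definitions verbatim, so that no gap is introduced in passing between the two notions.
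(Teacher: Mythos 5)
Your proposal is correct and is exactly the paper's intended derivation: the corollary is stated without proof precisely because it follows by applying \cref{thm:unstable-set} to the singleton $\{\point\}$, whose setwise asymptotic stability coincides with the pointwise notion, forcing $\point$ itself to be the pure profile (vertex) that the theorem guarantees. Your extra care in checking that the two stability definitions agree on singletons is the only step worth spelling out, and you have done so correctly.
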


%**********************************************************************
%***    BIBLIOGRAPHY
%**********************************************************************
\bibliographystyle{plainfull}
\bibliography{IEEEabrv,bibtex/Bibliography-PM,bibtex/refer}

\end{document}